  \theoremstyle{plain}
    \newtheorem{thm}{Theorem}[section]
    \newtheorem{prop}[thm]{Proposition}
   \newtheorem{lemma}[thm]{Lemma}
    \newtheorem{subsec}[thm]{}
\theoremstyle{definition}
    \newtheorem{defn}[thm]{Definition}
        \newtheorem{remark}[thm]{Remark}
    \newtheorem{exam}[thm]{Example}
\theoremstyle{remark}
\title{}
\author{}
\date{}
\begin{document}

\title[]{From $n$-Leibniz algebras and linear $n$-racks to the solutions of the (higher analogue of) Yang-Baxter equation}

\author{Apurba Das}
\address{Department of Mathematics,
Indian Institute of Technology, Kharagpur 721302, West Bengal, India.}
\email{apurbadas348@gmail.com, apurbadas348@maths.iitkgp.ac.in}

\author{Suman Majhi}
\address{Department of Mathematics, Indian Institute of Technology, Kharagpur 721302, West Bengal, India.}
\email{majhisuman693@gmail.com, suman693@kgpian.iitkgp.ac.in}

\begin{abstract}
In this paper, we first demonstrate that a finite-dimensional $n$-Leibniz algebra naturally gives rise to an $n$-rack structure on the underlying vector space. Given any $n$-Leibniz algebra, we also construct two Yang-Baxter operators on suitable vector spaces and connect them by a homomorphism. Next, we introduce linear $n$-racks as the coalgebraic version of $n$-racks and show that a cocommutative linear $n$-rack yields a linear rack structure and hence a Yang-Baxter operator. An $n$-Leibniz algebra canonically gives rise to a cocommutative linear $n$-rack and thus produces a Yang-Baxter operator.

In the last part, following the well-known close connections among Leibniz algebras, (linear) racks and Yang-Baxter operators, we consider a higher-ary generalization of Yang-Baxter operators (called $n$-Yang-Baxter operators). In particular, we show that $n$-Leibniz algebras and cocommutative linear $n$-racks naturally provide $n$-Yang-Baxter operators. Finally, we consider a set-theoretical variant of $n$-Yang-Baxter operators and propose some problems. 
\end{abstract}

\maketitle



    {\sf 2020 MSC classifications:} 17A32, 17B38, 16T25, 20N15.

{\sf Keywords:} $n$-Leibniz algebras, $n$-racks, linear $n$-racks, Yang-Baxter equation, $n$-Yang-Baxter equation.



\medskip





\thispagestyle{empty}

\tableofcontents


\medskip

\section{Introduction}\label{sec1}
\subsection{Yang-Baxter equation}
The Yang-Baxter equation first appeared in a work of Yang on statistical mechanics \cite{yang} and reappeared in Baxter's work on the eight-vertex model \cite{baxter}. It is one of the fundamental equations in mathematical physics that has been used in numerous studies on integrable systems, quantum groups, knot theory, quantum field theory, braided categories and some other areas of mathematics and mathematical physics. Recall that a linear map $R: V \otimes V\rightarrow  V\otimes V$ is a Yang-Baxter operator on a vector space $V$ if it is invertible and satisfies the Yang-Baxter equation:
\begin{align}\label{ybe-first}
    (R \otimes \mathrm{Id}) (\mathrm{Id} \otimes R) (R \otimes \mathrm{Id}) = (\mathrm{Id} \otimes R) (R \otimes \mathrm{Id}) (\mathrm{Id} \otimes R).
\end{align}
If $X$ is a basis for the vector space $V$, then a bijective map $r : X \times X \rightarrow X \times X$ satisfying
\begin{align}\label{set-ybe}
    (r \times \mathrm{Id}) (\mathrm{Id} \times r) (r \times \mathrm{Id}) =  (\mathrm{Id} \times r)  (r \times \mathrm{Id}) (\mathrm{Id} \times r)
\end{align}
induces a Yang-Baxter operator on $V$. In this case, one says that $r$ is a set-theoretical solution of the Yang-Baxter equation on $X$ (or simply a set-theoretical solution on $X$). In \cite{drinfeld}, Drinfeld posed the question of finding Yang-Baxter operators arising from set-theoretic solutions. Since then, the Yang-Baxter equation (\ref{ybe-first}) and its set-theoretical variant (\ref{set-ybe}) have been extensively studied in connection with several other algebraic structures. In particular, the works of Etingof-Schedler-Soloviev \cite{etingof} and Lu-Yau-Zhu \cite{lu} discussed algebraic and geometrical interpretations of non-degenerate involutive set-theoretical solutions, and introduced several structures associated with them.

\medskip

In \cite{rump0, rump}, Rump introduced cycle sets and braces as generalizations of radical rings to obtain non-degenerate involutive set-theoretical solutions. As the noncommutative analogue of braces, Guarnieri and Vendramin \cite{guar} considered skew braces that yield nondegenerate (not necessarily involutive) set-theoretical solutions. On the other hand, a rack is an algebraic structure given by a nonempty set equipped with a binary operation satisfying axioms analogous to the second and third Reidemeister moves in knot theory \cite{etingof, jack}. A group with the conjugation operation gives a first example of a rack. Knot theorists have deeply studied racks to construct invariants for knots and links. Any rack naturally provides a set-theoretical solution. Racks are intimately related to Leibniz algebras \cite{bloh,loday} (noncommutative analogue of Lie algebras) as the tangent space at a neutral element of a Lie rack carries a Leibniz algebra structure \cite{kinyon}. Kinyon \cite{kinyon} showed that one may also construct a rack from a finite-dimensional Leibniz algebra. It has been shown by Lebed \cite{lebed2} that a central Leibniz algebra yields a Yang-Baxter operator. Linear racks are the coalgebraic version of racks and are shown to be useful in constructing Yang-Baxter operators \cite{kram,carter,lebed,lebed2}. In \cite{lebed}, Lebed provides a functorial construction from the category of Leibniz algebras to the category of linear racks.

\subsection{$n$-ary structures and the Yang-Baxter equation} The notion of $n$-Leibniz algebras considered by Casas, Loday and Pirashvili \cite{casas1} as the $n$-ary generalization of Leibniz algebras. They first appeared in the work of Nambu \cite{nambu} while generalizing classical mechanics to $n$-ary set-up. It has been shown that the set of fundamental elements of an $n$-Leibniz algebra inherits a Leibniz algebra structure. Later, Biyogmam \cite{biyog} considered the notion of $n$-racks and showed that the tangent space at a neutral element of a Lie $n$-rack inherits an $n$-Leibniz algebra structure. Any $n$-rack induces a rack structure on the $n-1$ cartesian product of the underlying set. Recently, Xu and Sheng \cite{xu-sheng} showed that a finite-dimensional $3$-Leibniz algebra always gives rise to a $3$-rack structure on the underlying vector space. In our first result, we show that a finite-dimensional $n$-Leibniz algebra naturally gives rise to an $n$-rack structure in a functorial way that generalizes the results of Kinyon \cite{kinyon} for $n=2$ and of Xu-Sheng \cite{xu-sheng} for $n=3$. Next, we focus on central $n$-Leibniz algebras and show that they produce Yang-Baxter operators. Subsequently, we construct two Yang-Baxter operators from a given $n$-Leibniz algebra (need not be central) and connect them by a suitable homomorphism. Our results generalize the Yang-Baxter operators obtained in \cite{xu-sheng} and \cite{abramov}.

\medskip

Next, we introduce linear $n$-racks as the coalgebraic version of $n$-racks. Given a linear rack, one can construct a linear $ n$-rack on the same underlying coalgebra. On the other side, a cocommutative linear $n$-rack naturally induces a linear rack structure on the $n-1$ tensor power of the underlying coalgebra (which is the coalgebraic version of the construction of a rack from an $n$-rack). Combining this result with the construction of a Yang-Baxter operator from a linear rack, we obtain a Yang-Baxter operator from a given cocommutative linear $n$-rack. Next, we show that an $n$-Leibniz algebra $\mathcal{L}$ naturally gives rise to a cocommutative linear $n$-rack structure on the direct sum ${\bf k} \oplus \mathcal{L}$. Hence, applying our previous result, one obtains a Yang-Baxter operator from an $n$-Leibniz algebra. This generalizes the well-known Yang-Baxter operators arising from Leibniz algebras \cite{lebed} and $3$-Leibniz algebras \cite{xu-sheng} to the context of $n$-Leibniz algebras.



\subsection{A higher analogue to the Yang-Baxter equation} 
In the final part of this paper, we introduce a higher-analogue of the Yang-Baxter equation (which we call the $n$-Yang-Baxter equation) that is closely related to $n$-Leibniz algebras and (linear) $n$-racks, in the same way the Yang-Baxter equation is related to Leibniz algebras and (linear) racks. An invertible solution of the $n$-Yang-Baxter equation is called an $n$-Yang-Baxter operator. As mentioned above, any $n$-Leibniz algebras and cocommutative linear $n$-racks naturally give rise to $n$-Yang-Baxter operators. Next, we show that a Yang-Baxter operator on a vector space $V$ naturally induces an $n$-Yang-Baxter operator on the same vector space. On the other hand, an $n$-Yang-Baxter operator on a vector space $V$ gives rise to a Yang-Baxter operator on the tensor product $V^{\otimes (n-1)}$. Finally, we also consider a set-theoretical variant of the $n$-Yang-Baxter equation, whose bijective solutions are called set-theoretical $n$-solutions. We observe that any $n$-rack provides a set-theoretical $n$-solution. Finally, we end with some interesting questions.

\subsection{Organization of the paper} The paper is organized as follows. In Section \ref{sec2}, we mainly show that a finite-dimensional $n$-Leibniz algebra gives rise to an $n$-rack structure in a functorial way. In Section \ref{sec3}, we first construct a Yang-Baxter operator from a central $n$-Leibniz algebra. As a consequence, we obtain two Yang-Baxter operators from an arbitrary $n$-Leibniz algebra and connect them by a suitable homomorphism. In Section \ref{sec4}, we introduce the notion of a linear $n$-rack and show that a cocommutative linear $n$-rack induces a linear rack structure and thus provides a Yang-Baxter operator. Further, we show that an $n$-Leibniz algebra naturally gives rise to a linear $n$-rack structure. Finally, we consider a higher-analogue of the Yang-Baxter equation and its set-theoretical variant in Section \ref{sec5}. We find examples of $n$-Yang-Baxter operators arising from $n$-Leibniz algebras and cocommutative linear $n$-racks.

\medskip

All vector spaces, linear maps, multilinear maps and tensor products are over a field {\bf k} of characteristics $0$.


\section{From {\em n}-Leibniz algebras to {\em n}-racks}\label{sec2}
In this section, we first recall some necessary background on $n$-Leibniz algebras \cite{casas1} and $n$-racks \cite{biyog}. Among others, we show that an $n$-Leibniz algebra naturally gives rise to an $n$-rack structure on the underlying vector space.

\begin{defn}\label{defn-n-leib}
    (i) An {\bf $n$-Leibniz algebra} is a vector space $\mathcal{L}$ equipped with an $n$-linear operation $[-, \ldots, -] : \underbrace{\mathcal{L} \times \cdots \times \mathcal{L} }_{n \text{ copies}} \rightarrow \mathcal{L}$ that satisfies the {\em fundamental identity}:
    \begin{align}\label{fun-identity}
        [[ x_1, \ldots, x_n],  y_1, \ldots, y_{n-1}] = \sum_{i=1}^n ~ [x_1, \ldots, x_{i-1}, [x_i, y_1, \ldots, y_{n-1}], x_{i+1}, \ldots, x_n],
    \end{align}
    for all $x_1, \ldots, x_n, y_1, \ldots, y_{n-1} \in \mathcal{L}$. An $n$-Leibniz algebra as above is denoted by the pair $(\mathcal{L}, [-, \ldots, -])$.

    \medskip

    (ii) Let $(\mathcal{L}, [-, \ldots, -])$ and $(\mathcal{L}', [-, \ldots, -]')$ be two $n$-Leibniz algebras. A linear map $\varphi: \mathcal{L} \rightarrow \mathcal{L}'$ is said to be a {\bf homomorphism} between them if it is a bracket preserving map, i.e.,
    \begin{align*}
        \varphi ([x_1, \ldots, x_n]) = [\varphi (x_1), \ldots, \varphi (x_n )]', \text{ for all } x_1, \ldots, x_n \in \mathcal{L}. 
    \end{align*}
    \end{defn}

Note that the fundamental identity (\ref{fun-identity}) in the above definition is equivalent to saying that the linear maps $[-, y_1, \ldots, y_{n-1}]: \mathcal{L} \rightarrow \mathcal{L}$ by fixing the right $(n-1)$ coordinates are derivations for the bracket on $\mathcal{L}$. For this reason, the $n$-Leibniz algebras considered above are often called right $n$-Leibniz algebras. Inspired by this, one may define left $n$-Leibniz algebras. Then it is easy to see that $(\mathcal{L}, [-, \ldots, -])$ is a right $n$-Leibniz algebra if and only if $(\mathcal{L}, [-, \ldots, -]^\mathrm{op})$ is a left $n$-Leibniz algebra, where $[x_1, \ldots, x_n]^\mathrm{op} := [x_n, x_{n-1}, \ldots, x_1]$. In this paper, by an $n$-Leibniz algebra, we shall always mean a right $n$-Leibniz algebra as of Definition \ref{defn-n-leib}. However, all the results of the present paper can be generalized to left $n$-Leibniz algebras by corresponding modifications.

\begin{exam}\label{leib-to-n} \cite[Proposition 3.2]{casas1} Let $(\mathfrak{g}, \{ -, -\})$ be a Leibniz algebra. Then the vector space $\mathfrak{g}$ can be given an $n$-Leibniz algebra structure with the bracket operation
    \begin{align*}
        [x_1, \ldots, x_n] := \{ x_1, \{ x_2, \ldots \{ x_{n-1}, x_n \} \cdot \cdot \} \}, \text{ for } x_1, \ldots, x_n \in \mathfrak{g}.
    \end{align*}
\end{exam}

\begin{exam}
    Let $(\mathfrak{g}, \{ -, - \})$ be a Leibniz algebra and $[-, \ldots, -] : \mathfrak{g} \times \cdots \times \mathfrak{g} \rightarrow \mathfrak{g}$ be any $n$-linear operation such that for any $y \in \mathfrak{g}$, the right translation map $\{ -, y \} : \mathfrak{g} \rightarrow \mathfrak{g}$ is a derivation for the bracket $[-, \ldots, - ]$. Define an $(n+1)$-linear map $[ \! \! [ -, \ldots, - ] \! \! ] : \mathfrak{g} \times \cdots \times \mathfrak{g} \rightarrow \mathfrak{g}$ by
    \begin{align*}
        [ \! \! [ x_1, \ldots, x_{n+1} ] \! \! ] := \{ x_1, [x_2, \ldots, x_{n+1}]\}, \text{ for } x_1, \ldots, x_{n+1} \in \mathfrak{g}.
    \end{align*}
    Then $(\mathfrak{g}, [ \! \! [ -, \ldots, - ] \! \! ])$ is an $(n+1)$-Leibniz algebra.
\end{exam}

See \cite{casas1, casas3} for some other examples of $n$-Leibniz algebras. The following result \cite[Proposition 3.4]{casas1} shows that an $n$-Leibniz algebra naturally induces a Leibniz algebra structure.

\begin{prop} \label{funda-leibniz}
    Let $(\mathcal{L}, [-, \ldots, -])$ be an $n$-Leibniz algebra. Then (the space of fundamental elements) $\mathcal{L}^{\otimes (n-1)}$ carries a Leibniz algebra structure with the bracket operation 
    \begin{align}\label{funda-leib-bracket}
        \{ x_1 \otimes \cdots \otimes x_{n-1} ~ \! , ~ \! y_1 \otimes \cdots \otimes y_{n-1} \} := \sum_{i=1}^{n-1} x_1 \otimes \cdots \otimes [ x_i, y_1, \ldots, y_{n-1}] \otimes \cdots \otimes x_{n-1},
    \end{align}
    for $x_1 \otimes \cdots \otimes x_{n-1}$, $y_1 \otimes \cdots \otimes y_{n-1} \in \mathcal{L}^{\otimes (n-1)}$.
\end{prop}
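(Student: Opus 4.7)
The plan is to verify directly that the bracket $\{-,-\}$ defined by (\ref{funda-leib-bracket}) satisfies the (right) Leibniz identity
\begin{align*}
\{\{X, Y\}, Z\} = \{\{X, Z\}, Y\} + \{X, \{Y, Z\}\}
\end{align*}
for pure tensors $X = x_1 \otimes \cdots \otimes x_{n-1}$, $Y = y_1 \otimes \cdots \otimes y_{n-1}$, $Z = z_1 \otimes \cdots \otimes z_{n-1}$ in $\mathcal{L}^{\otimes (n-1)}$, and then extend by multilinearity. The checking of bilinearity and well-definedness on $\mathcal{L}^{\otimes(n-1)}$ is immediate from the $n$-linearity of $[-,\ldots,-]$.

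First, I would expand $\{\{X, Y\}, Z\}$ as a double sum over indices $i, j \in \{1, \ldots, n-1\}$, where $i$ records which slot of $X$ is modified by the inner bracket with the $y$'s and $j$ records which slot of the resulting tensor is then modified by the outer bracket with the $z$'s. The sum splits into off-diagonal terms $(i \neq j)$, in which two distinct slots are independently modified (one by the $y$-bracket, one by the $z$-bracket), and diagonal terms $(i = j)$, in which a single slot receives the nested bracket $[[x_i, y_1, \ldots, y_{n-1}], z_1, \ldots, z_{n-1}]$.

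Next, I would carry out the analogous expansion of $\{\{X, Z\}, Y\}$. The crucial observation is that its off-diagonal terms agree termwise with those of $\{\{X, Y\}, Z\}$: in both cases one obtains $X$ with the $i$-th slot replaced by $[x_i, y_1, \ldots, y_{n-1}]$ and the $j$-th slot (for $j \neq i$) replaced by $[x_j, z_1, \ldots, z_{n-1}]$. Subtracting therefore kills the off-diagonal contributions and leaves only
\begin{align*}
\{\{X, Y\}, Z\} - \{\{X, Z\}, Y\} = \sum_{i=1}^{n-1} x_1 \otimes \cdots \otimes \bigl([[x_i, y_1, \ldots, y_{n-1}], z_1, \ldots, z_{n-1}] - [[x_i, z_1, \ldots, z_{n-1}], y_1, \ldots, y_{n-1}]\bigr) \otimes \cdots \otimes x_{n-1}.
\end{align*}

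Finally, I would apply the fundamental identity (\ref{fun-identity}) to $[[x_i, y_1, \ldots, y_{n-1}], z_1, \ldots, z_{n-1}]$, taking $(x_i, y_1, \ldots, y_{n-1})$ in the role of $(x_1, \ldots, x_n)$ and $(z_1, \ldots, z_{n-1})$ in the role of $(y_1, \ldots, y_{n-1})$. This expresses the bracket as $[[x_i, z_1, \ldots, z_{n-1}], y_1, \ldots, y_{n-1}]$ plus $\sum_{k=1}^{n-1}[x_i, y_1, \ldots, [y_k, z_1, \ldots, z_{n-1}], \ldots, y_{n-1}]$, so the difference above collapses exactly to $\{X, \{Y, Z\}\}$ after interchanging the sums over $i$ and $k$ and recognizing the result as an instance of (\ref{funda-leib-bracket}) applied to $X$ and $\{Y, Z\}$. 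The main obstacle is purely bookkeeping: identifying the off-diagonal cancellation and pairing up the output of the fundamental identity with the correct slot-wise expansion of $\{X, \{Y, Z\}\}$; no additional ideas beyond (\ref{fun-identity}) are needed.
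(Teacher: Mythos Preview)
Your proof is correct: the off-diagonal cancellation and the use of the fundamental identity on the diagonal terms is exactly the standard verification. The paper itself does not prove this proposition but quotes it as \cite[Proposition 3.4]{casas1}, so there is no in-paper argument to compare against; your direct computation is precisely what that reference does.
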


\medskip

\begin{defn} \cite{biyog}
    A (right) {\bf $n$-shelf} is a pair $(X, \langle - , \ldots, - \rangle)$ consisting of a nonempty set $X$ equipped with an $n$-ary operation $\langle - , \ldots, - \rangle : \underbrace{X \times \cdots \times X}_{n \text{ copies}} \rightarrow X$ that satisfies the following distributive law:
    \begin{align}\label{n-shelf}
         \langle \langle x_1, \ldots, x_n \rangle, y_1, \ldots, y_{n-1} \rangle = \langle \langle x_1,  y_1, \ldots, y_{n-1} \rangle , \ldots, \langle x_n,  y_1, \ldots, y_{n-1} \rangle \rangle,
    \end{align}
    for all $x_1, \ldots, x_n, y_1, \ldots, y_{n-1} \in X$. An $n$-shelf $(X, \langle - , \ldots, - \rangle)$ is said to be a (right) {\bf $n$-rack} if for all $y_1, \ldots, y_{n-1} \in X$, the right translation map $\langle -,  y_1, \ldots, y_{n-1} \rangle : X \rightarrow X$ is bijective.
\end{defn}

Let $(X, \langle -, \ldots, - \rangle)$ and $(X', \langle - , \ldots, - \rangle')$ be two $n$-racks. A set-map $\varphi : X \rightarrow X'$ is said to be a homomorphism of $n$-racks if $\varphi ( \langle x_1, \ldots, x_n \rangle ) = \langle \varphi (x_1), \ldots, \varphi (x_n) \rangle'$, for all $x_1, \ldots, x_n \in X.$

\begin{exam}
Let $(G, ~\! \! \cdot ~ \! \!)$ be a group. Define an $n$-ary operation $\langle - , \ldots, - \rangle : G \times \cdots \times G \rightarrow G$ by $\langle x_1, \ldots, x_n \rangle := x_n \cdots x_2 x_1 x_2^{-1} \cdots x_n^{-1}$, for $x_1, \ldots, x_n \in G$. Then $(G, \langle - , \ldots, - \rangle)$ is an $n$-rack, called the {\em conjugation} $n$-rack.
\end{exam}

\begin{exam}\label{exam-rack-to-nrack}
    Let $(X, \triangleleft)$ be a rack. Define an $n$-ary operation $\langle - , \ldots, - \rangle : X \times \cdots \times X \rightarrow X$ by
    \begin{align*}
        \langle x_1, \ldots, x_n \rangle := ( \cdots (( x_1 \triangleleft x_2) \triangleleft x_3 ) \cdots ) \triangleleft x_n, \text{ for } x_1, \ldots, x_n \in X.
    \end{align*}
    Then $(X, \langle - , \ldots, - \rangle )$ is an $n$-rack.
\end{exam}


\begin{exam}\label{rack-n-rack}
    Let $(X, \triangleleft)$ be a rack and $\langle -, \ldots, - \rangle : X \times \cdots \times X \rightarrow X$ be any $n$-ary operation on $X$ such that
    \begin{align*}
        \langle x_1, \ldots, x_n \rangle \triangleleft y = \langle x_1 \triangleleft y, \ldots, x_n \triangleleft y  \rangle, \text{ for all } x_1, \ldots, x_n, y \in X.
    \end{align*}
    Define an $(n+1)$-ary operation $ \langle \!  \langle  -, \ldots, - \rangle \! \rangle : X \times \cdots \times X \rightarrow X$ by $ \langle \!  \langle  x_1, \ldots, x_{n+1} \rangle \! \rangle := x_1 \triangleleft \langle x_2, \ldots, x_{n+1} \rangle$. Then $(X,  \langle \!  \langle  -, \ldots, - \rangle \! \rangle)$ is an $(n+1)$-rack.
\end{exam}

\begin{remark} 
    Let $(X, \langle -, \ldots, - \rangle)$ be an $m$-rack and $(X, \langle - , \ldots, - \rangle')$ be a $n$-rack both defined on a set $X$. Suppose for each $x_1, x_2, \ldots \in X$, the right translation map $\langle - , x_1, \ldots, x_{m-1} \rangle : X \rightarrow X$ is an automorphism for the $n$-rack $(X, \langle -, \ldots, - \rangle')$ and the right translation $\langle - , x_1, \ldots, x_{n-1} \rangle' : X \rightarrow X$ is an automorphism for the $m$-rack $(X, \langle -, \ldots, - \rangle)$. Then we say that the $m$-rack structure $\langle -, \ldots, - \rangle$ and the $n$-rack structure $\langle - ,\ldots , - \rangle'$ are compatible. We define a $(m+n-1)$-ary operation $\ll -, \ldots, - \gg : X \times \cdots \times X \rightarrow X$ by
    \begin{align*}
        \ll x_1, \ldots, x_{m+n-1} \gg := \langle \langle x_1, \ldots, x_m \rangle , x_{m+1}, \ldots, x_{m+n-1} \rangle', \text{ for } x_1, \ldots, x_{m+n-1} \in X.
    \end{align*}
    Then it is easy to see that $(X, \ll -, \ldots, - \gg )$ is a $(m+n-1)$-rack.
\end{remark}

The following result shows that an $n$-rack naturally gives rise to a rack structure \cite{biyog}.

\begin{prop}\label{prop-nrack-rack}
    Let $(X, \langle - , \ldots, - \rangle )$ be an $n$-rack. Then the set $X^{\times (n-1)}$ carries a rack structure with the binary operation
    \begin{align*}
        (x_1, \ldots, x_{n-1}) \triangleleft_{\langle - , \ldots, - \rangle} (y_1, \ldots, y_{n-1}) := \big( \langle x_1, y_1, \ldots, y_{n-1} \rangle, \ldots,  \langle x_{n-1}, y_1, \ldots, y_{n-1} \rangle   \big),
    \end{align*}
    for $(x_1, \ldots, x_{n-1}), (y_1, \ldots, y_{n-1}) \in X^{\times (n-1)}.$ Moreover, the map 
    \begin{align*}
        X^{\times (n-1)} \rightarrow \mathrm{Aut} (X, \langle -, \ldots, - \rangle) ~ \text{ given by } ~  (y_1, \ldots, y_{n-1}) \mapsto \langle -, y_1, \ldots, y_{n-1} \rangle
    \end{align*}
is a rack homomorphism, where $\mathrm{Aut} (X, \langle -, \ldots, - \rangle)$ is endowed with the conjugation rack structure.
\end{prop}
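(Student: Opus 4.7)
My plan is to reduce all three assertions---self-distributivity of $\triangleleft_{\langle-,\ldots,-\rangle}$, bijectivity of its right translations, and the homomorphism property of $\phi:(y_1,\ldots,y_{n-1})\mapsto \langle -, y_1,\ldots,y_{n-1}\rangle$---to repeated applications of the single $n$-shelf identity (\ref{n-shelf}).

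For the rack structure on $X^{\times(n-1)}$, I observe that right translation by $(y_1,\ldots,y_{n-1})$ is, by definition, the Cartesian product of $n-1$ copies of the $n$-rack right translation $\langle -, y_1,\ldots,y_{n-1}\rangle$; hence its bijectivity is immediate from the $n$-rack hypothesis. To verify self-distributivity, I would compute the $i$-th coordinate of $(x\triangleleft_{\langle-,\ldots,-\rangle} y)\triangleleft_{\langle-,\ldots,-\rangle} z$ as $\langle \langle x_i, y_1,\ldots,y_{n-1}\rangle, z_1,\ldots,z_{n-1}\rangle$, and apply (\ref{n-shelf}) with first argument $x_i$ and outer bracket variables $z_1,\ldots,z_{n-1}$; the resulting expression matches the $i$-th coordinate of $(x\triangleleft_{\langle-,\ldots,-\rangle} z)\triangleleft_{\langle-,\ldots,-\rangle}(y\triangleleft_{\langle-,\ldots,-\rangle} z)$ on the nose.

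For the second part, the $n$-shelf identity read verbatim says that $\langle -, y_1,\ldots,y_{n-1}\rangle$ preserves the $n$-ary operation on $X$, so together with bijectivity it is an $n$-rack automorphism, and $\phi$ indeed lands in $\mathrm{Aut}(X,\langle-,\ldots,-\rangle)$. To check that $\phi$ respects the rack operations, I would apply (\ref{n-shelf}) with the specific substitution $x_1 = x$, $x_k = y_{k-1}$ for $2\le k\le n$, and $y_i = z_i$. Setting $f=\phi(y_1,\ldots,y_{n-1})$ and $g=\phi(z_1,\ldots,z_{n-1})$, the identity reads $g(f(x)) = \phi((y_1,\ldots,y_{n-1})\triangleleft_{\langle-,\ldots,-\rangle}(z_1,\ldots,z_{n-1}))(g(x))$ for every $x\in X$; cancelling $g$ on the right yields $\phi(y\triangleleft_{\langle-,\ldots,-\rangle} z)=g\circ f\circ g^{-1}$, which is precisely the conjugation-rack operation $\phi(y)\triangleleft \phi(z)$ under the convention $f\triangleleft g=gfg^{-1}$ fixed by the conjugation $n$-rack example above.

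I do not anticipate a genuine obstacle: the proof is a coordinated bookkeeping exercise around the fundamental identity (\ref{n-shelf}). The only mildly subtle point is the correct indexing in the conjugation calculation, where the specific substitution promoting the free argument $x$ to the outermost slot is what turns (\ref{n-shelf}) into the desired intertwining relation between $f$, $g$, and $\phi(y\triangleleft z)$; once this substitution is made, the passage from $gf=\phi(y\triangleleft z)\,g$ to the conjugation identity is purely formal.
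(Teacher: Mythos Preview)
Your argument is correct. The paper, however, does not give its own proof of this proposition: it simply states the result with a citation to \cite{biyog} (Biyogmam), so there is nothing to compare against in the present text. Your route---coordinate-wise application of the $n$-shelf identity (\ref{n-shelf}) for self-distributivity, componentwise bijectivity for the rack axiom, and the substitution $x_1=x$, $x_k=y_{k-1}$ in (\ref{n-shelf}) to extract the conjugation relation $g\circ f=\phi(y\triangleleft z)\circ g$---is the standard and expected one, and matches what one finds in the cited source.
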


\begin{remark}
Let $(X, \langle - , \ldots , - \rangle)$ be an $n$-rack. Then $r: X^{\times (n-1)} \times X^{\times (n-1)} \rightarrow  X^{\times (n-1)} \times X^{\times (n-1)}$ defined by
\begin{align*}
r (( x_1, \ldots, x_{n-1}), (y_1, \ldots, y_{n-1})) = \big(   (y_1, \ldots, y_{n-1}), ( \langle x_1, y_1, \ldots y_{n-1} \rangle, \ldots , \langle x_{n-1}, y_1, \ldots y_{n-1} \rangle)  \big),
    \end{align*}
for $(x_1, \ldots, x_{n-1}), (y_1, \ldots, y_{n-1}) \in X^{\times (n-1)}$, is a set-theoretical solution on the set $X^{\times (n-1)}$.
\end{remark}

\begin{remark}
    Let $n, k \geq 2$ and $(X, \langle -, \ldots, - \rangle)$ be an $((n-1)(k-1)+1)$-rack. Then it can be checked that the space $X^{\times (n-1)}$ equipped with the operation
    \begin{align*}
        &\langle \! \langle  (x_{1,1}, \ldots, x_{1,n-1}),  (x_{2,1}, \ldots, x_{2,n-1}), \ldots, (x_{k,1}, \ldots, x_{k,n-1})    \rangle \! \rangle \\
        & \qquad : = \big(  \langle x_{1,1}, x_{2,1}, \ldots, x_{2,n-1}, \ldots, x_{k,1}, \ldots, x_{k,n-1} \rangle , \ldots, \langle x_{1,n-1}, x_{2,1}, \ldots, x_{2,n-1}, \ldots, x_{k,1}, \ldots, x_{k,n-1} \rangle  \big)
    \end{align*}
    is a $k$-rack. In particular, when $k =2$, we recover the rack obtained in Proposition \ref{prop-nrack-rack}.
\end{remark}

\medskip

Let $(\mathcal{L}, [-, \ldots, -])$ be an $n$-Leibniz algebra. A linear map $\mathfrak{D} : \mathcal{L} \rightarrow \mathcal{L}$ is said to be a {\bf derivation} on $\mathcal{L}$ if
\begin{align}\label{deri-iden}
    \mathfrak{D} ([x_1, \ldots, x_n]) = \sum_{i=1}^n [x_1, \ldots, \mathfrak{D}(x_i), \ldots, x_n], \text{ for all } x_1, \ldots, x_n \in \mathcal{L}.
\end{align}
It follows from the fundamental identity (\ref{fun-identity}) that the right translation maps (also called adjoint maps) $\mathrm{ad}_{y_1, \ldots, y_{n-1}} : = [-, y_1, \ldots, y_{n-1}] : \mathcal{L} \rightarrow \mathcal{L}$ are derivations on the $n$-Leibniz algebra $(\mathcal{L}, [-, \ldots, -])$. All such derivations are called {\em inner derivations}.

\begin{prop}\label{prop-expo}
    Let $(\mathcal{L}, [-, \ldots, -])$ be a finite-dimensional $n$-Leibniz algebra. Then for any derivation $\mathfrak{D} : \mathcal{L} \rightarrow \mathcal{L}$, the map $\mathrm{exp} (\mathfrak{D}) = \sum_{k=0}^\infty \frac{\mathfrak{D}^k }{k ! }$ is an $n$-Leibniz algebra isomorphism on $\mathcal{L}$.
\end{prop}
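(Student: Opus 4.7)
The plan is to mimic the exponential-of-derivation argument familiar from the binary case, the only new difficulty being the handling of $n$ slots of the bracket simultaneously. Since $\mathcal{L}$ is finite-dimensional, the series $\exp(\mathfrak{D}) = \sum_{k \geq 0} \mathfrak{D}^k / k!$ is a well-defined linear endomorphism of $\mathcal{L}$, and the standard identity $\exp(\mathfrak{D}) \exp(-\mathfrak{D}) = \mathrm{Id}$ (valid because $\mathfrak{D}$ commutes with itself) shows that $\exp(\mathfrak{D})$ is already invertible with inverse $\exp(-\mathfrak{D})$. It therefore suffices to show that $\exp(\mathfrak{D})$ preserves the $n$-bracket.

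The key lemma I would establish first is a higher-order, multinomial Leibniz rule: for every integer $k \geq 0$ and every $x_1, \ldots, x_n \in \mathcal{L}$,
\[
\mathfrak{D}^k \bigl( [x_1, \ldots, x_n] \bigr) \;=\; \sum_{\substack{k_1, \ldots, k_n \geq 0 \\ k_1 + \cdots + k_n = k}} \frac{k!}{k_1! \cdots k_n!} \, \bigl[\mathfrak{D}^{k_1}(x_1), \ldots, \mathfrak{D}^{k_n}(x_n) \bigr].
\]
The case $k = 0$ is trivial, and the inductive step $k \mapsto k+1$ follows by writing $\mathfrak{D}^{k+1} = \mathfrak{D} \circ \mathfrak{D}^k$, applying the derivation identity (\ref{deri-iden}) to distribute one additional $\mathfrak{D}$ across each of the $n$ slots, and then collecting terms via the multinomial Pascal identity $\binom{k+1}{k_1, \ldots, k_n} = \sum_{i=1}^n \binom{k}{k_1, \ldots, k_i - 1, \ldots, k_n}$. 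This is the combinatorial core of the argument.

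Dividing the displayed identity by $k!$ and summing over $k \geq 0$, and then exchanging the order of summation in the resulting doubly-indexed sum, yields
\[
\exp(\mathfrak{D}) \bigl([x_1, \ldots, x_n] \bigr) \;=\; \sum_{k_1, \ldots, k_n \geq 0} \frac{1}{k_1! \cdots k_n!} \bigl[ \mathfrak{D}^{k_1}(x_1), \ldots, \mathfrak{D}^{k_n}(x_n) \bigr] \;=\; \bigl[ \exp(\mathfrak{D})(x_1), \ldots, \exp(\mathfrak{D})(x_n) \bigr],
\]
where the final equality uses the $n$-linearity of the bracket slot by slot to pull each of the $n$ independent exponential series inside its respective coordinate. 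Combined with the invertibility noted above, this shows that $\exp(\mathfrak{D})$ is an $n$-Leibniz algebra isomorphism.

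The only genuine obstacle I foresee is the combinatorial bookkeeping in establishing the multinomial Leibniz rule across $n$ slots; both the rearrangement of the multi-indexed sum and the invertibility step are essentially formal once that identity is in hand. As a minor point one may remark that, should convergence of the exponential be a concern, the same algebraic argument carries over verbatim to $\exp(t\mathfrak{D})$ viewed as a formal automorphism of $\mathcal{L}$ over $\mathbf{k}[[t]]$, whence specialization recovers the stated conclusion in whichever concrete setting (e.g.\ $\mathfrak{D}$ nilpotent, or $\mathbf{k}$ a complete normed field) the series is meaningful.
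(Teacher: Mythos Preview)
Your proof is correct and follows essentially the same route as the paper: both establish the multinomial Leibniz rule $\mathfrak{D}^k([x_1,\ldots,x_n]) = \sum \frac{k!}{k_1!\cdots k_n!}[\mathfrak{D}^{k_1}x_1,\ldots,\mathfrak{D}^{k_n}x_n]$, sum over $k$ and rearrange to obtain the homomorphism property, and then observe invertibility via $\exp(-\mathfrak{D})$. The only minor differences are that the paper spells out the convergence argument via a norm on the finite-dimensional space, whereas you supply more detail on the inductive proof of the multinomial identity and add the formal-power-series remark; neither difference affects the substance of the argument.
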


\begin{proof}
Since $\mathcal{L}$ is a finite-dimensional vector space, any linear map on $\mathcal{L}$ is continuous and hence bounded (with respect to any suitable norm $||\cdot ||$). Hence, for any $x \in \mathcal{L}$, we have
\begin{align*}
    \sum_{k=0}^\infty \frac{ || \mathfrak{D}^k (x) ||}{k!} \leq \sum_{k=0}^\infty \frac{|| \mathfrak{D} ||^k || x ||}{k!} = e^{||\mathfrak{D}||} || x || < \infty \quad (\because || \mathfrak{D} || < \infty).
\end{align*}
As $\mathcal{L}$ is a Banach space, and every absolutely convergent series in a Banach space is convergent, we get that the map $\mathrm{exp} (\mathfrak{D})$ is well-defined. Next, for any $x_1, \ldots, x_n \in \mathcal{L}$ and $k \geq 0$, we have from (\ref{deri-iden}) that
    \begin{align*}
        \frac{1}{k!} \mathfrak{D}^k ([x_1, \ldots, x_n]) = \sum_{\substack{i_1 + \cdots + i_n = k \\ i_1, \ldots, i_n \geq 0}} \big[ \frac{\mathfrak{D}^{i_1} (x_1)}{i_1 !}, \ldots, \frac{\mathfrak{D}^{i_n} (x_n)}{i_n !} \big].
    \end{align*}
    Hence
    \begin{align*}
      \mathrm{exp} (\mathfrak{D})  [  x_1, \ldots, x_n ] = \sum_{k=0}^\infty \frac{1}{k!} \mathfrak{D}^k ( [  x_1, \ldots, x_n ]) 
=~& \sum_{k=0}^\infty \sum_{\substack{i_1 + \cdots + i_n = k \\ i_1, \ldots, i_n \geq 0}} \big[ \frac{\mathfrak{D}^{i_1} (x_1)}{i_1 !}, \ldots, \frac{\mathfrak{D}^{i_n} (x_n)}{i_n !} \big] \\
=~& \big[ \sum_{i_1 = 0}^\infty \frac{ \mathfrak{D}^{i_1} (x_1)}{i_1 !} , \ldots, \sum_{i_n = 0}^\infty \frac{ \mathfrak{D}^{i_n} (x_n)}{i_n !} \big] \\
=~& [ \mathrm{exp} (\mathfrak{D}) (x_1), \ldots, \mathrm{exp} (\mathfrak{D}) (x_n)],
    \end{align*}
    which shows that $\mathrm{exp} (\mathfrak{D}) : \mathcal{L} \rightarrow \mathcal{L}$ is an $n$-Leibniz algebra homomorphism. Finally, it is an isomorphism with the inverse $\mathrm{exp} (- \mathfrak{D})$.
\end{proof}

\begin{lemma}
    Let $(\mathcal{L}, [-, \ldots, -])$ be a finite-dimensional $n$-Leibniz algebra and $\varphi : \mathcal{L} \rightarrow \mathcal{L}$ be an $n$-Leibniz algebra homomorphism. Then for any $x_1, \ldots, x_{n-1} \in \mathcal{L}$, we have
    \begin{align}\label{alpha-commute}
        \varphi \circ \mathrm{exp} (\mathrm{ad}_{x_1, \ldots, x_{n-1}}) = \mathrm{exp} (\mathrm{ad}_{\varphi (x_1), \ldots, \varphi (x_{n-1})}) \circ \varphi.
    \end{align}
\end{lemma}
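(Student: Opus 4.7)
The plan is to establish the identity by first proving its \emph{linear} (one-step) version and then iterating, since both exponentials are defined as power series and $\varphi$ is continuous by finite-dimensionality.

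First, I would verify the key intertwining relation at the level of the adjoint map. For any $z \in \mathcal{L}$, using that $\varphi$ preserves the $n$-bracket,
\begin{align*}
\varphi(\mathrm{ad}_{x_1, \ldots, x_{n-1}}(z)) = \varphi([z, x_1, \ldots, x_{n-1}]) = [\varphi(z), \varphi(x_1), \ldots, \varphi(x_{n-1})] = \mathrm{ad}_{\varphi(x_1), \ldots, \varphi(x_{n-1})}(\varphi(z)),
\end{align*}
so $\varphi \circ \mathrm{ad}_{x_1, \ldots, x_{n-1}} = \mathrm{ad}_{\varphi(x_1), \ldots, \varphi(x_{n-1})} \circ \varphi$.

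Next, I would argue by induction on $k \geq 0$ that $\varphi \circ \mathrm{ad}_{x_1, \ldots, x_{n-1}}^{k} = \mathrm{ad}_{\varphi(x_1), \ldots, \varphi(x_{n-1})}^{k} \circ \varphi$. The case $k=0$ is trivial, and the inductive step follows by composing the one-step relation with $\mathrm{ad}_{x_1, \ldots, x_{n-1}}^{k-1}$ on the right and applying the inductive hypothesis.

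Finally, dividing by $k!$ and summing over $k \geq 0$, I would conclude
\begin{align*}
\varphi \circ \mathrm{exp}(\mathrm{ad}_{x_1, \ldots, x_{n-1}}) = \sum_{k=0}^{\infty} \frac{1}{k!}\, \varphi \circ \mathrm{ad}_{x_1, \ldots, x_{n-1}}^{k} = \sum_{k=0}^{\infty} \frac{1}{k!}\, \mathrm{ad}_{\varphi(x_1), \ldots, \varphi(x_{n-1})}^{k} \circ \varphi = \mathrm{exp}(\mathrm{ad}_{\varphi(x_1), \ldots, \varphi(x_{n-1})}) \circ \varphi,
\end{align*}
where the interchange of $\varphi$ with the infinite sum is justified exactly as in the proof of Proposition \ref{prop-expo}: $\mathcal{L}$ is finite-dimensional, so $\varphi$ is bounded and continuous, and the series defining $\mathrm{exp}(\mathfrak{D})$ is absolutely convergent in the Banach space $\mathcal{L}$. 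No real obstacle arises; the only delicate point is the exchange of limit and linear map, which has already been set up in the previous proposition.
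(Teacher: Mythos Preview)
Your proof is correct and follows essentially the same approach as the paper: establish the intertwining $\varphi \circ \mathrm{ad}_{x_1,\ldots,x_{n-1}}^{k} = \mathrm{ad}_{\varphi(x_1),\ldots,\varphi(x_{n-1})}^{k} \circ \varphi$ from the homomorphism property, then pass to the limit using continuity of $\varphi$ in the finite-dimensional setting. You are slightly more explicit about the base case and the induction than the paper, which jumps directly to the $k$-th power identity, but the argument is the same.
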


\begin{proof}
    For any $x_n \in \mathcal{L}$, we have
    \begin{align*}
         \big( \varphi \circ \mathrm{exp} (\mathrm{ad}_{x_1, \ldots, x_{n-1}}) \big) (x_n) =~& \varphi \circ \lim_{N \rightarrow \infty} \sum_{k=0}^N \frac{(\mathrm{ad}_{x_1, \ldots, x_{n-1}})^k}{k ! } (x_n) \\
         =~& \lim_{N \rightarrow \infty} \sum_{k=0}^N \frac{ \varphi ( (\mathrm{ad}_{x_1, \ldots, x_{n-1}})^k (x_n))}{k ! } \quad (\because \varphi \text{ is continuous}) \\
         =~& \lim_{N \rightarrow \infty} \sum_{k=0}^N \frac{ (\mathrm{ad}_{\varphi (x_1), \ldots, \varphi ( x_{n-1})})^k ~ \! \varphi (x_n) }{k!} = \big(  \mathrm{exp} (\mathrm{ad}_{\varphi (x_1), \ldots, \varphi (x_{n-1})}) \circ \varphi    \big) (x_n).
    \end{align*}
    This completes the proof.
\end{proof}

When we have a homomorphism $\varphi : \mathcal{L} \rightarrow \mathcal{L}'$ between two finite-dimensional $n$-Leibniz algebras, then one can similarly show that 
$ \varphi \circ \mathrm{exp} (\mathrm{ad}^{\mathcal{L}}_{x_1, \ldots, x_{n-1}}) = \mathrm{exp} (\mathrm{ad}^{\mathcal{L}'}_{\varphi (x_1), \ldots, \varphi (x_{n-1})}) \circ \varphi$, for all $x_1, \ldots, x_{n-1} \in \mathcal{L}$. Here $\mathrm{ad}^{\mathcal{L}}_{x_1, \ldots, x_{n-1}}$ and $\mathrm{ad}^{\mathcal{L}'}_{\varphi (x_1), \ldots, \varphi (x_{n-1})}$ are respectively the adjoint maps on the $n$-Leibniz algebras $\mathcal{L}$ and $\mathcal{L}'$.

\begin{thm}\label{thm-nleibniz-nrack}
    Let $(\mathcal{L}, [-, \ldots, -])$ be a finite-dimensional $n$-Leibniz algebra. Define an $n$-ary operation $\langle - , \ldots, - \rangle : \mathcal{L} \times \cdots \times \mathcal{L} \rightarrow \mathcal{L}$ by
    \begin{align*}
        \langle x_1, \ldots, x_n \rangle := \mathrm{exp} (\mathrm{ad}_{x_2, \ldots, x_n}) (x_1), \text{ for } x_1, \ldots, x_n \in \mathcal{L}.
    \end{align*}
    Then $(\mathcal{L}, \langle -, \ldots, - \rangle)$ is an $n$-rack. Moreover, if $\varphi: \mathcal{L} \rightarrow \mathcal{L}'$ is a homomorphism of finite-dimensional $n$-Leibniz algebras, then $\varphi$ is also a homomorphism of the corresponding $n$-racks.
\end{thm}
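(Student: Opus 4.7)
The plan is to verify the two $n$-rack axioms (distributivity and bijectivity of right translations) together with the functorial compatibility, using Proposition \ref{prop-expo} and the preceding lemma as the main engines. The idea throughout is that the right translation $\langle -, y_1, \ldots, y_{n-1}\rangle$ is literally the operator $\exp(\mathrm{ad}_{y_1, \ldots, y_{n-1}})$, so $n$-rack properties translate into conjugation-type identities for these exponentials that are already available.

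\textbf{Bijectivity.} For fixed $y_1, \ldots, y_{n-1} \in \mathcal{L}$ the adjoint $\mathrm{ad}_{y_1, \ldots, y_{n-1}}$ is a derivation (by the fundamental identity), so Proposition \ref{prop-expo} applies: the right translation $\exp(\mathrm{ad}_{y_1, \ldots, y_{n-1}})$ is an $n$-Leibniz isomorphism, and in particular a bijection on $\mathcal{L}$, with inverse $\exp(-\mathrm{ad}_{y_1, \ldots, y_{n-1}})$.

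\textbf{Distributivity.} This is the main step; the idea is to apply the lemma (equation (\ref{alpha-commute})) with $\varphi := \exp(\mathrm{ad}_{y_1, \ldots, y_{n-1}})$, which is a legitimate $n$-Leibniz homomorphism by Proposition \ref{prop-expo}. Plugging the $(n-1)$-tuple $(x_2, \ldots, x_n)$ into (\ref{alpha-commute}) gives
\begin{align*}
\exp(\mathrm{ad}_{y_1, \ldots, y_{n-1}}) \circ \exp(\mathrm{ad}_{x_2, \ldots, x_n}) = \exp\bigl(\mathrm{ad}_{\exp(\mathrm{ad}_{y_1, \ldots, y_{n-1}})(x_2), \ldots, \exp(\mathrm{ad}_{y_1, \ldots, y_{n-1}})(x_n)}\bigr) \circ \exp(\mathrm{ad}_{y_1, \ldots, y_{n-1}}).
\end{align*}
Evaluating both sides on $x_1$, the left-hand side becomes $\langle\langle x_1, \ldots, x_n\rangle, y_1, \ldots, y_{n-1}\rangle$, while the right-hand side becomes $\langle\langle x_1, y_1, \ldots, y_{n-1}\rangle, \langle x_2, y_1, \ldots, y_{n-1}\rangle, \ldots, \langle x_n, y_1, \ldots, y_{n-1}\rangle\rangle$. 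This is precisely the distributive law (\ref{n-shelf}).

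\textbf{Functoriality.} For a homomorphism $\varphi : \mathcal{L} \to \mathcal{L}'$ of finite-dimensional $n$-Leibniz algebras, use the inter-algebra version of (\ref{alpha-commute}) noted in the paragraph after the lemma, namely $\varphi \circ \exp(\mathrm{ad}^{\mathcal{L}}_{x_2, \ldots, x_n}) = \exp(\mathrm{ad}^{\mathcal{L}'}_{\varphi(x_2), \ldots, \varphi(x_n)}) \circ \varphi$. Applying both sides to $x_1$ yields $\varphi(\langle x_1, \ldots, x_n\rangle) = \langle \varphi(x_1), \ldots, \varphi(x_n)\rangle$, as required.

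The only real obstacle is the distributive law, and the work is essentially just organizing the conjugation identity from the lemma so that its two sides match the two sides of (\ref{n-shelf}); the key observation is that the $(n-1)$-tuple $(x_2,\ldots,x_n)$ has exactly the right length to feed into the lemma, and that $\exp(\mathrm{ad}_{y_1,\ldots,y_{n-1}})$ qualifies as a homomorphism $\varphi$ by Proposition \ref{prop-expo}.
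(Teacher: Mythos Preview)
Your proof is correct and follows essentially the same approach as the paper: both use Proposition \ref{prop-expo} for bijectivity, both apply the lemma (identity (\ref{alpha-commute})) with $\varphi = \exp(\mathrm{ad}_{y_1,\ldots,y_{n-1}})$ to obtain distributivity, and both invoke the inter-algebra version of (\ref{alpha-commute}) for functoriality. The only difference is the order of presentation.
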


\begin{proof}
    For any $x_1, \ldots, x_n, y_1, \ldots, y_{n-1} \in \mathcal{L}$, we observe that
    \begin{align*}
        \langle \langle x_1, \ldots, x_n \rangle, y_1, \ldots, y_{n-1} \rangle 
        &= \langle \mathrm{exp} (\mathrm{ad}_{x_2, \ldots, x_n} ) (x_1), y_1, \ldots, y_{n-1} \rangle \\
        &= \mathrm{exp} (\mathrm{ad}_{y_1, \ldots, y_{n-1}} ) ~ \! \mathrm{exp} (\mathrm{ad}_{x_2, \ldots, x_n} )(x_1) \\
        &\stackrel{(\ref{alpha-commute})}{=} \mathrm{exp} \big(  \mathrm{ad}_{ \mathrm{exp} (\mathrm{ad}_{y_1, \ldots, y_{n-1}} ) (x_2), \ldots, \mathrm{exp} (\mathrm{ad}_{y_1, \ldots, y_{n-1}} ) (x_n)   }   \big) ~ \! \mathrm{exp} (\mathrm{ad}_{y_1, \ldots, y_{n-1}} ) (x_1) \\
        &= \mathrm{exp} \big( \mathrm{ad}_{  \langle x_2, y_1, \ldots, y_{n-1} \rangle   , \ldots, \langle x_n, y_1, \ldots, y_{n-1} \rangle  } \big) \langle x_1, y_1, \ldots, y_{n-1} \rangle  \\
        &= \langle \langle x_1, y_1, \ldots, y_{n-1} \rangle , \langle x_2, y_1, \ldots, y_{n-1} \rangle , \ldots, \langle x_n, y_1, \ldots, y_{n-1} \rangle \rangle
    \end{align*}
    which verifies the identity (\ref{n-shelf}). Finally, the map $\langle - , y_1, \ldots, y_{n-1} \rangle : \mathcal{L} \rightarrow \mathcal{L}$ is simply the exponential map $\mathrm{exp} (\mathrm{ad}_{y_1, \ldots, y_{n-1}})$ and hence it is bijective (see Proposition \ref{prop-expo}). This proves the first part.

    For the second part, we observe that
    \begin{align*}
        \varphi ( \langle x_1, \ldots, x_n \rangle ) =~& \varphi \big(   \mathrm{exp} (\mathrm{ad}^\mathcal{L}_{x_2, \ldots, x_n}) (x_1)  \big) \\
        =~&  \mathrm{exp} (\mathrm{ad}^{\mathcal{L}'}_{\varphi(x_2), \ldots, \varphi( x_n)}) ( \varphi(x_1)) = \langle    \varphi (x_1), \ldots, \varphi (x_n) \rangle',
    \end{align*}
    for all $x_1, \ldots, x_{n} \in \mathcal{L}$. This shows that $\varphi$ is a homomorphism of the corresponding $n$-racks.
\end{proof}

\begin{remark}
    When $n = 2$, the above result coincides with the construction of a rack from a finite-dimensional Leibniz algebra given by Kinyon \cite{kinyon}.  Moreover, when $n=3$, we recover the construction of a $3$-rack from a finite-dimensional $3$-Leibniz algebra \cite{xu-sheng}.
\end{remark}

Let $(\mathcal{L}, [-, \ldots, -])$ be a finite-dimensional $n$-Leibniz algebra. Then by Theorem \ref{thm-nleibniz-nrack}, one can construct an $n$-rack $(\mathcal{L}, \langle - , \ldots, - \rangle )$ and hence a rack $(\mathcal{L}^{\times (n-1)}, \triangleleft_{ \langle - , \ldots, - \rangle })$ by Proposition \ref{prop-nrack-rack}. On the other hand, from the given $n$-Leibniz algebra $(\mathcal{L}, [-, \ldots, -])$, by Proposition \ref{funda-leibniz}, one may construct the Leibniz algebra $(\mathcal{L}^{\otimes (n-1)}, \{ ~, ~ \})$ on the space of fundamental elements and hence a rack structure $( \mathcal{L}^{\otimes (n-1)}, \triangleleft)$ by Kinyon's construction. In the following result, we find a rack homomorphism $\varphi$ between them (see the diagram below):
\begin{align}
    \xymatrix{
     & n\text{-rack} \ar[rr]^{\mathrm{Proposition ~} \ref{prop-nrack-rack}} &  & \text{rack} \ar@{-->}[dd]^\varphi\\
   \substack{ \text{f.d. } n\text{-Leibniz ~algebra} \\ (\mathcal{L}, [-, \ldots, -])} \ar@/^0.8pc/[ru]^{\mathrm{Theorem ~} \ref{thm-nleibniz-nrack}} \ar@/_0.8pc/[rd]_{\mathrm{Proposition ~} \ref{funda-leibniz}} &  &  & \\
      & \substack{ \text{Leibniz algebra } \\ (\mathcal{L}^{\otimes (n-1)}, \{ -, - \} ) } \ar[rr]_{ \substack{\mathrm{Kinyon's } \\ \mathrm{construction} }} & & \text{rack.}
    }
\end{align}

\begin{thm}
    Let $(\mathcal{L}, [-, \ldots, -])$ be a finite-dimensional $n$-Leibniz algebra. We define a set-map $\varphi : \mathcal{L}^{\times (n-1)} \rightarrow \mathcal{L}^{\otimes (n-1)}$ by
    \begin{align*}
        \varphi (x_1, \ldots, x_{n-1}) = x_1 \otimes \cdots \otimes x_{n-1}, \text{ for } (x_1, \ldots, x_{n-1}) \in \mathcal{L}^{\times (n-1)}.
    \end{align*}
    Then $\varphi$ is a rack homomorphism from $( \mathcal{L}^{\times (n-1)} , \triangleleft_{ \langle -, \ldots, - \rangle}) $ to $( \mathcal{L}^{\otimes (n-1)} , \triangleleft) $.
\end{thm}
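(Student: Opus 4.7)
The plan is to unfold both sides of the rack-homomorphism identity
$\varphi\bigl((x_1,\ldots,x_{n-1}) \triangleleft_{\langle -,\ldots,-\rangle} (y_1,\ldots,y_{n-1})\bigr) = \varphi(x_1,\ldots,x_{n-1}) \triangleleft \varphi(y_1,\ldots,y_{n-1})$
and reconcile them via the identification of the adjoint action on the space of fundamental elements with a symmetric tensor derivation, whose exponential splits across tensor factors.

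First, I would compute the left-hand side. By the definition of $\triangleleft_{\langle-,\ldots,-\rangle}$ (from Proposition \ref{prop-nrack-rack}) and of $\langle-,\ldots,-\rangle$ (from Theorem \ref{thm-nleibniz-nrack}), this equals
\begin{align*}
\varphi\bigl(\langle x_1,y_1,\ldots,y_{n-1}\rangle,\ldots,\langle x_{n-1},y_1,\ldots,y_{n-1}\rangle\bigr)
= \mathrm{exp}(\mathrm{ad}_{y_1,\ldots,y_{n-1}})(x_1) \otimes \cdots \otimes \mathrm{exp}(\mathrm{ad}_{y_1,\ldots,y_{n-1}})(x_{n-1}).
\end{align*}

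Next, for the right-hand side, Kinyon's construction applied to the Leibniz algebra $(\mathcal{L}^{\otimes(n-1)},\{-,-\})$ gives
$(x_1\otimes\cdots\otimes x_{n-1}) \triangleleft (y_1\otimes\cdots\otimes y_{n-1}) = \mathrm{exp}(\mathrm{ad}_{y_1\otimes\cdots\otimes y_{n-1}})(x_1\otimes\cdots\otimes x_{n-1})$,
where $\mathrm{ad}_{y_1\otimes\cdots\otimes y_{n-1}}$ denotes the right-multiplication derivation with respect to $\{-,-\}$. Setting $D := \mathrm{ad}_{y_1,\ldots,y_{n-1}} : \mathcal{L}\to\mathcal{L}$, the bracket formula (\ref{funda-leib-bracket}) shows immediately that
\begin{align*}
\mathrm{ad}_{y_1\otimes\cdots\otimes y_{n-1}} = \sum_{i=1}^{n-1} \mathrm{Id}^{\otimes(i-1)} \otimes D \otimes \mathrm{Id}^{\otimes(n-1-i)} =: \widetilde{D}
\end{align*}
on $\mathcal{L}^{\otimes(n-1)}$, since $[x_i,y_1,\ldots,y_{n-1}] = D(x_i)$.

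Finally, I observe that the summands in $\widetilde{D}$ pairwise commute (they act on disjoint tensor factors), so the exponential of the sum factorises as a tensor product of exponentials, namely $\mathrm{exp}(\widetilde{D}) = \mathrm{exp}(D)^{\otimes(n-1)}$. Applying this to $x_1\otimes\cdots\otimes x_{n-1}$ yields precisely the left-hand side computed above, completing the proof.

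The only mild obstacle is the identification $\mathrm{ad}_{y_1\otimes\cdots\otimes y_{n-1}} = \widetilde{D}$, which is a direct reading of (\ref{funda-leib-bracket}), together with the commutativity argument justifying $\mathrm{exp}(\widetilde{D}) = \mathrm{exp}(D)^{\otimes(n-1)}$; both are routine. Note that $\varphi$ is in general not injective, so it is really a homomorphism of racks and not an isomorphism, which is consistent with the statement.
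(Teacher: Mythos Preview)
Your proposal is correct and follows essentially the same approach as the paper: both identify $\mathrm{ad}_{y_1\otimes\cdots\otimes y_{n-1}}$ with the sum $\sum_i \mathrm{Id}^{\otimes(i-1)}\otimes D\otimes\mathrm{Id}^{\otimes(n-1-i)}$ of commuting operators and then factor the exponential across the tensor slots. The only difference is presentational---the paper writes out the multinomial expansion of $(\mathrm{ad}_{y_1\otimes\cdots\otimes y_{n-1}})^k$ explicitly, whereas you invoke the general identity $\exp(\sum_i A_i)=\prod_i\exp(A_i)$ for commuting $A_i$; the underlying argument is the same.
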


\begin{proof}
    For any $(x_1, \ldots, x_{n-1}), (y_1, \ldots, y_{n-1}) \in \mathcal{L}^{\times (n-1)}$, we have
    \begin{align*}
        &\varphi \big( (x_1, \ldots, x_{n-1}) \triangleleft_{ \langle -, \ldots, - \rangle} (y_1, \ldots, y_{n-1})   \big) \\
        &= \varphi (  \langle x_1, y_1, \ldots, y_{n-1} \rangle, \ldots,  \langle x_{n-1}, y_1, \ldots, y_{n-1} \rangle  ) \\
       & = \varphi \big(  \mathrm{exp} (\mathrm{ad}_{y_1, \ldots, y_{n-1}}) (x_1), \ldots , \mathrm{exp} (\mathrm{ad}_{y_1, \ldots, y_{n-1}}) (x_{n-1})  \big) \\
        &= \big(  \sum_{i_1 = 0}^\infty \frac{ (\mathrm{ad}_{y_1, \ldots, y_{n-1}})^{i_1} (x_1) }{i_1 !}   \big) \otimes \cdots \otimes \big(   \sum_{i_{n-1} = 0}^\infty \frac{ (\mathrm{ad}_{y_1, \ldots, y_{n-1}})^{i_{n-1}} (x_{n-1}) }{i_{n-1} !}   \big)\\
       & = \sum_{k=0}^\infty ~\sum_{\substack{ i_1 + \cdots + i_{n-1} = k \\ i_1, \ldots, i_{n-1} \geq 0}} \frac{ (\mathrm{ad}_{y_1, \ldots, y_{n-1}})^{i_1} (x_1) }{i_1 !} \otimes \cdots \otimes \frac{ (\mathrm{ad}_{y_1, \ldots, y_{n-1}})^{i_{n-1}} (x_{n-1}) }{i_{n-1} !}.
    \end{align*}
    On the other hand, in the Leibniz algebra $(\mathcal{L}^{\otimes (n-1)}, \{ ~, ~ \})$, we get that 
    \begin{align*}
        \mathrm{ad}_{y_1 \otimes \cdots \otimes y_{n-1}} = \sum_{i=1}^{n-1} ~ \mathrm{Id} \otimes \cdots \otimes \underbrace{\mathrm{ad}_{y_1, \ldots, y_{n-1}} }_{i\text{-th place}} \otimes \cdots \otimes \mathrm{Id} \qquad (\text{follows from } (\ref{fun-identity}))
    \end{align*}
    which yields that
    \begin{align}\label{ad-k}
        (\mathrm{ad}_{y_1 \otimes \cdots \otimes y_{n-1}})^k = \sum_{\substack{ i_1 + \cdots + i_{n-1} = k \\ i_1, \ldots, i_{n-1} \geq 0}} \frac{k!}{i_1 ! \cdots i_{n-1}!} ~ (\mathrm{ad}_{y_1, \ldots, y_{n-1}})^{i_1} \otimes \cdots \otimes  (\mathrm{ad}_{y_1, \ldots, y_{n-1}})^{i_{n-1}}.
    \end{align}
    Hence 
    \begin{align*}
        &\varphi (x_1, \ldots, x_{n-1}) \triangleleft \varphi (y_1, \ldots, y_{n-1}) \\
       & = (x_1 \otimes \cdots \otimes x_{n-1}) \triangleleft (y_1 \otimes \cdots \otimes y_{n-1})\\
        &= \mathrm{exp} (\mathrm{ad}_{y_1 \otimes \cdots \otimes y_{n-1}}) (x_1 \otimes \cdots \otimes x_{n-1})\\
        &= \sum_{k=0}^\infty \frac{  (\mathrm{ad}_{y_1 \otimes \cdots \otimes y_{n-1}})^k}{k!} (x_1 \otimes \cdots \otimes x_{n-1})\\
        &\stackrel{\textnormal{(\ref{ad-k})}}{=} \sum_{k=0}^\infty \bigg(  \sum_{\substack{ i_1 + \cdots + i_{n-1} = k \\ i_1, \ldots, i_{n-1} \geq 0}} \frac{1}{i_1 ! \cdots i_{n-1}!} ~ (\mathrm{ad}_{y_1, \ldots, y_{n-1}})^{i_1} \otimes \cdots \otimes  (\mathrm{ad}_{y_1, \ldots, y_{n-1}})^{i_{n-1}} \bigg) (x_1 \otimes \cdots \otimes x_{n-1})\\
        &= \sum_{k=0}^\infty ~ \sum_{\substack{ i_1 + \cdots + i_{n-1} = k \\ i_1, \ldots, i_{n-1} \geq 0}} \frac{ (\mathrm{ad}_{y_1, \ldots, y_{n-1}})^{i_1} (x_1) }{i_1 !} \otimes \cdots \otimes \frac{ (\mathrm{ad}_{y_1, \ldots, y_{n-1}})^{i_{n-1}} (x_{n-1}) }{i_{n-1} !}.
    \end{align*}
    This shows that $ \varphi \big( (x_1, \ldots, x_{n-1}) \triangleleft_{ \langle -, \ldots, - \rangle } (y_1, \ldots, y_{n-1})   \big) = \varphi (x_1, \ldots, x_{n-1}) \triangleleft \varphi (y_1, \ldots, y_{n-1})$ which proves the result.
\end{proof}

\section{Central {\em n}-Leibniz algebras and the Yang-Baxter equation}\label{sec3}

In this section, we consider central $n$-Leibniz algebras and show that they give rise to Yang-Baxter operators. Consequently, we define two Yang-Baxter operators from a given $n$-Leibniz algebra (which need not be central) and connect them by a suitable homomorphism.

\medskip

First, recall that a {\bf central Leibniz algebra} is a Leibniz algebra $(\mathfrak{g}, \{ -, - \})$ equipped with a central element, i.e., an element ${\bf 1} \in \mathfrak{g}$ that satisfies $\{ {\bf 1} , x \} = \{ x, {\bf 1} \} = 0$, for all $x \in \mathfrak{g}$. Let $(\mathfrak{g}, \{ -, - \}, 1)$ be a central Leibniz algebra. Then define a linear map $R : \mathfrak{g} \otimes \mathfrak{g} \rightarrow  \mathfrak{g} \otimes \mathfrak{g} $ by 
\begin{align}\label{lebed-exp}
    R (x \otimes y ) = y \otimes x + {\bf 1} \otimes \{ x, y \}, \text{ for } x \otimes y \in \mathfrak{g} \otimes \mathfrak{g}.
\end{align}
In \cite{lebed2}, Lebed showed that $R$ is a Yang-Baxter operator on the vector space $\mathfrak{g}$. More generally, let $\mathfrak{g}$ be a vector space equipped with a bilinear map $\{ - , - \} : \mathfrak{g} \times \mathfrak{g} \rightarrow \mathfrak{g}$. We consider the linear map 
\begin{align*}
\widetilde{R} : ({\bf k} \oplus \mathfrak{g}) \otimes ({\bf k} \oplus \mathfrak{g}) \rightarrow ({\bf k} \oplus \mathfrak{g}) \otimes ({\bf k} \oplus \mathfrak{g}) ~ \text{ by } ~\widetilde{R} ((\lambda, x) \otimes (\mu, y)) = (\mu, y) \otimes (\lambda, x) ~ \! +~ \!(1, 0) \otimes (0, \{x, y \}),
\end{align*}
for $(\lambda. x), (\mu, y )\in {\bf k} \oplus \mathfrak{g}.$ Then $(\mathfrak{g}, \{ -, - \})$ is a Leibniz algebra if and only if $\widetilde{R}$ is a Yang-Baxter operator on the vector space ${\bf k} \oplus \mathfrak{g}$ \cite{baez-crans}.


\medskip

Next, let $(\mathcal{L}, [-, \ldots, -])$ be an $n$-Leibniz algebra. Generalizing the previous concept, an element ${\bf 1} \in \mathcal{L}$ is said to be a {\em central element} if 
\begin{align*}
[ x_1, \ldots, x_{i-1}, {\bf 1}, x_{i+1}, \ldots, x_n ] = 0, \text{ for all } 1 \leq i \leq n \text{ and } x_1, \ldots, x_{i-1}, x_{i+1}, \ldots, x_n \in \mathcal{L}.
\end{align*}
A {\bf central $n$-Leibniz algebra} $(\mathcal{L}, [-, \ldots, -], {\bf 1})$ is an $n$-Leibniz algebra $(\mathcal{L}, [-, \ldots, -])$ equipped with a central element ${\bf 1} \in \mathcal{L}$.

\begin{remark}
Let $(\mathfrak{g}, \{ -, - \}, {\bf 1})$ be a central Leibniz algebra. Then $(\mathcal{L}, [-, \ldots, -], {\bf 1})$ is a central $n$-Leibniz algebra, where the $n$-Leibniz bracket on $\mathcal{L}$ is given in Example \ref{leib-to-n}.
\end{remark}

\begin{thm}
    Let $(\mathcal{L}, [-, \ldots, -], {\bf 1})$ be a central $n$-Leibniz algebra. Then the triple $(\mathcal{L}^{\otimes (n-1)}, \{ ~, ~ \} , {\bf 1}^{\otimes (n-1)})$ is a central Leibniz algebra, where the Leibniz bracket on the space $ \mathcal{L}^{\otimes (n-1)}$ is defined in (\ref{funda-leib-bracket}). Consequently, the map $R : \mathcal{L}^{\otimes (n-1)} \otimes \mathcal{L}^{\otimes (n-1)} \rightarrow \mathcal{L}^{\otimes (n-1)} \otimes \mathcal{L}^{\otimes (n-1)}$ defined by
    \begin{align*}
        R ((x_1 \otimes \cdots \otimes x_{n-1}) \otimes (y_1 \otimes \cdots \otimes y_{n-1}) ) =~& (y_1 \otimes \cdots \otimes y_{n-1}) \otimes (x_1 \otimes \cdots \otimes x_{n-1}) \\
        +& \sum_{i=1}^{n-1} ({\bf 1}^{\otimes (n-1)}) \otimes (x_1 \otimes \cdots \otimes [x_i, y_1, \ldots, y_{n-1}] \otimes \cdots \otimes x_{n-1})
    \end{align*}
    is a Yang-Baxter operator on the vector space $\mathcal{L}^{\otimes (n-1)}$.
\end{thm}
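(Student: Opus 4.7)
The plan is to leverage Proposition \ref{funda-leibniz} together with Lebed's construction recalled in equation (\ref{lebed-exp}). By Proposition \ref{funda-leibniz}, we already know that $(\mathcal{L}^{\otimes (n-1)}, \{-,-\})$ carries a Leibniz algebra structure, so the only genuinely new content in the first assertion is that ${\bf 1}^{\otimes (n-1)}$ is a central element of this Leibniz algebra; once this is in hand, the second assertion follows immediately by specialising (\ref{lebed-exp}) to the central Leibniz algebra $(\mathcal{L}^{\otimes (n-1)}, \{-,-\}, {\bf 1}^{\otimes (n-1)})$ and comparing formulas.

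Concretely, I would first check centrality on the right, i.e. that $\{x_1 \otimes \cdots \otimes x_{n-1}, {\bf 1}^{\otimes (n-1)}\} = 0$ for every pure tensor. Using the definition (\ref{funda-leib-bracket}), this bracket expands as $\sum_{i=1}^{n-1} x_1 \otimes \cdots \otimes [x_i, {\bf 1}, \ldots, {\bf 1}] \otimes \cdots \otimes x_{n-1}$, and each inner bracket $[x_i, {\bf 1}, \ldots, {\bf 1}]$ vanishes because ${\bf 1}$ occupies a non-first slot and is central in $\mathcal{L}$. Similarly, for centrality on the left I would expand $\{{\bf 1}^{\otimes (n-1)}, y_1 \otimes \cdots \otimes y_{n-1}\}$ and observe that every summand has the form ${\bf 1} \otimes \cdots \otimes [{\bf 1}, y_1, \ldots, y_{n-1}] \otimes \cdots \otimes {\bf 1}$, which is zero since ${\bf 1}$ now sits in the first slot of the $n$-Leibniz bracket. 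Extending by bilinearity to arbitrary (not necessarily pure) tensors, this establishes that ${\bf 1}^{\otimes (n-1)}$ is central.

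For the second part, I would simply substitute the central Leibniz algebra $(\mathcal{L}^{\otimes (n-1)}, \{-,-\}, {\bf 1}^{\otimes (n-1)})$ into Lebed's formula (\ref{lebed-exp}). Writing $x = x_1 \otimes \cdots \otimes x_{n-1}$ and $y = y_1 \otimes \cdots \otimes y_{n-1}$, the definition (\ref{funda-leib-bracket}) of $\{x, y\}$ expands the term ${\bf 1}^{\otimes (n-1)} \otimes \{x, y\}$ into exactly $\sum_{i=1}^{n-1} ({\bf 1}^{\otimes (n-1)}) \otimes (x_1 \otimes \cdots \otimes [x_i, y_1, \ldots, y_{n-1}] \otimes \cdots \otimes x_{n-1})$, matching the formula for $R$ displayed in the statement. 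Lebed's theorem then guarantees that $R$ is a Yang-Baxter operator on $\mathcal{L}^{\otimes (n-1)}$.

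The proof contains no serious obstacle: the only delicate point is the verification of left-centrality, where one must notice that centrality of ${\bf 1} \in \mathcal{L}$ in the sense of the $n$-Leibniz algebra (vanishing in every slot, not merely the first) is exactly what is needed. The rest is unpacking definitions and invoking the cited results.
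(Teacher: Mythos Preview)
Your proposal is correct and follows essentially the same approach as the paper: verify that ${\bf 1}^{\otimes (n-1)}$ is central in the Leibniz algebra $(\mathcal{L}^{\otimes(n-1)},\{-,-\})$ by expanding both $\{x_1\otimes\cdots\otimes x_{n-1},{\bf 1}^{\otimes(n-1)}\}$ and $\{{\bf 1}^{\otimes(n-1)},y_1\otimes\cdots\otimes y_{n-1}\}$ via (\ref{funda-leib-bracket}) and invoking centrality of ${\bf 1}$ in $\mathcal{L}$, then apply Lebed's construction (\ref{lebed-exp}) to obtain the Yang-Baxter operator. The paper's proof is exactly these two computations followed by the same citation.
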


\begin{proof}
    For any $x_1 \otimes \cdots \otimes x_{n-1} \in \mathcal{L}^{\otimes (n-1)}$, we observe that
    \begin{align*}
    \{ {\bf 1}^{\otimes (n-1)} , x_1 \otimes \cdots \otimes x_{n-1}  \} =~& \sum_{i=1}^{n-1} {\bf 1} \otimes \cdots \otimes \underbrace{[{\bf 1}, x_1, \ldots, x_{n-1}]}_{i\text{-}\mathrm{th~ place}} \otimes \cdots \otimes {\bf 1} = 0,\\
        \{ x_1 \otimes \cdots \otimes x_{n-1} , {\bf 1}^{\otimes (n-1)} \} =~& \sum_{i=1}^{n-1} x_1 \otimes \cdots \otimes [x_i, {\bf 1}, \ldots, {\bf 1}] \otimes \cdots \otimes x_{n-1} = 0.
    \end{align*}
    This shows that ${\bf 1}^{\otimes (n-1)}$ is a central element in the Leibniz algebra $(\mathcal{L}^{\otimes (n-1)}, \{ ~, ~ \})$. Hence, the first part follows. Finally, the last part follows from the result of Lebed (see the expression (\ref{lebed-exp})).
\end{proof}

In the following, we shall construct two Yang-Baxter operators associated to an (need not be central) $n$-Leibniz algebra. Both constructions are simple and based on embeddings of any $n$-Leibniz algebras to central $n$-Leibniz algebras.

\begin{prop}
    Let $(\mathcal{L}, [-, \ldots, -])$ be an $n$-Leibniz algebra. Then $( \overline{\mathcal{L}^{\otimes (n-1)}}= {\bf k} \oplus \mathcal{L}^{\otimes (n-1)}, \{ \! \! \{ ~, ~ \} \! \! \}, (1, 0) )$ is a central Leibniz algebra, where
    \begin{align*}
       \{ \! \! \{  (\lambda, x_1 \otimes \cdots \otimes x_{n-1}), (\mu, y_1 \otimes \cdots \otimes y_{n-1}) \} \! \! \} = \sum_{i=1}^{n-1} \big( 0,  x_1 \otimes \cdots \otimes [x_i, y_1, \ldots, y_{n-1}] \otimes \cdots \otimes x_{n-1} \big).
    \end{align*}
    Consequently, the map $R_1 :  \overline{\mathcal{L}^{\otimes (n-1)}} \otimes  \overline{\mathcal{L}^{\otimes (n-1)}} \rightarrow  \overline{\mathcal{L}^{\otimes (n-1)}} \otimes  \overline{\mathcal{L}^{\otimes (n-1)}}$ given by
    \begin{align*}
        R_1 \big(  (\lambda, x_1 \otimes \cdots \otimes x_{n-1}) ~\otimes ~ &  (\mu, y_1 \otimes \cdots \otimes y_{n-1})   \big) = (\mu, y_1 \otimes \cdots \otimes y_{n-1}) \otimes (\lambda, x_1 \otimes \cdots \otimes x_{n-1}) \\
       & + \sum_{i=1}^{n-1} \underbrace{(1, 0)}_{\in ~  \overline{\mathcal{L}^{\otimes (n-1)}}  } \otimes~ (0, x_1 \otimes \cdots \otimes [x_i, y_1, \ldots, y_{n-1}] \otimes \cdots \otimes x_{n-1})
    \end{align*}
    is a Yang-Baxter operator on the vector space $ \overline{\mathcal{L}^{\otimes (n-1)}}$.
\end{prop}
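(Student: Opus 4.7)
The plan is to observe that the proposed bracket decouples from the ${\bf k}$-summand: the right-hand side of the defining formula ignores the scalars $\lambda,\mu$ and always lands in $\{0\} \oplus \mathcal{L}^{\otimes(n-1)}$. In fact, direct comparison with (\ref{funda-leib-bracket}) gives the clean identity
\[
\{\!\!\{(\lambda, u), (\mu, v)\}\!\!\} \;=\; (0,\, \{u, v\}) \qquad \text{for all } u,v \in \mathcal{L}^{\otimes(n-1)},\ \lambda,\mu \in {\bf k},
\]
where $\{-,-\}$ is the Leibniz bracket from Proposition~\ref{funda-leibniz}. This single identity drives everything that follows.

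First I would verify the right Leibniz identity on $\overline{\mathcal{L}^{\otimes(n-1)}}$. Writing three arbitrary elements as $a_i = (\lambda_i, u_i)$ and iterating the displayed formula twice, every triple bracket $\{\!\!\{-, \{\!\!\{-, -\}\!\!\}\}\!\!\}$ or $\{\!\!\{\{\!\!\{-, -\}\!\!\}, -\}\!\!\}$ collapses to $(0, \{u_{i_1}, \{u_{i_2}, u_{i_3}\}\})$ or $(0, \{\{u_{i_1}, u_{i_2}\}, u_{i_3}\})$. Hence the right Leibniz identity for $\{\!\!\{-,-\}\!\!\}$ is equivalent to the right Leibniz identity for $\{-,-\}$ on $\mathcal{L}^{\otimes(n-1)}$, which already holds by Proposition~\ref{funda-leibniz}. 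Centrality of $(1,0)$ is even easier: the displayed formula yields $\{\!\!\{(1,0), (\mu, v)\}\!\!\} = (0, \{0, v\}) = 0$ and symmetrically $\{\!\!\{(\mu, v), (1,0)\}\!\!\} = (0, \{v, 0\}) = 0$ by bilinearity of $\{-,-\}$.

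For the Yang-Baxter conclusion I would simply plug the central Leibniz algebra $(\overline{\mathcal{L}^{\otimes(n-1)}}, \{\!\!\{-,-\}\!\!\}, (1,0))$ into Lebed's formula (\ref{lebed-exp}). With ${\bf 1} = (1,0)$ and $\{\!\!\{-,-\}\!\!\}$ in the explicit form recorded above, Lebed's expression $(\mu,v) \otimes (\lambda,u) + (1,0) \otimes \{\!\!\{(\lambda,u),(\mu,v)\}\!\!\}$ unfolds, term by term, into exactly the map $R_1$ stated in the proposition, and Lebed's theorem then supplies both invertibility and the Yang-Baxter equation for free. No genuine obstacle is anticipated; the only point requiring care is verifying the displayed identity for $\{\!\!\{-,-\}\!\!\}$ directly from its indexed definition, so that the reduction to Proposition~\ref{funda-leibniz} is entirely rigorous.
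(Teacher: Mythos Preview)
Your proposal is correct and matches the paper's approach exactly: the paper states this proposition without proof, noting only that the construction is ``simple and based on embeddings of any $n$-Leibniz algebras to central $n$-Leibniz algebras,'' and your observation that $\{\!\!\{(\lambda,u),(\mu,v)\}\!\!\} = (0,\{u,v\})$ is precisely the trivial central extension of the Leibniz algebra $(\mathcal{L}^{\otimes(n-1)}, \{-,-\})$ from Proposition~\ref{funda-leibniz}, after which Lebed's result~(\ref{lebed-exp}) gives $R_1$ immediately.
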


\begin{prop}\label{nl-cnl}
    Let $(\mathcal{L}, [-, \ldots, -])$ be an $n$-Leibniz algebra. Then $(\overline{\mathcal{L}} = {\bf k} \oplus \mathcal{L}, \llbracket -, \ldots, - \rrbracket ,(1, 0))$ is a central $n$-Leibniz algebra, where
    \begin{align*}
        \llbracket (\lambda_1, x_1), \ldots, (\lambda_n, x_n ) \rrbracket = (0, [x_1, \ldots, x_n]), \text{ for all } (\lambda_i, x_i) \in \overline{\mathcal{L}}.
    \end{align*}
    Consequently, the map $R_2 : \overline{\mathcal{L}}^{\otimes (n-1)} \otimes \overline{\mathcal{L}}^{\otimes (n-1)} \rightarrow \overline{\mathcal{L}}^{\otimes (n-1)} \otimes \overline{\mathcal{L}}^{\otimes (n-1)}$ given by
    \begin{align*}
        &R_2 \big( ( (\lambda_1, x_1) \otimes \cdots \otimes (\lambda_{n-1}, x_{n-1})) \otimes ( (\mu_1, y_1) \otimes \cdots \otimes (\mu_{n-1}, y_{n-1}))  \big) \\
        & \quad = ( (\mu_1, y_1) \otimes \cdots \otimes (\mu_{n-1}, y_{n-1})) \otimes ( (\lambda_1, x_1) \otimes \cdots \otimes (\lambda_{n-1}, x_{n-1})) \\
        & \qquad + \sum_{i=1}^{n-1} ((1, 0)^{\otimes (n-1)} ) \otimes \big(   (\lambda_1, x_1) \otimes \cdots \otimes (0, [x_i, y_1, \ldots, y_{n-1}]) \otimes \cdots \otimes (\lambda_{n-1}, x_{n-1})  \big)
    \end{align*}
    is a Yang-Baxter operator on the vector space $\overline{\mathcal{L}}^{\otimes (n-1)} $.
\end{prop}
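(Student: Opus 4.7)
The plan is to derive both halves of the statement in sequence: first I verify that $\llbracket -, \ldots, - \rrbracket$ on $\overline{\mathcal{L}}$ is an $n$-Leibniz bracket with $(1,0)$ central, and then I invoke the preceding theorem (the one producing a Yang-Baxter operator from any central $n$-Leibniz algebra) applied to the triple $(\overline{\mathcal{L}}, \llbracket -, \ldots, - \rrbracket, (1,0))$ to obtain $R_2$ for free.

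For the first step, the key observation is that $\llbracket (\lambda_1, x_1), \ldots, (\lambda_n, x_n) \rrbracket$ depends only on the second components $x_1, \ldots, x_n$ and always lands in the subspace $\{0\} \oplus \mathcal{L}$. Consequently, when I write out both sides of the fundamental identity (\ref{fun-identity}) for $\llbracket -, \ldots, - \rrbracket$, every iterated bracket collapses to the pair $(0, \star)$ where $\star$ is the corresponding expression for $[-, \ldots, -]$ on $\mathcal{L}$. Hence the fundamental identity for $\llbracket -, \ldots, - \rrbracket$ reduces termwise to the fundamental identity for $[-, \ldots, -]$ and follows immediately. Centrality of $(1,0)$ is even more direct: inserting $(1,0)$ into any slot $i$ of $\llbracket -, \ldots, - \rrbracket$ gives $(0, [x_1, \ldots, 0, \ldots, x_n]) = (0,0)$ by the multilinearity of $[-, \ldots, -]$ in the $i$-th argument.

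For the second step, I apply the preceding theorem to the central $n$-Leibniz algebra $(\overline{\mathcal{L}}, \llbracket -, \ldots, - \rrbracket, (1,0))$ just constructed. Unpacking (\ref{funda-leib-bracket}) with the bracket $\llbracket -, \ldots, - \rrbracket$ produces exactly the induced Leibniz bracket on $\overline{\mathcal{L}}^{\otimes (n-1)}$ whose adjoint action appears inside $R_2$, and $(1,0)^{\otimes (n-1)}$ plays the role of the central element. Lebed's formula (\ref{lebed-exp}) then reproduces the displayed expression for $R_2$ slot-by-slot in the sum over $i = 1, \ldots, n-1$, confirming that $R_2$ is a Yang-Baxter operator on $\overline{\mathcal{L}}^{\otimes (n-1)}$.

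There is no genuine obstacle here; the definition of $\llbracket -, \ldots, - \rrbracket$ is rigged so that the $n$-Leibniz axiom and centrality both reduce trivially to properties of $[-, \ldots, -]$ on $\mathcal{L}$, and the Yang-Baxter conclusion is a black-box consequence of the previous theorem. The only point that warrants care is keeping the two summation variables aligned when matching the induced Leibniz bracket on $\overline{\mathcal{L}}^{\otimes (n-1)}$ against the explicit formula stated for $R_2$.
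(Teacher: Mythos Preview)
Your proposal is correct and follows precisely the approach the paper intends: the paper states this proposition without an explicit proof, relying on the word ``Consequently'' to signal that the Yang-Baxter conclusion is a direct application of the preceding theorem (Theorem~3.2) to the central $n$-Leibniz algebra $(\overline{\mathcal{L}}, \llbracket -,\ldots,- \rrbracket, (1,0))$. Your reduction of the fundamental identity and centrality to the corresponding properties of $[-,\ldots,-]$ on $\mathcal{L}$ is exactly the straightforward verification the paper leaves to the reader.
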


Since both the above Yang-Baxter operators are obtained from a given $n$-Leibniz algebra, it is natural to find a relation between the two operators. For this, we first take two central Leibniz algebras $(\mathfrak{g}, \{ -, - \}, {\bf 1}_{\mathfrak{g}})$ and $(\mathfrak{g}', \{ -, - \}', {\bf 1}_{\mathfrak{g}'})$. Suppose $\varphi : \mathfrak{g} \rightarrow \mathfrak{g}'$ is a homomorphism of Leibniz algebras satisfying $\varphi ({\bf 1}_\mathfrak{g}) = {\bf 1}_{\mathfrak{g}'}$. In this case, we say that $\varphi$ is a homomorphism of central Leibniz algebras. Then $\varphi $ satisfies
\begin{align*}
   R_{\mathfrak{g}'} \circ ( \varphi \otimes \varphi ) = ( \varphi \otimes \varphi ) \circ R_\mathfrak{g},
\end{align*}
where $R_\mathfrak{g}$ (resp. $R_{\mathfrak{g}'}$) is the Yang-Baxter operator on the vector space $\mathfrak{g}$ (resp. $\mathfrak{g}'$) obtained by Lebed's construction (\ref{lebed-exp}). Next, let $(\mathcal{L}, [-, \ldots, - ])$ be any given $n$-Leibniz algebra. We define an injective linear map 
 \begin{align}\label{eta-map}
       \eta: \overline{ \mathcal{L}^{\otimes (n-1)}} \rightarrow \overline{\mathcal{L}}^{\otimes (n-1)} ~~ \text{ by } ~~ \eta ( (\lambda, x_1 \otimes \cdots \otimes x_{n-1})) = \lambda \underbrace{(1, 0) \otimes \cdots \otimes (1, 0)}_{(n-1) \mathrm{~ times}} ~+~ (0, x_1) \otimes \cdots \otimes (0, x_{n-1}),
    \end{align}
    for $ (\lambda, x_1 \otimes \cdots \otimes x_{n-1}) \in \overline{ \mathcal{L}^{\otimes (n-1)}} $. Then we have the following. 

\begin{prop}
    The map $\eta : \overline{ \mathcal{L}^{\otimes (n-1)}} \rightarrow \overline{\mathcal{L}}^{\otimes (n-1)}$ defined above is a homomorphism of central Leibniz algebras from $ ( \overline{\mathcal{L}^{\otimes (n-1)}}, \{ \! \! \{ ~, ~ \} \! \! \}, (1, 0) )$ to $(\overline{\mathcal{L}}^{\otimes (n-1)} ,  \{ \! \llbracket ~, ~ \rrbracket \! \}, (1,0)^{\otimes (n-1)} )$, where $(\overline{\mathcal{L}}^{\otimes (n-1)} ,  \{ \! \llbracket ~, ~ \rrbracket \! \}, (1,0)^{\otimes (n-1)} )$ is the central Leibniz algebra obtained from the central $n$-Leibniz algebra $(\overline{\mathcal{L}}, \llbracket -, \ldots, - \rrbracket ,(1, 0))$ given in Proposition \ref{nl-cnl}. Additionally, the map $\eta$ satisfies $R_2 \circ (\eta \otimes \eta) = (\eta \otimes \eta) \circ R_1$.
\end{prop}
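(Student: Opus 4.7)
My plan breaks the proposition into three steps: (a) $\eta$ sends the chosen central element $(1,0)$ to $(1,0)^{\otimes(n-1)}$, (b) $\eta$ preserves the Leibniz bracket, and (c) the intertwining identity $R_2 \circ (\eta \otimes \eta) = (\eta \otimes \eta) \circ R_1$ holds. Part (a) is immediate: substituting $\lambda = 1$ and $x_1 = \cdots = x_{n-1} = 0$ into the defining formula for $\eta$ gives exactly $(1,0)^{\otimes(n-1)}$.

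For (b), I would take $u = (\lambda, x_1 \otimes \cdots \otimes x_{n-1})$ and $v = (\mu, y_1 \otimes \cdots \otimes y_{n-1})$, decompose each of $\eta(u)$ and $\eta(v)$ as the sum of two explicit simple tensors according to the definition of $\eta$, and expand $\{\!\llbracket \eta(u), \eta(v) \rrbracket\!\}$ into four cross terms using bilinearity. The decisive observation is that by Proposition \ref{nl-cnl} the $n$-bracket on $\overline{\mathcal{L}}$ is $\llbracket (\lambda_1, z_1), \ldots, (\lambda_n, z_n) \rrbracket = (0, [z_1, \ldots, z_n])$, which vanishes as soon as any one of the $z_i$ is zero. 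Fed into the Proposition \ref{funda-leibniz} formula for the Leibniz bracket on $\overline{\mathcal{L}}^{\otimes(n-1)}$, this kills every cross term in which $(1,0)$ occupies some slot of the inner $n$-ary bracket, eliminating three of the four pieces. The sole surviving contribution is
\[ \sum_{i=1}^{n-1} (0, x_1) \otimes \cdots \otimes (0, [x_i, y_1, \ldots, y_{n-1}]) \otimes \cdots \otimes (0, x_{n-1}), \]
and one sees by inspection that this is exactly $\eta$ applied to the defining formula for $\{\!\!\{u,v\}\!\!\}$ (because $\eta$ on an element of the form $(0, \cdot)$ simply tensors each slot with the embedding $\mathcal{L} \hookrightarrow \overline{\mathcal{L}}$).

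Part (c) then requires essentially no new work. Once (a) and (b) are established, $\eta$ is a morphism of central Leibniz algebras, so the observation recalled just before the statement — that any homomorphism of central Leibniz algebras intertwines the associated Lebed Yang-Baxter operators — immediately yields $R_2 \circ (\eta \otimes \eta) = (\eta \otimes \eta) \circ R_1$. If one prefers a direct verification, write $R_1(u \otimes v) - v \otimes u = (1,0) \otimes \{\!\!\{u,v\}\!\!\}$ and $R_2(\eta(u) \otimes \eta(v)) - \eta(v) \otimes \eta(u) = (1,0)^{\otimes(n-1)} \otimes \{\!\llbracket \eta(u), \eta(v) \rrbracket\!\}$, and use (a) and (b) to transport the second summand across $\eta \otimes \eta$.

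The only real obstacle is the bookkeeping involved in enumerating the four cross terms in step (b); the conceptual heart of the argument is the one-line observation that the extended $n$-bracket on $\overline{\mathcal{L}}$ annihilates every tuple containing a $(1,0)$ entry.
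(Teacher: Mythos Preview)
Your proposal is correct and follows essentially the same approach as the paper's proof: verify that $\eta$ preserves the central element, compute the Leibniz bracket of $\eta(u)$ and $\eta(v)$ by expanding bilinearly and using that $\llbracket \cdots \rrbracket$ vanishes whenever a $(1,0)$ appears, and then invoke the general fact that homomorphisms of central Leibniz algebras intertwine the associated Yang--Baxter operators. The paper writes the bracket computation out directly without commenting on the cross terms, but the substance is identical.
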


\begin{proof}
    For any $(\lambda, x_1 \otimes \cdots \otimes x_{n-1}) , (\mu, y_1 \otimes \cdots \otimes y_{n-1}) \in  \overline{ \mathcal{L}^{\otimes (n-1)}}$, we observe that
    \begin{align*}
        &\eta \big(  \{ \! \! \{ (\lambda, x_1 \otimes \cdots \otimes x_{n-1}) ,  (\mu, y_1 \otimes \cdots \otimes y_{n-1})  \} \! \! \}  \big) \\
        &= \sum_{i=1}^{n-1} \eta (0, x_1 \otimes \cdots \otimes [x_i, y_1, \ldots, y_{n-1}] \otimes \cdots \otimes x_{n-1}) \\
        &= \sum_{i=1}^{n-1} (0, x_1) \otimes \cdots \otimes (0, [x_i, y_1, \ldots, y_{n-1}]) \otimes \cdots \otimes (0, x_{n-1}) \\
        &=  \{ \! \llbracket \lambda \underbrace{(1,0) \otimes \cdots \otimes (1, 0)}_{n-1 \mathrm{~ times}} ~\! + ~\! (0, x_1) \otimes \cdots \otimes (0, x_{n-1}) ~ \! , ~ \! \mu \underbrace{(1,0) \otimes \cdots \otimes (1, 0)}_{n-1 \mathrm{~ times }} ~ \! + ~\! (0, y_1) \otimes \cdots \otimes (0, y_{n-1}) \rrbracket \! \} \\
        &=  \{ \! \llbracket \eta ( \lambda, x_1 \otimes \cdots \otimes x_{n-1} ) , \eta (\mu, y_1 \otimes \cdots \otimes y_{n-1}) \rrbracket \! \}.
    \end{align*}
    Thus, $\eta$ is a homomorphism of Leibniz algebras. Moreover, we have $\eta (1, 0) = (1, 0)^{\otimes (n-1)}$, which shows that $\eta$ preserves the central elements. This proves the first part. Finally, the second part follows as $\eta$ is a homomorphism of central Leibniz algebras.
\end{proof}
    
\section{Linear {\em n}-racks and the Yang-Baxter equation}\label{sec4}
In this section, we first recall linear racks and then introduce linear $n$-racks as the $n$-ary generalization of linear racks. We show that every linear rack gives rise to a linear $n$-rack. On the other hand, when the underlying coalgebra $C$ is cocommutative, every linear $n$-rack structure on $C$ induces a linear rack structure on the tensor product $C^{\otimes (n-1)}$. Hence, by Lebed's construction \cite{lebed}, one obtains a Yang-Baxter operator on $C^{\otimes (n-1)}$. Further, we show that an $n$-Leibniz algebra $(\mathcal{L}, [-, \ldots, -])$ yields a cocommutative linear $n$-rack structure on $\overline{\mathcal{L}} = {\bf k} \oplus \mathcal{L}$, and therefore, gives rise to a Yang-Baxter operator on $\overline{\mathcal{L}}^{\otimes (n-1)}$.

\begin{defn} \cite{carter,lebed,xu-sheng}
    A {\bf linear shelf} is a pair $(C, \triangleleft)$ consisting of a coassociative counital coalgebra $(C, \Delta, \varepsilon)$ equipped with a coalgebra homomorphism $\triangleleft : C \otimes C \rightarrow C$ that satisfies the following general self-distributive property:
    \begin{align*}
        (u \triangleleft v) \triangleleft w = (u \triangleleft w^{(1)}) \triangleleft (v \triangleleft w^{(2)}), \text{ for all } u, v, w \in C.
    \end{align*}
    A linear shelf $(C, \triangleleft)$ is said to be a {\bf linear rack} if there exists an additional coalgebra homomorphism $\widetilde{\triangleleft} : C \otimes C \rightarrow C$ that makes $(C, \widetilde{\triangleleft})$ into another linear shelf satisfying additionally
    \begin{align}\label{addition-lin}
        (u \triangleleft v^{(2)}) ~\widetilde{\triangleleft}  ~v^{(1)} = \varepsilon (v) u = (u ~\widetilde{\triangleleft} ~ v^{(2)}) \triangleleft v^{(1)}, \text{ for all } u , v \in C.
    \end{align}
\end{defn}

Let $(C, \triangleleft)$ and $(C', \triangleleft')$ be two linear racks. A {\bf homomorphism} of linear racks from $(C, \triangleleft)$ to $(C', \triangleleft')$ is a coalgebra homomorphism $f : C \rightarrow C'$ satisfying $f (u \triangleleft v) = f(u) \triangleleft' f(v)$, for $u, v \in C.$

\begin{exam}\label{exam-lrack}
    For any nonempty set $X$, consider the vector space ${\bf k}[X]$ spanned by the elements of $X$. That is, 
    \begin{align*}
        {\bf k} [X] = \big\{ \sum_{x \in X} \lambda_x ~ \!  x ~ \! | ~ \! \lambda_x \in {\bf k} \text{ and } \lambda_x = 0 \text{ for all but finitely many } x \big\}.
    \end{align*}
    Then ${\bf k} [X]$ is a coassociative counital coalgebra with the coproduct $\Delta : {\bf k} [X] \rightarrow {\bf k} [X] \otimes {\bf k} [X]$ and the counit $\varepsilon : {\bf k} [X] \rightarrow {\bf k}$ which are respectively given by
    \begin{align*}
        \Delta (x) = x \otimes x ~~~~ \text{ and } ~~~~ \varepsilon (x) = 1, \text{ for } x \in X.
    \end{align*}
    If $(X , \triangleleft )$ is a rack then by extending the rack operation $\triangleleft : X \times X \rightarrow X$ linearly to the coalgebra ${\bf k} [X]$, one obtain a linear rack $({\bf k} [X], \triangleleft).$
\end{exam}

\begin{exam}\label{last-examm}\cite{lebed} Let $\mathcal{L}$ be any vector space (not necessarily having any additional structure). Then the vector space $\overline{\mathcal{L}} = {\bf k} \oplus \mathcal{L}$ can be given a cocommutative coassociative counital coalgebra structure with the coproduct $\Delta: \overline{\mathcal{L}} \rightarrow \overline{\mathcal{L}} \otimes \overline{\mathcal{L}}$ and the count $\varepsilon: \overline{\mathcal{L}} \rightarrow {\bf k}$ which are respectively given by 
\begin{align*}
    \Delta ( \lambda , x) :=~& (\lambda , x) \otimes (1 , 0) + (1 , 0) \otimes (0, x) = \lambda (1, 0) \otimes (1, 0) + (0, x) \otimes (1, 0) + (1, 0) \otimes (0, x) \\
    & ~~~~ \text{ and } ~~~~ \varepsilon (\lambda, x) := \lambda, \text{ for } (\lambda, x) \in \overline{\mathcal{L}}.
\end{align*}
Further, if $(\mathcal{L}, \{ - , - \})$ is a Leibniz algebra, then $(\overline{ \mathcal{L}}, \triangleleft)$ is a linear rack, where $\triangleleft : \overline{ \mathcal{L}} \otimes \overline{ \mathcal{L}} \rightarrow \overline{ \mathcal{L}}$ is defined by
\begin{align*}
    (\lambda, x) \triangleleft (\mu, y) = (\lambda \mu ~\! , ~\! \mu x + \{ x, y \}), \text{ for } (\lambda, x), (\mu, y) \in \overline{ \mathcal{L}}.
\end{align*}
\end{exam}

\medskip

In \cite{lebed}, Lebed showed that a linear rack whose underlying coalgebra is cocommutative gives rise to a Yang-Baxter operator. More precisely, one has the following result.

\begin{prop}\label{lebed-prop}   Let $(C, \triangleleft)$ be a linear rack where $C$ is cocommutative. We define a linear map
\begin{align*}
  R^\triangleleft : C \otimes C \rightarrow C \otimes C ~~~ \text{ by } ~  R^\triangleleft (u \otimes v) = v^{(1)} \otimes (u \triangleleft v^{(2)}), \text{ for } u \otimes v \in C \otimes C.
\end{align*}
    Then $R^\triangleleft$ is a Yang-Baxter operator (on the vector space $C$) with the inverse 
    \begin{align*}
        (R^\triangleleft)^{-1} (u \otimes v) = (v ~ \! \widetilde{\triangleleft} ~ \!  u^{(2)}) \otimes u^{(1)}, \text{ for }  u \otimes v \in C \otimes C.
    \end{align*}
\end{prop}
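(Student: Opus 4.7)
The plan is to verify both the invertibility and the Yang-Baxter equation by direct computation in Sweedler notation $\Delta(u) = u^{(1)} \otimes u^{(2)}$, using only coassociativity, cocommutativity of $C$, the fact that $\triangleleft$ and $\widetilde{\triangleleft}$ are coalgebra homomorphisms, the self-distributivity of $\triangleleft$, and the two compatibility relations (\ref{addition-lin}).

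To verify that the displayed formula for $(R^\triangleleft)^{-1}$ is a two-sided inverse, I would compose it on either side with $R^\triangleleft$ and apply coassociativity to expand the resulting iterated comultiplication. For instance, $R^\triangleleft \circ (R^\triangleleft)^{-1}(u \otimes v)$ rewrites, after relabeling, as $u^{(1)} \otimes \big((v ~\! \widetilde{\triangleleft} ~\! u^{(3)}) \triangleleft u^{(2)}\big)$; the second factor collapses by the axiom $(v ~\! \widetilde{\triangleleft} ~\! w^{(2)}) \triangleleft w^{(1)} = \varepsilon(w) v$ from (\ref{addition-lin}), and the counit then yields $u \otimes v$. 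The reverse composition is symmetric, invoking the other equality in (\ref{addition-lin}).

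For the Yang-Baxter equation itself, I would compute both iterated operators on a generic element $u \otimes v \otimes w$. Tracking Sweedler indices through coassociativity, the left side $(R^\triangleleft \otimes \mathrm{Id})(\mathrm{Id} \otimes R^\triangleleft)(R^\triangleleft \otimes \mathrm{Id})$ unwinds to
\[
w^{(1)} \otimes (v^{(1)} \triangleleft w^{(2)}) \otimes \big((u \triangleleft v^{(2)}) \triangleleft w^{(3)}\big),
\]
while the right side, with one additional step in which $\Delta(v \triangleleft w^{(3)})$ is expanded via the coalgebra-homomorphism property of $\triangleleft$ as $(v^{(1)} \triangleleft w^{(3)}) \otimes (v^{(2)} \triangleleft w^{(4)})$, becomes
\[
w^{(1)} \otimes (v^{(1)} \triangleleft w^{(3)}) \otimes \big((u \triangleleft w^{(2)}) \triangleleft (v^{(2)} \triangleleft w^{(4)})\big).
\]
The decisive manipulation is to apply self-distributivity to the third factor of the left side, rewriting $(u \triangleleft v^{(2)}) \triangleleft w^{(3)} = (u \triangleleft w^{(3)}) \triangleleft (v^{(2)} \triangleleft w^{(4)})$ (after relabeling the comultiplication of $w^{(3)}$ through coassociativity). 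The two resulting expressions then differ only by the interchange of $w^{(2)}$ and $w^{(3)}$ in the iterated comultiplication of $w$, which is exactly what cocommutativity of $C$ supplies, so the two sides agree.

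The main obstacle will be the careful bookkeeping of Sweedler indices across the three applications of $R^\triangleleft$ combined with the self-distributivity rewriting. In particular, one must check that cocommutativity is really being invoked on the iterated comultiplication $\Delta^{(3)}(w) \in C^{\otimes 4}$, where it permits the transposition of any two tensor factors in the sum, rather than on a single $\Delta(w)$; and one must verify that the coalgebra-homomorphism expansion of $\Delta(v \triangleleft w^{(3)})$ does not introduce an unintended transposition when the result is composed with the remaining applications of $\triangleleft$.
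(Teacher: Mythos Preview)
Your proof is correct. The paper itself does not prove this proposition: it is quoted verbatim from Lebed \cite{lebed} as a known result, with no argument supplied. Your direct Sweedler-notation computation is exactly the standard way this is established, and the key steps---reducing the inverse check to the two identities in (\ref{addition-lin}) via coassociativity, expanding $\Delta(v \triangleleft w^{(3)})$ via the coalgebra-homomorphism property of $\triangleleft$, and then matching the two sides of the braid relation by self-distributivity followed by a single cocommutativity swap of $w^{(2)}$ and $w^{(3)}$ in $\Delta^{(3)}(w)$---are all sound. The only remark is that your concern about cocommutativity needing to act on $\Delta^{(3)}(w)$ rather than $\Delta(w)$ is well-placed but harmless: cocommutativity of $\Delta$ implies invariance of $\Delta^{(k)}$ under the full symmetric group $S_{k+1}$, so the transposition of any two Sweedler factors is available.
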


\begin{defn}
    A {\bf linear $n$-shelf} is a pair $(C, \langle -, \ldots, - \rangle)$ consisting of a coassociative counital coalgebra $(C, \Delta, \varepsilon)$ equipped with a coalgebra homomorphism $\langle -, \ldots, - \rangle : C^{\otimes n} \rightarrow C$ that satisfies the following distributive property:
    \begin{align}\label{self-dis}
        \langle \langle u_1, \ldots, u_n \rangle, v_1, \ldots, v_{n-1} \rangle = \langle \langle u_1, v_1^{(1)}, \ldots, v^{(1)}_{n-1} \rangle, \ldots , \langle u_n, v_1^{(n)}, \ldots, v^{(n)}_{n-1} \rangle \rangle,
    \end{align}
for all $u_1, \ldots, u_n, v_1, \ldots, v_{n-1} \in C$. A linear $n$-shelf $(C, \langle -, \ldots, - \rangle)$ is said to be a {\bf linear $n$-rack} if there exists an additional coalgebra homomorphism $\ll -, \ldots, - \gg : C^{\otimes n}  \rightarrow C$ that makes $(C, \ll -, \ldots, - \gg)$ into another linear $n$-shelf such that for all $u, v_1, \ldots, v_{n-1} \in C$,
    \begin{align}\label{inv-prop}
        \ll \langle u, v_1^{(2)}, \ldots, v_{n-1}^{(2)} \rangle, v_{n-1}^{(1)}, \ldots, v_1^{(1)} \gg = \langle \ll u, v_1^{(2)}, \ldots, v_{n-1}^{(2)} \gg , v_{n-1}^{(1)}, \ldots, v_1^{(1)} \rangle = \varepsilon (v_1) \cdots \varepsilon (v_{n-1}) u.
    \end{align}
\end{defn}

Let $(C, \langle -, \ldots, - \rangle)$ and $(C', \langle -, \ldots, - \rangle')$ be two linear $n$-racks. A {\bf homomorphism} of linear $n$-racks from $C$ to $C'$ is a coalgebra homomorphism $f : C \rightarrow C'$ satisfying
\begin{align*}
    f ( \langle u_1, \ldots, u_n \rangle ) = \langle f(u_1), \ldots, f(u_n) \rangle', \text{ for all } u_1, \ldots, u_n \in C.
\end{align*}

\begin{exam}\label{exam-20}
    Let $(X, \langle -, \ldots, - \rangle)$ be an $n$-rack. By extending the $n$-rack operation $\langle -, \ldots, - \rangle$ linearly to the coalgebra $({\bf k} [X], \Delta, \varepsilon)$ given in Example \ref{exam-lrack}, one obtains a linear $n$-rack $({\bf k}[X], \langle -, \ldots, - \rangle)$. Conversely, let $(C, \langle -, \ldots, - \rangle)$ be a linear $n$-rack. Then the set of all group-like elements of the coalgebra $C$ (recall that a nonzero element $c \in C$ is said to be a group-like element if $\Delta (c) = c \otimes c$) naturally inherits an $n$-rack structure.
\end{exam}

\begin{thm}\label{thm-lrack-lnrack}
    Let $(C, \triangleleft)$ be a linear rack. For any $n \geq 2$, 
    define a linear map $\langle -, \ldots, - \rangle_\triangleleft : C^{\otimes n} \rightarrow C$ by
    \begin{align}\label{some-n}
        \langle u_1, \ldots, u_n \rangle_\triangleleft := ( \cdots  ((u_1 \triangleleft u_2) \triangleleft u_3) \cdots) \triangleleft u_n, \text{ for } u_1, \ldots, u_n \in C.
    \end{align}
    Then $(C, \langle -, \ldots, - \rangle_\triangleleft)$ is a linear $n$-rack.
    Moreover, if $f : (C, \triangleleft) \rightarrow (C', \triangleleft')$ is a homomorphism of linear racks then $f$ is a homomorphism of corresponding linear $n$-racks from $(C, \langle -, \ldots, - \rangle_\triangleleft )$ to $(C', \langle -, \ldots, - \rangle_{\triangleleft'} )$.
\end{thm}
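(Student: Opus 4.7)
The plan is to exploit the fact that $\langle -, \ldots, - \rangle_\triangleleft$ is just the left-associated iterate of the binary operation $\triangleleft$; consequently every required property should reduce, by induction on $n$, to the corresponding axiom of the linear rack $(C, \triangleleft)$. Since $\triangleleft : C \otimes C \to C$ is by hypothesis a coalgebra homomorphism and iterated compositions (and tensor products) of coalgebra homomorphisms are again such, the operation $\langle -, \ldots, - \rangle_\triangleleft : C^{\otimes n} \to C$ is automatically a coalgebra homomorphism. The substantive content is therefore the distributivity \eqref{self-dis} and the existence of a dual operation satisfying \eqref{inv-prop}.

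The heart of the argument is the following auxiliary identity, which I would prove by induction on $k \geq 2$: for all $u_1, \ldots, u_k, v \in C$,
\[
(u_1 \triangleleft \cdots \triangleleft u_k) \triangleleft v \;=\; (u_1 \triangleleft v^{(1)}) \triangleleft (u_2 \triangleleft v^{(2)}) \triangleleft \cdots \triangleleft (u_k \triangleleft v^{(k)}).
\]
The base case $k = 2$ is precisely the self-distributivity axiom of $(C, \triangleleft)$. For the inductive step I write the left-hand side as $((u_1 \triangleleft \cdots \triangleleft u_{k-1}) \triangleleft u_k) \triangleleft v$, apply self-distributivity once to split the outer $\triangleleft v$ into $((u_1 \triangleleft \cdots \triangleleft u_{k-1}) \triangleleft v^{(1)}) \triangleleft (u_k \triangleleft v^{(2)})$, and then apply the induction hypothesis to the first factor. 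Coassociativity then identifies $(v^{(1)})^{(i)}$ with $v^{(i)}$ for $i \leq k-1$ and $v^{(2)}$ with $v^{(k)}$ in the $k$-fold Sweedler decomposition of $v$. Once this lemma is available, the distributivity \eqref{self-dis} follows by applying it $n-1$ times, once for each of $v_1, \ldots, v_{n-1}$ in succession.

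For the rack condition I would take as candidate dual operation
\[
\ll u_1, \ldots, u_n \gg_\triangleleft \;:=\; \bigl( \cdots \bigl((u_1 \widetilde{\triangleleft}\, u_2) \widetilde{\triangleleft}\, u_3 \bigr) \cdots \bigr) \widetilde{\triangleleft}\, u_n,
\]
where $\widetilde{\triangleleft}$ is the inverse operation furnished by the linear rack structure of $C$. The same reasoning applied verbatim to $\widetilde{\triangleleft}$ shows that $(C, \ll -, \ldots, - \gg_\triangleleft)$ is a linear $n$-shelf. To verify \eqref{inv-prop} I expand the left-hand side and peel off the variables $v_{n-1}, v_{n-2}, \ldots, v_1$ one at a time from the outside: at each stage an expression of the form $(a \triangleleft v_j^{(2)}) \widetilde{\triangleleft}\, v_j^{(1)}$ appears and collapses to $\varepsilon(v_j) a$ by \eqref{addition-lin}, with coassociativity being used to recognise the consecutive Sweedler components of $v_j$ at the two positions where they appear. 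The mirror computation with the roles of $\triangleleft$ and $\widetilde{\triangleleft}$ reversed handles the second equality in \eqref{inv-prop}. Finally, the functoriality statement is immediate: a trivial induction on $n$ shows that any linear rack homomorphism $f$ intertwines the iterated operations $\langle -, \ldots, - \rangle_\triangleleft$ and $\langle -, \ldots, - \rangle_{\triangleleft'}$.

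The main obstacle will be notational rather than conceptual: the nested Sweedler subscripts $(v_j^{(1)})^{(i)}$, $(v_j^{(2)})^{(\ell)}$, and so on, proliferate quickly, and aligning them with the single-level labels $v_j^{(i)}$ appearing on the right-hand side of \eqref{self-dis} requires careful bookkeeping through repeated use of coassociativity. For readability I would first write out the argument in full for $n = 3$ (where all index matchings are visible without ellipses) before stating the general inductive version.
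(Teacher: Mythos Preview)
Your proposal is correct and follows essentially the same route as the paper: the paper likewise establishes the key auxiliary identity $\langle u_1,\ldots,u_n\rangle_\triangleleft \triangleleft v = \langle u_1\triangleleft v^{(1)},\ldots,u_n\triangleleft v^{(n)}\rangle_\triangleleft$ (though by direct expansion rather than explicit induction), uses it iteratively to obtain \eqref{self-dis}, defines the dual operation via $\widetilde{\triangleleft}$, and verifies \eqref{inv-prop} by telescoping the nested $(a\triangleleft v_j^{(2)})\,\widetilde{\triangleleft}\,v_j^{(1)}=\varepsilon(v_j)a$ collapses from the outside in. Your observation that the coalgebra-homomorphism property of $\langle-,\ldots,-\rangle_\triangleleft$ is automatic from closure under composition is a clean shortcut over the paper's direct verification.
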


\begin{proof} Note that $(C, \Delta, \varepsilon)$ is a coassociative counital coalgebra implies that $(C^{\otimes (n-1)}, \Delta_{C^{\otimes (n-1)}}, \varepsilon_{C^{\otimes (n-1)}})$ is so, where
    \begin{align*}
       & \Delta_{C^{\otimes (n-1)}} (u_1 \otimes \cdots \otimes u_{n-1}) = (u_1^{(1)} \otimes \cdots \otimes u^{(1)}_{n-1}) \otimes (u_1^{(2)} \otimes \cdots \otimes u^{(2)}_{n-1}) \\
       & \quad \qquad \text{ and }  ~~~  \varepsilon_{C^{\otimes (n-1)}}  (u_1 \otimes \cdots \otimes u_{n-1}) = \varepsilon (u_1) \cdots \varepsilon (u_{n-1}),
    \end{align*}
    for $u_1 \otimes \cdots \otimes u_{n-1} \in C^{\otimes (n-1)}$. Next, for any $u_1, \ldots, u_n \in C$, we have
    \begin{align*}
        \Delta (\langle u_1, \ldots, u_n \rangle_\triangleleft) 
        &= \Delta  \big( ( \cdots  ((u_1 \triangleleft u_2) \triangleleft u_3) \cdots) \triangleleft u_n    \big) \\
        &= \big( ( \cdots  ((u_1 \triangleleft u_2) \triangleleft u_3) \cdots) \triangleleft u_n \big)^{(1)} \otimes \big(  ( \cdots  ((u_1 \triangleleft u_2) \triangleleft u_3) \cdots) \triangleleft u_n  \big)^{(2)} \\
        &= \big( ( \cdots  ((u_1^{(1)} \triangleleft u_2^{(1)}) \triangleleft u_3^{(1)}) \cdots) \triangleleft u_n^{(1)} \big) \otimes \big(  ( \cdots  ((u_1^{(2)} \triangleleft u_2^{(2)}) \triangleleft u_3^{(2)}) \cdots) \triangleleft u_n^{(2)}  \big) \\
        &= \langle u_1^{(1)}, \ldots, u_n^{(1)} \rangle_\triangleleft \otimes \langle u_1^{(2)}, \ldots, u_n^{(2)} \rangle_\triangleleft \\
        &= (\langle - , \ldots, - \rangle_\triangleleft \otimes \langle - , \ldots, - \rangle_\triangleleft) \big( (u_1^{(1)} \otimes \cdots \otimes u_n^{(1)})  \otimes  (u_1^{(2)} \otimes \cdots \otimes u_n^{(2)})   \big) \\
        &= \big( (\langle - , \ldots, - \rangle_\triangleleft \otimes \langle - , \ldots, - \rangle_\triangleleft) \circ \Delta_{C^{\otimes n}} \big) (u_1 \otimes \cdots \otimes  u_n)
    \end{align*}
    and 
    \begin{align*}
        (\varepsilon \circ \langle - , \ldots, - \rangle_\triangleleft ) (u_1 \otimes \cdots \otimes u_n) =~& \varepsilon (\langle u_1, \ldots, u_n \rangle_\triangleleft) 
= \varepsilon \big(  ( \cdots  ((u_1 \triangleleft u_2) \triangleleft u_3) \cdots) \triangleleft u_n   \big) \\
=~& \varepsilon (u_1) \cdots \varepsilon (u_n) \qquad  (\because \varepsilon (u \triangleleft v) = \varepsilon_{C \otimes C} (u \otimes v) = \varepsilon (u) \varepsilon (v)).
    \end{align*}
    This shows that $\langle -, \ldots, - \rangle_\triangleleft : C^{\otimes n} \rightarrow C$ is a coalgebra homomorphism. Next, for any $u_1, \ldots, u_n , v \in C$, we observe that
    \begin{align*}
        \langle u_1, u_2,  \ldots, u_n \rangle_\triangleleft \triangleleft v =~& \big(  ( \cdots  ((u_1 \triangleleft u_2) \triangleleft u_3) \cdots) \triangleleft u_n  \big) \triangleleft v \\
       =~&   ( \cdots  ( ( (u_1 \triangleleft v^{(1)}) \triangleleft (u_2 \triangleleft v^{(2)}) ) \triangleleft (u_3 \triangleleft v^{(3)}) ) \cdots ) \triangleleft (u_n \triangleleft v^{(n)}) \\
       =~& \langle u_1 \triangleleft v^{(1)}, u_2 \triangleleft v^{(2)}, \ldots, u_n \triangleleft v^{(n)} \rangle_\triangleleft
    \end{align*}
    which in turn implies that
    \begin{align*}
        \langle \langle u_1, \ldots, u_n \rangle_\triangleleft, v_1, \ldots, v_{n-1} \rangle_\triangleleft 
        &=  ( \cdots ( ( \langle u_1, \ldots, u_n \rangle_\triangleleft \triangleleft v_1 ) \triangleleft v_2 ) \cdots ) \triangleleft v_{n-1} \\
        &= \big\langle     ( \cdots  ( (u_1 \triangleleft v_1^{(1)}) \triangleleft v_2^{(1)}) \cdots) \triangleleft v_{n-1}^{(1)} , \ldots,    ( \cdots  ( (u_n \triangleleft v_1^{(n)}) \triangleleft v_2^{(n)}) \cdots) \triangleleft v_{n-1}^{(n)} \big\rangle \\
        &= \big\langle \langle u_1, v_1^{(1)}, \ldots, v_{n-1}^{(1)} \rangle, \ldots, \langle u_n, v_1^{(n)}, \ldots, v_{n-1}^{(n)} \rangle \big\rangle.
    \end{align*}
    This proves the self-distributivity condition (\ref{self-dis}). Hence $(C, \langle -, \ldots, - \rangle_\triangleleft)$ is a linear $n$-shelf.

  Finally, since $(C, \triangleleft)$ is a linear rack, there exists a coalgebra homomorphism $\widetilde{ \triangleleft} : C \otimes C \rightarrow C$ that makes $(C, \widetilde{\triangleleft})$ into a linear shelf satisfying additionally (\ref{addition-lin}). We define a linear map $\langle -, \ldots, - \rangle_{\widetilde{\triangleleft}} : C^{\otimes n} \rightarrow C$ similar to (\ref{some-n}) simply by replacing $\triangleleft$ by $\widetilde{\triangleleft}$. Then $(C,\langle -, \ldots, - \rangle_{\widetilde{\triangleleft}} )$ is a linear $n$-shelf. Moreover, we have
   \begin{align*}
        \langle \langle u, v_1^{(2)}, \ldots, v_{n-1}^{(2)} \rangle_\triangleleft, v_{n-1}^{(1)}, \ldots, v_1^{(1)} \rangle_{\widetilde{\triangleleft}} 
         =~&    \big( \cdots \big(   \big( (  \cdots   ( (u \triangleleft v_1^{(2)}) \triangleleft v_2^{(2)} ) \cdots ) \triangleleft v_{n-1}^{(2)} \big) ~ \! \widetilde{\triangleleft} ~ \! v_{n-1}^{(1)} \big) \cdots \big) ~ \! \widetilde{\triangleleft} ~ \! v_1^{(1)} \\
         =~& \varepsilon (v_1) \cdots \varepsilon (v_{n-1}) u
   \end{align*}
   and similarly, $ \langle \langle u, v_1^{(2)}, \ldots, v_{n-1}^{(2)} \rangle_{\widetilde{\triangleleft}}, v_{n-1}^{(1)}, \ldots, v_1^{(1)} \rangle_{{\triangleleft}} = \varepsilon (v_1) \cdots \varepsilon (v_{n-1}) u$. Hence, the first part follows.

   For the second part, we observe that
   \begin{align*}
       f ( \langle u_1, \ldots, u_n \rangle_\triangleleft) =~& f \big(    ( \cdots  ((u_1 \triangleleft u_2) \triangleleft u_3) \cdots) \triangleleft u_n   \big) \\
       =~&    ( \cdots  (( f(u_1) \triangleleft f(u_2 )) \triangleleft f(u_3) ) \cdots) \triangleleft f(u_n) =  \langle f(u_1), \ldots, f(u_n) \rangle_{\triangleleft'},
   \end{align*}
   for any $u_1, \ldots, u_n \in C$. This shows that $f$ is a homomorphism of corresponding linear $n$-racks.
\end{proof}

The above result can be regarded as the linear version of the construction given in Example \ref{rack-n-rack}. In the following result, we prove the linear version of Proposition \ref{prop-nrack-rack} under the cocommutativity assumption.

\begin{thm}\label{thm-linn-lin}
    Let $(C, \langle -, \ldots , - \rangle)$ be a linear $n$-rack where $C$ is cocommutative. Define a linear map $\triangleleft_{ \langle -, \ldots , - \rangle} : C^{\otimes (n-1)} \otimes C^{\otimes (n-1)} \rightarrow C^{\otimes (n-1)}$ by
    \begin{align}\label{first-n}
        (u_1 \otimes \cdots \otimes u_{n-1}) \triangleleft_{ \langle -, \ldots , - \rangle} (v_1 \otimes \cdots \otimes v_{n-1}) = \langle u_1 , v_1^{(1)}, \ldots, v_{n-1}^{(1)}   \rangle \otimes \cdots \otimes \langle u_{n-1} , v_1^{(n-1)}, \ldots, v_{n-1}^{(n-1)} \rangle,
    \end{align}
    for $u_1 \otimes \cdots \otimes u_{n-1} ~ \! ,  v_1 \otimes \cdots \otimes v_{n-1} \in C^{\otimes (n-1)}$. Then $(C^{\otimes (n-1)}, \triangleleft_{ \langle -, \ldots , - \rangle} ) $ is a linear rack. Moreover, if $f : (C, \langle -, \ldots , - \rangle) \rightarrow (C', \langle -, \ldots , - \rangle')$ is a homomorphism of cocommutative linear $n$-racks then the map $f^{\otimes (n-1)} : C^{\otimes (n-1)} \rightarrow C'^{\otimes (n-1)}$ is a homomorphism of linear racks from $(C^{\otimes (n-1)}, \triangleleft_{\langle -, \ldots , - \rangle })$ to $(C'^{\otimes (n-1)}, \triangleleft_{\langle -, \ldots , - \rangle' })$.
\end{thm}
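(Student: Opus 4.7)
The plan is to verify each axiom of a linear rack for the operation $\triangleleft_{\langle -, \ldots, -\rangle}$ on $C^{\otimes (n-1)}$, and then the functoriality statement. First, I would check that $\triangleleft_{\langle -, \ldots, -\rangle}$ is a coalgebra homomorphism from $C^{\otimes(n-1)} \otimes C^{\otimes(n-1)}$ to $C^{\otimes(n-1)}$. This proceeds slot by slot using the comultiplicativity of $\langle -, \ldots, -\rangle$ (as in the proof of Theorem~\ref{thm-lrack-lnrack}), with cocommutativity of $C$ needed to match the iterated coproduct factors of each $v_k$ produced on the two sides of the coalgebra compatibility square.

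The heart of the argument is the self-distributive law on $C^{\otimes(n-1)}$: for $\mathbf{u}, \mathbf{v}, \mathbf{w} \in C^{\otimes(n-1)}$ I need
\begin{align*}
(\mathbf{u} \triangleleft_{\langle -, \ldots, -\rangle} \mathbf{v}) \triangleleft_{\langle -, \ldots, -\rangle} \mathbf{w} = (\mathbf{u} \triangleleft_{\langle -, \ldots, -\rangle} \mathbf{w}^{(1)}) \triangleleft_{\langle -, \ldots, -\rangle} (\mathbf{v} \triangleleft_{\langle -, \ldots, -\rangle} \mathbf{w}^{(2)}),
\end{align*}
where the Sweedler indices here refer to the coproduct on $C^{\otimes(n-1)}$. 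Writing $\mathbf{u} = u_1 \otimes \cdots \otimes u_{n-1}$ (and similarly for $\mathbf{v},\mathbf{w}$), the $i$-th slot of the left-hand side is $\langle \langle u_i, v_1^{(i)}, \ldots, v_{n-1}^{(i)} \rangle, w_1^{(i)}, \ldots, w_{n-1}^{(i)} \rangle$, which by the $n$-shelf distributivity (\ref{self-dis}) becomes an $n$-fold outer bracket whose entries involve $u_i$, the $v_k^{(i)}$'s, and further iterated coproduct factors of each $w_l^{(i)}$. The $i$-th slot of the right-hand side, unfolded from the definition of $\triangleleft_{\langle -, \ldots, -\rangle}$ together with the coalgebra-homomorphism property of $\langle -, \ldots, -\rangle$ (which commutes coproducts through the bracket), produces an expression of the same shape but with two sources of coproducts on each $w_l$: the outer $\mathbf{w}^{(1)} \otimes \mathbf{w}^{(2)}$ split, and the inner $\Delta^{n-2}$ that $\triangleleft_{\langle -, \ldots, -\rangle}$ then applies. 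Matching the two sides amounts to identifying the $n(n-1)$ iterated coproduct factors of each $w_l$, which is exactly what cocommutativity of $C$ supplies.

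For the partner operation demanded by the linear-rack axiom, I would set
\begin{align*}
(u_1 \otimes \cdots \otimes u_{n-1}) \widetilde{\triangleleft}_{\langle -, \ldots, -\rangle} (v_1 \otimes \cdots \otimes v_{n-1}) := \ll u_1, v_{n-1}^{(1)}, \ldots, v_1^{(1)} \gg \otimes \cdots \otimes \ll u_{n-1}, v_{n-1}^{(n-1)}, \ldots, v_1^{(n-1)} \gg,
\end{align*}
mirroring the reversal of the auxiliary arguments in (\ref{inv-prop}). A parallel argument shows $\widetilde{\triangleleft}_{\langle -, \ldots, -\rangle}$ is a coalgebra homomorphism, and the self-distributivity of $\ll -, \ldots, -\gg$ yields the corresponding $n$-shelf identity for $\widetilde{\triangleleft}_{\langle -, \ldots, -\rangle}$. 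The two mutually inverse identities required by (\ref{addition-lin}) then reduce, slot by slot, to (\ref{inv-prop}), after using cocommutativity of $C$ to identify the pair $(v_l^{(1)})^{(i)} \otimes (v_l^{(2)})^{(i)}$ (coming from the $\mathbf{v} \mapsto \mathbf{v}^{(1)} \otimes \mathbf{v}^{(2)}$ split in $C^{\otimes(n-1)}$ followed by the $i$-th iterated coproduct on each factor) with a single coproduct pair $w_l^{(1)} \otimes w_l^{(2)}$ of the form demanded by (\ref{inv-prop}). The functoriality statement is then immediate: for a linear $n$-rack homomorphism $f : C \to C'$, the tensor power $f^{\otimes(n-1)}$ is visibly a coalgebra homomorphism and slot-wise intertwines $\triangleleft_{\langle -, \ldots, -\rangle}$ with $\triangleleft_{\langle -, \ldots, -\rangle'}$ by the hypothesis on $f$.

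I expect the main obstacle to be the self-distributivity calculation, where two distinct layers of iterated coproducts interact---the outer $C^{\otimes(n-1)}$-level coproduct written as $\mathbf{w}^{(1)} \otimes \mathbf{w}^{(2)}$, and the inner coproducts of individual $w_l$'s demanded by the definition of $\triangleleft_{\langle -, \ldots, -\rangle}$---and cocommutativity of $C$ must be invoked systematically to bring the two Sweedler arrangements on each $w_l$ into agreement.
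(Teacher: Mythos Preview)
Your proposal is correct and follows essentially the same approach as the paper: the paper likewise breaks the proof into (i) verifying that $\triangleleft_{\langle -,\ldots,-\rangle}$ is a coalgebra homomorphism, (ii) the self-distributivity computation via the $n$-shelf identity (\ref{self-dis}) and cocommutativity, (iii) constructing the partner operation exactly as you wrote (with the $v$-arguments reversed inside $\ll -,\ldots,-\gg$) and reducing (\ref{addition-lin}) slot-wise to (\ref{inv-prop}), and (iv) the straightforward functoriality check. Your diagnosis that cocommutativity is needed precisely to reconcile the two layers of iterated coproducts on the $w_l$'s in Step~(ii) is also the crux of the paper's computation.
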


\begin{proof} We divide the proof into the following steps.

  \noindent  {\sf Step 1.} {\em $\triangleleft_{\langle -, \ldots, - \rangle } $ is a coalgebra homomorphism.} For any $u_1 \otimes \cdots \otimes u_{n-1} ~\! , ~ \! v_1 \otimes \cdots \otimes v_{n-1} \in C^{\otimes (n-1)}$,
     \begin{align*}
       & \big( (\triangleleft_{\langle -, \ldots , - \rangle } \otimes \triangleleft_{\langle -, \ldots , - \rangle }) \circ \Delta_{C^{\otimes (n-1)} \otimes C^{\otimes (n-1)}} \big) ((u_1 \otimes \cdots \otimes u_{n-1}) \otimes (v_1 \otimes \cdots \otimes v_{n-1})) \\
        &= \big( (u_1^{(1)} \otimes \cdots \otimes u_{n-1}^{(1)}) \triangleleft_{\langle -, \ldots , - \rangle } (v_1^{(1)} \otimes \cdots \otimes v_{n-1}^{(1)}) \big) \otimes \big( (u_1^{(2)} \otimes \cdots \otimes u_{n-1}^{(2)}) \triangleleft_{\langle -, \ldots , - \rangle } (v_1^{(2)} \otimes \cdots \otimes v_{n-1}^{(2)}) \big) \\
        &= \big(  \langle u_1^{(1)} , v_1^{(1)(1)} , \ldots, v_{n-1}^{(1)(1)} \rangle \otimes \cdots \otimes \langle u_{n-1}^{(1)} , v_1^{(1)(n-1)} , \ldots, v_{n-1}^{(1)(n-1)} \rangle  \big) \\
        & \qquad \qquad \qquad \qquad \otimes \big(  \langle u_1^{(2)} , v_1^{(2)(1)} , \ldots, v_{n-1}^{(2)(1)} \rangle \otimes \cdots \otimes  \langle u_{n-1}^{(2)} , v_1^{(2)(n-1)} , \ldots, v_{n-1}^{(2)(n-1)} \rangle  \big) \\
        &= \big(   \langle u_1, v_1^{(1)}, \ldots, v_{n-1}^{(1)} \rangle^{(1)} \otimes \cdots \otimes \langle  u_{n-1}, v_1^{(n-1)}, \ldots, v_{n-1}^{(n-1)}  \rangle^{(1)} \big) \\
        & \qquad \qquad \qquad \qquad \otimes \big(   \langle u_1, v_1^{(1)}, \ldots, v_{n-1}^{(1)} \rangle^{(2)} \otimes \cdots \otimes \langle  u_{n-1}, v_1^{(n-1)}, \ldots, v_{n-1}^{(n-1)}  \rangle^{(2)}    \big) \\
        &= \Delta_{C^\otimes (n-1)} \big(   \langle u_1, v_1^{(1)}, \ldots, v_{n-1}^{(1)} \rangle \otimes \cdots \otimes  \langle u_{n-1}, v_1^{(n-1)}, \ldots, v_{n-1}^{(n-1)} \rangle \big) \\
        &= (\Delta_{C^\otimes (n-1)} \circ \triangleleft_{\langle -, \ldots , - \rangle }) ((u_1 \otimes \cdots \otimes u_{n-1}) \otimes (v_1 \otimes \cdots \otimes v_{n-1}))
     \end{align*}
     and
     \begin{align*}
         &(\varepsilon_{C^{\otimes (n-1)}} \circ \triangleleft_{\langle -, \ldots , - \rangle }) \big(   (u_1 \otimes \cdots \otimes u_{n-1}) \otimes (v_1 \otimes \cdots \otimes v_{n-1})   \big) \\
         &= \varepsilon_{C^{\otimes (n-1)}} \big(   \langle u_1, v_1^{(1)}, \ldots, v_{n-1}^{(1)} \rangle \otimes  \langle u_2, v_1^{(2)}, \ldots, v_{n-1}^{(2)} \rangle \otimes \cdots \otimes  \langle u_{n-1}, v_1^{(n-1)}, \ldots, v_{n-1}^{(n-1)} \rangle   \big) \\
         &= \varepsilon (u_1) \varepsilon (v_1^{(1)}) \cdots \varepsilon (v_{n-1}^{(1)}) 
         \varepsilon (u_2) \varepsilon (v_1^{(2)}) \cdots \varepsilon (v_{n-1}^{(2)}) \cdots \varepsilon (u_{n-1}) \varepsilon (v_1^{(n-1)}) \cdots \varepsilon (v_{n-1}^{(n-1)}) \\
         &= \varepsilon (u_1) \varepsilon (u_2) \cdots \varepsilon (u_{n-1}) \big(  \varepsilon (v_1^{(1)}) \varepsilon (v_1^{(2)}) \cdots \varepsilon (v_1^{(n-1)})    \big) \cdots \big(  \varepsilon (v_{n-1}^{(1)}) \varepsilon (v_{n-1}^{(2)}) \cdots \varepsilon (v_{n-1}^{(n-1)})   \big) \\
         &= \varepsilon (u_1) \varepsilon (u_2) \cdots \varepsilon (u_{n-1}) \varepsilon (v_1) \cdots \varepsilon (v_{n-1}) \\
         &= (\varepsilon_{C^{\otimes (n-1)} \otimes C^{\otimes (n-1)}}) \big(   (u_1 \otimes \cdots \otimes u_{n-1}) \otimes (v_1 \otimes \cdots \otimes v_{n-1})   \big).
     \end{align*}

     \medskip

   \noindent   {\sf Step 2.} { $\triangleleft_{\langle -, \ldots, - \rangle }$ {\em is a linear shelf.}} For any $u_1  \otimes \cdots \otimes u_{n-1}, ~ \! v_1  \otimes \cdots \otimes v_{n-1} ~ \! , ~ \! w_1 \otimes \cdots \otimes w_{n-1} \in C^{\otimes (n-1)}$, we have
   \begin{align*}
        &\big(  (u_1  \otimes \cdots \otimes u_{n-1}) \triangleleft_{\langle -, \ldots , - \rangle } ( v_1  \otimes \cdots \otimes v_{n-1}) \big) \triangleleft_{\langle -, \ldots , - \rangle } (w_1 \otimes \cdots \otimes w_{n-1}) \\
        &= \big( \langle u_1, v_1^{(1)}, \ldots, v_{n-1}^{(1)} \rangle \otimes \cdots \otimes \langle  u_{n-1}, v_1^{(n-1)}, \ldots, v_{n-1}^{(n-1)} \rangle   \big) \triangleleft_{\langle -, \ldots , - \rangle } (w_1 \otimes \cdots \otimes w_{n-1}) \\
        &= \big\langle \langle u_1, v_1^{(1)}, \ldots, v_{n-1}^{(1)} \rangle , w_1^{(1)}, \ldots, w_{n-1}^{(1)} \big\rangle \otimes \cdots \otimes  \big\langle \langle u_{n-1}, v_1^{(n-1)}, \ldots, v_{n-1}^{(n-1)} \rangle , w_1^{(n-1)}, \ldots, w_{n-1}^{(n-1)} \big\rangle \\
        &= \big\langle \langle u_1, w_1^{(1)(1)}, \ldots, w_{n-1}^{(1) (1)} \rangle, \langle 
  v_1^{(1)}, w_1^{(1) (2)}, \ldots, w_{n-1}^{(1) (2)}  \rangle, \ldots, \langle v_{n-1}^{(1)}, w_1^{(1) (n)} , \ldots, w_{n-1}^{(1)(n)} \rangle  \big\rangle  \otimes \cdots \otimes \\
  & \quad \big\langle \langle u_{n-1}, w_1^{(n-1)(1)}, \ldots, w_{n-1}^{(n-1) (1)} \rangle, \langle 
  v_1^{(n-1)}, w_1^{(n-1) (2)}, \ldots, w_{n-1}^{(n-1) (2)}  \rangle, \ldots, \langle v_{n-1}^{(n-1)}, w_1^{(n-1) (n)} , \ldots, w_{n-1}^{(n-1)(n)} \rangle  \big\rangle \\
  &= \big\langle \langle u_1, w_1^{(1)(1)}, \ldots, w_{n-1}^{(1)(1)} \rangle, \langle v_1, w_1^{(2)(1)}, \ldots, w_{n-1}^{(2)(1)}   \rangle^{(1)}, \ldots, \langle v_{n-1}, w_1^{(2)(n-1)}, \ldots, w_{n-1}^{(2)(n-1)}   \rangle^{(1)} \big\rangle  \otimes \cdots \otimes \\
  & \quad \big\langle \langle u_{n-1}, w_1^{(1)(n-1)}, \ldots, w_{n-1}^{(1)(n-1)} \rangle, \langle v_1, w_1^{(2)(1)}, \ldots, w_{n-1}^{(2)(1)}   \rangle^{(n-1)}, \ldots, \langle v_{n-1}, w_1^{(2)(n-1)}, \ldots, w_{n-1}^{(2)(n-1)}   \rangle^{(n-1)} \big\rangle \\
  &= \big( \langle u_1, w_1^{(1) (1)}, \ldots, w_{n-1}^{(1) (1)}    \rangle \otimes \cdots \otimes  \langle u_{n-1}, w_1^{ (1)(n-1)} , \ldots, w_{n-1}^{ (1)(n-1)} \otimes    \rangle    \big) \\
  & \qquad \qquad \quad \qquad \qquad \qquad \triangleleft_{\langle -, \ldots , - \rangle }   \big(  \langle v_1, w_1^{(2) (1)}, \ldots, w_{n-1}^{(2) (1)} \rangle \otimes \cdots \otimes  \langle v_{n-1}, w_1^{ (2)(n-1)} , \ldots, w_{n-1}^{ (2)(n-1)} \otimes  \rangle  \big) \\
  &= \big(  {\scriptstyle  (u_1 \otimes \cdots \otimes u_{n-1})} \triangleleft_{\langle -, \ldots , - \rangle } {\scriptstyle  (w_1^{(1)} \otimes \cdots \otimes w_{n-1}^{(1)})}   \big) \triangleleft_{\langle -, \ldots , - \rangle } \big( {\scriptstyle  (v_1 \otimes \cdots \otimes v_{n-1})} \triangleleft_{\langle -, \ldots , - \rangle } {\scriptstyle (w_1^{(2)} \otimes \cdots \otimes w_{n-1}^{(2)})}    \big).
   \end{align*}

   \medskip

  \noindent   {\sf Step 3.} {$ \triangleleft_{\langle -, \ldots, - \rangle } $ {\em is a linear rack.} } Since $(C, \langle -, \ldots, - \rangle)$ is a linear $n$-rack, there exists a coalgebra homomorphism $\ll - , \ldots , - \gg : C^{\otimes n} \rightarrow C$ that makes $(C, \ll - , \ldots , - \gg)$ into a linear $n$-shelf satisfying additionally (\ref{inv-prop}). We define a linear map $\triangleleft_{\ll - , \ldots , - \gg} : C^{\otimes (n-1)} \otimes C^{\otimes (n-1)} \rightarrow C^{\otimes (n-1)}$ by
  \begin{align*}
   {\scriptstyle (u_1 \otimes \cdots \otimes u_{n-1})} \triangleleft_{\ll - , \ldots , - \gg }  {\scriptstyle (v_1 \otimes \cdots \otimes v_{n-1}) }
   :=  \ll u_1, v_{n-1}^{(1)}, \ldots, v_1^{(1)} \gg \otimes \cdots \otimes \ll u_{n-1}, v_{n-1}^{(n-1)}, \ldots, v_1^{(n-1)} \gg.
  \end{align*}
  Then   $(C^{\otimes n-1}, \triangleleft_{ \ll - , \ldots, \gg})$ is a linear shelf. Additionally, 
     \begin{align*}
         &\big( (u_1 \otimes \cdots u_{n-1}) \triangleleft_{\langle -, \ldots, - \rangle} (v_1^{(2)} \otimes \cdots \otimes v_{n-1}^{(2)}) \big) \triangleleft_{ \ll - , \cdots ,- \gg } (v_1^{(1)} \otimes \cdots \otimes v_{n-1}^{(1)}) \\
         &= \big(   \langle u_1, v_1^{(2)(1)}, \ldots, v_{n-1}^{(2) (1)} \rangle \otimes \cdots \otimes \langle   u_{n-1} , v_1^{(2) (n-1)}, \ldots, v_{n-1}^{(2) (n-1)}  \rangle    \big) \triangleleft_{ \ll - , \cdots ,- \gg } (v_1^{(1)} \otimes \cdots \otimes v_{n-1}^{(1)}) \\
        & = \ll {\scriptstyle  \langle u_1, v_1^{(2)(1)}, \ldots, v_{n-1}^{(2) (1)} \rangle , v_{n-1}^{(1) (1)}, \ldots, v_{1}^{(1) (1)} } \gg \otimes \cdots \otimes \ll {\scriptstyle \langle u_{n-1}, v_1^{(2)(n-1)}, \ldots, v_{n-1}^{(2) (n-1)} \rangle , v_{n-1}^{(1) (n-1)}, \ldots, v_{1}^{(1) (n-1)} } \gg \\
         &= \big(   \varepsilon (v_1^{(1)}) \cdots \varepsilon ( v_{n-1}^{(1)}) u_1  \big) \otimes \cdots \otimes \big( \varepsilon (v_1^{(n-1)}) \cdots \varepsilon (v_1^{(n-1)}) u_{n-1}   \big) \\
         &= \big( \varepsilon (v_1^{(1)}) \cdots \varepsilon (v_1^{(n-1)})    \big) \cdots \big(   \varepsilon (v_{n-1}^{(1)}) \cdots \varepsilon (v_{n-1}^{(n-1)})   \big) (u_1 \otimes \cdots u_{n-1}) \\
        & = \varepsilon_{C^{\otimes (n-1)}} (v_1 \otimes \cdots \otimes v_{n-1}) (u_1 \otimes \cdots u_{n-1}).
     \end{align*}
     Similarly, we have
     \begin{align*}
        & \big( { (u_1 \otimes \cdots u_{n-1}) }\triangleleft_{\ll -, \ldots, - \gg} (v_1^{(2)} \otimes \cdots \otimes v_{n-1}^{(2)}) \big) \triangleleft_{ \langle - , \cdots ,- \rangle } (v_1^{(1)} \otimes \cdots \otimes v_{n-1}^{(1)}) \\
        & \qquad \quad = \varepsilon_{C^{\otimes (n-1)}} (v_1 \otimes \cdots \otimes v_{n-1}) ( { u_1 \otimes \cdots u_{n-1} }).
     \end{align*}
     This proves that $( C^{\otimes (n-1)} , \triangleleft_{\langle -, \ldots, - \rangle})$ is a linear rack.

     For the second part, we observe that $f : C \rightarrow C'$ is a coalgebra homomorphism implies that $f^{\otimes (n-1)} : C^{\otimes (n-1)} \rightarrow C'^{\otimes (n-1)}$ is also a coalgebra homomorphism. Also, we have
     \begin{align*}
        &f^{\otimes (n-1)} \big(    (u_1 \otimes \cdots \otimes u_{n-1}) \triangleleft_{\langle -, \ldots, - \rangle}  (v_1 \otimes \cdots \otimes v_{n-1})\big) \\
        &= \langle  f(u_1), f(v_1^{(1)}), \ldots, f (v_{n-1}^{(1)})    \rangle' \otimes \cdots \otimes \langle  f (u_{n-1}), f (v_1^{(n-1)}), \ldots, f (v_{n-1}^{(n-1)}) \rangle' \\
        &= \langle  f(u_1), f(v_1)^{(1)}, \ldots, f (v_{n-1})^{(1)}    \rangle' \otimes \cdots \otimes \langle  f (u_{n-1}), f (v_1)^{(n-1)}, \ldots, f (v_{n-1})^{(n-1)} \rangle' \\
        &= (f^{\otimes (n-1)} (u_1 \otimes \cdots \otimes u_{n-1})) \triangleleft_{\langle -, \ldots, - \rangle'} (f^{\otimes (n-1)} (v_1 \otimes \cdots \otimes v_{n-1}))
     \end{align*}
     which shows that $f^{\otimes (n-1)} : C^{\otimes (n-1)} \rightarrow C'^{\otimes (n-1)}$ is a homomorphism of linear racks.
\end{proof}

Combining the above theorem with Proposition \ref{lebed-prop}, we obtain the following result.

\begin{thm}\label{thm-lin-n-soln}
    Let $(C, \langle - , \ldots, - \rangle)$ be a linear $n$-rack where $C$ is cocommutative. Then the linear map $R^{\langle -, \ldots, - \rangle} : C^{\otimes (n-1)} \otimes C^{\otimes (n-1)} \rightarrow C^{\otimes (n-1)} \otimes C^{\otimes (n-1)}$ defined by
    \begin{align*}
        &R^{\langle -, \ldots, - \rangle} \big( (u_1 \otimes \cdots \otimes u_{n-1}) \otimes (v_1 \otimes \cdots \otimes v_{n-1})  \big) \\
        & \qquad = (v_1^{(1)} \otimes \cdots \otimes v^{(1)}_{n-1}) \otimes ( \langle u_1, v_1^{(2)}, \ldots, v_{n-1}^{(2)} \rangle \otimes \cdots \otimes \langle u_{n-1}, v_1^{(n)}, \ldots, v_{n-1}^{(n)} \rangle)
    \end{align*}
    is a Yang-Baxter operator on the vector space $C^{\otimes (n-1)}$ with the inverse map
    \begin{align*}
        &(R^{\langle -, \ldots, - \rangle})^{-1}  \big( (u_1 \otimes \cdots \otimes u_{n-1}) \otimes (v_1 \otimes \cdots \otimes v_{n-1})  \big) \\
        & \qquad = \big(  \ll v_1, u_{n-1}^{(2)}, \ldots, u_{1}^{(2)} \gg \otimes  \cdots \otimes  \ll v_{n-1}, u_{n-1}^{(n)}, \ldots, u_{1}^{(n)} \gg   \big) \otimes (u_1^{(1)} \otimes \cdots \otimes u_{n-1}^{(1)}).
    \end{align*}
\end{thm}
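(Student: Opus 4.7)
The plan is to derive this theorem as a direct application of two earlier results stacked on top of each other. First I would invoke Theorem \ref{thm-linn-lin} to obtain from the cocommutative linear $n$-rack $(C,\langle-,\ldots,-\rangle)$ the linear rack structure $(C^{\otimes(n-1)},\triangleleft_{\langle-,\ldots,-\rangle})$ with its ``inverse companion'' $\triangleleft_{\ll-,\ldots,-\gg}$. Since cocommutativity of $C$ immediately transfers to $C^{\otimes(n-1)}$ (the coproduct on the tensor power is componentwise and the symmetry of the tensor factors is irrelevant to cocommutativity), Proposition \ref{lebed-prop} is applicable to this linear rack.

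Next I would write down the general Lebed formula
\begin{align*}
R^{\triangleleft_{\langle-,\ldots,-\rangle}}(U\otimes V) \;=\; V^{(1)} \otimes \bigl(U\triangleleft_{\langle-,\ldots,-\rangle} V^{(2)}\bigr),
\end{align*}
with $U=u_1\otimes\cdots\otimes u_{n-1}$ and $V=v_1\otimes\cdots\otimes v_{n-1}$, and then unpack it. Because the coproduct on $C^{\otimes(n-1)}$ is given factorwise by $\Delta_{C^{\otimes(n-1)}}(V)=(v_1^{(1)}\otimes\cdots\otimes v_{n-1}^{(1)})\otimes(v_1^{(2)}\otimes\cdots\otimes v_{n-1}^{(2)})$, the first tensor factor on the right becomes $v_1^{(1)}\otimes\cdots\otimes v_{n-1}^{(1)}$. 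Substituting the definition (\ref{first-n}) of $\triangleleft_{\langle-,\ldots,-\rangle}$ into the second tensor factor produces
\begin{align*}
\langle u_1,v_1^{(2)(1)},\ldots,v_{n-1}^{(2)(1)}\rangle\otimes\cdots\otimes\langle u_{n-1},v_1^{(2)(n-1)},\ldots,v_{n-1}^{(2)(n-1)}\rangle.
\end{align*}
An application of coassociativity (relabelling $v_i^{(1)}\otimes v_i^{(2)(1)}\otimes\cdots\otimes v_i^{(2)(n-1)}$ as $v_i^{(1)}\otimes v_i^{(2)}\otimes\cdots\otimes v_i^{(n)}$ in Sweedler notation) then yields precisely the stated formula for $R^{\langle-,\ldots,-\rangle}$.

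The inverse is obtained analogously: the Lebed formula
\begin{align*}
(R^{\triangleleft_{\langle-,\ldots,-\rangle}})^{-1}(U\otimes V) \;=\; \bigl(V\,\triangleleft_{\ll-,\ldots,-\gg}\,U^{(2)}\bigr)\otimes U^{(1)}
\end{align*}
combined with the explicit formula for $\triangleleft_{\ll-,\ldots,-\gg}$ introduced in the proof of Theorem \ref{thm-linn-lin} gives, after the same kind of Sweedler relabelling on $u_i$, the stated expression for $(R^{\langle-,\ldots,-\rangle})^{-1}$. Since Proposition \ref{lebed-prop} guarantees that these two maps are mutually inverse and that the resulting $R$ satisfies the Yang-Baxter equation, the proof is complete.

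The main (and essentially only) obstacle is bookkeeping: one must make the Sweedler indices match cleanly between the ``outer'' coproduct used in Lebed's formula and the ``inner'' coproducts already present inside the definitions (\ref{first-n}) and its $\ll-,\ldots,-\gg$ analogue. Coassociativity of $C$ and cocommutativity (which allows a free reordering of the iterated coproduct components $v_i^{(1)},\ldots,v_i^{(n)}$ when needed) reduce this to a mechanical verification, so no genuinely new structural input is required beyond Theorems \ref{thm-linn-lin} and Proposition \ref{lebed-prop}.
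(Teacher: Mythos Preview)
Your proposal is correct and follows exactly the paper's own approach: the paper simply states that the result is obtained by combining Theorem \ref{thm-linn-lin} with Proposition \ref{lebed-prop}, and you carry out precisely this combination, additionally unpacking the Sweedler notation to verify that the explicit formulas match.
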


\medskip

\medskip

Let $(X, \triangleleft)$ be a rack. Then by Example \ref{exam-rack-to-nrack}, one may construct an $n$-rack $(X, \langle -, \ldots, - \rangle)$ and hence a linear $n$-rack $({\bf k}[X], \langle -, \ldots, - \rangle)$ by Example \ref{exam-20}. On the other hand, from the given rack $(X, \triangleleft)$, by Example \ref{exam-lrack}, one have the linear rack $({\bf k}[X], \triangleleft)$ and hence a linear $n$-rack $({\bf k}[X], \langle -, \ldots, - \rangle_\triangleleft)$ by Theorem \ref{thm-lrack-lnrack}. In the next result, we show that the above two linear $n$-racks coincide, making the following diagram commutative:
\begin{align}
    \xymatrix{
     & \substack{n\text{-}\mathrm{rack} \\ (X, \langle -, \ldots, - \rangle )  } \ar[rr]^{\mathrm{Exam~} \ref{exam-20}} &  & \substack{ \text{linear } n\text{-rack} \\ ({\bf k}[X], \langle -, \ldots, - \rangle ) } \ar@{=}[dd]\\
     \substack{\text{rack} \\ (X, \triangleleft) } \ar@/^0.8pc/[ru]^{\mathrm{Exam~} \ref{exam-rack-to-nrack}} \ar@/_0.8pc/[rd]_{\mathrm{Exam~} \ref{exam-lrack}} &  &  & \\
      & \substack{ \text{linear rack} \\ ({\bf k}[X], \triangleleft ) }\ar[rr]_{\mathrm{Theorem~} \ref{thm-lrack-lnrack} }   & & \substack{ \text{linear } n\text{-rack} \\ ({\bf k}[X], \langle -, \ldots, - \rangle_\triangleleft ). }
    }
\end{align}

\begin{prop}
    Let $(X, \triangleleft)$ be a rack. Then the linear $n$-racks $({\bf k}[X], \langle -, \ldots, - \rangle)$ and  $({\bf k}[X], \langle -, \ldots, - \rangle_\triangleleft)$ are the same. 
\end{prop}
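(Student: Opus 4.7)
The plan is to observe that both linear $n$-rack structures on ${\bf k}[X]$ are obtained by extending a multilinear operation on the basis $X$, so by the universal property of the tensor product it suffices to verify that the two $n$-ary operations agree on all $n$-tuples of basis elements $x_1, \ldots, x_n \in X$.

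First I would trace through each construction. For the top route, Example \ref{exam-rack-to-nrack} equips $X$ with the $n$-ary operation
\[
\langle x_1, \ldots, x_n \rangle := (\cdots ((x_1 \triangleleft x_2) \triangleleft x_3) \cdots ) \triangleleft x_n,
\]
and Example \ref{exam-20} then promotes this to a linear $n$-rack on ${\bf k}[X]$ by linearly extending $\langle -, \ldots, - \rangle$ to $({\bf k}[X])^{\otimes n}$. For the bottom route, Example \ref{exam-lrack} first linearizes the rack operation $\triangleleft$ to a coalgebra homomorphism $\triangleleft : {\bf k}[X] \otimes {\bf k}[X] \rightarrow {\bf k}[X]$, and then Theorem \ref{thm-lrack-lnrack} produces the linear $n$-rack with operation
\[
\langle u_1, \ldots, u_n \rangle_\triangleleft := (\cdots ((u_1 \triangleleft u_2) \triangleleft u_3) \cdots ) \triangleleft u_n
\]
for $u_1, \ldots, u_n \in {\bf k}[X]$.

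The core step is then an immediate computation: when each $u_i = x_i \in X$, the linear extension from Example \ref{exam-lrack} gives $x_i \triangleleft x_j = x_i \triangleleft x_j$ (the right-hand side interpreted in $X$), and iterating this identity $n-1$ times yields
\[
\langle x_1, \ldots, x_n \rangle_\triangleleft = (\cdots ((x_1 \triangleleft x_2) \triangleleft x_3) \cdots ) \triangleleft x_n = \langle x_1, \ldots, x_n \rangle.
\]
Since both $\langle -, \ldots, - \rangle$ and $\langle -, \ldots, - \rangle_\triangleleft$ are $n$-linear and agree on the spanning set $X \times \cdots \times X$ of $({\bf k}[X])^{\otimes n}$, they coincide as operations on ${\bf k}[X]^{\otimes n}$, which proves the two linear $n$-racks are identical.

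There is essentially no obstacle here; the entire content of the proposition is bookkeeping — verifying that the two paths around the square in the diagram preceding the statement compute the same iterated $\triangleleft$-word on basis elements. The only subtlety worth mentioning explicitly is that the linear extension in Example \ref{exam-lrack} coincides with the pointwise action on basis elements, which is precisely what lets the two definitions of the iterated bracket match term-by-term.
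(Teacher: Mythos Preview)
Your proposal is correct and takes essentially the same approach as the paper: both arguments verify the two $n$-ary operations agree on basis elements of ${\bf k}[X]$ by directly computing the iterated $\triangleleft$-word, then appeal to multilinearity. The paper additionally remarks that the underlying coalgebra structures coincide and carries scalars $\lambda_i$ through the computation, but this is cosmetic.
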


\begin{proof}
    It is easy to see that the coassociative counital coalgebra structures on both ${\bf k}[X]$ are the same. Next, for any $\lambda_1 x_1, \ldots, \lambda_n x_n \in {\bf k}[X]$ (with $\lambda_i \in {\bf k}$ and $x_i \in X$), we have
    \begin{align*}
         \langle \lambda_1 x_1, \ldots, \lambda_n x_n \rangle = \lambda_1 \cdots \lambda_n \langle x_1, \ldots, x_n \rangle 
         =~& \lambda_1 \cdots \lambda_n ~ \! ( \cdots ( ( x_1 \triangleleft x_2) \triangleleft x_3 ) \cdots ) \triangleleft x_n \\
         =~&  ( \cdots ( ( \lambda_1 x_1 \triangleleft \lambda_2 x_2) \triangleleft \lambda_3 x_3 ) \cdots ) \triangleleft \lambda_n x_n \\
         =~& \langle \lambda_1 x_1, \ldots, \lambda_n x_n \rangle_\triangleleft,
    \end{align*}
    which shows the desired result.
\end{proof}

On the other hand, suppose we start with an $n$-rack $(X, \langle -, \ldots, - \rangle)$. Then by Proposition \ref{prop-nrack-rack}, one may construct the rack $(X^{\times (n-1)} , \triangleleft)$ and hence a linear rack $({\bf k}[X^{\times (n-1)} ], \triangleleft)$ by Example \ref{exam-lrack}. On the other hand, from the given $n$-rack $(X, \langle -, \ldots , - \rangle)$, by Example \ref{exam-20}, one can construct the linear $n$-rack $({\bf k} [X], \langle -, \ldots, - \rangle)$ and hence a linear rack structure $( {\bf k} [X]^{\otimes (n-1)}, \triangleleft_{\langle -, \ldots, - \rangle})$ by Theorem \ref{thm-linn-lin}. The following result demonstrates the existence of a linear rack homomorphism $\psi$ between the above two linear racks, making the following diagram commutative:
\begin{align}
    \xymatrix{
     & \substack{   \text{rack} \\ (X^{\times (n-1)}, \triangleleft)   }\ar[rr]^{\mathrm{Exam~}\ref{exam-lrack}} &  & \substack{ \text{linear rack}\\ ({\bf k}[X^{\times (n-1)}], \triangleleft ) }\ar@{-->}[dd]^\psi\\
     \substack{ n\text{-rack} \\ (X, \langle - , \ldots, - \rangle)  } \ar@/^0.8pc/[ru]^{\mathrm{Proposition~} \ref{prop-nrack-rack}} \ar@/_0.8pc/[rd]_{\mathrm{Example ~} \ref{exam-20}} &  &  & \\
      & \substack{\text{linear } n\text{-rack}\\ ({\bf k}[X], \langle - , \ldots, -\rangle)  } \ar[rr]_{\mathrm{Theorem~}  \ref{thm-linn-lin}} & & \substack{ \text{linear rack }\\ ({\bf k}[X]^{\otimes (n-1)}, \triangleleft_{\langle -, \ldots, - \rangle}).    }
    }
\end{align}

\begin{thm}
    Let $(X, \langle -, \ldots , - \rangle)$ be an $n$-rack. Then the linear map $\psi : {\bf k}[X^{\times (n-1)} ] \rightarrow {\bf k} [X]^{\otimes (n-1)}$ defined by
    \begin{align*}
        \psi (\lambda (x_1, \ldots, x_{n-1})) = \lambda ~\! x_1 \otimes \cdots \otimes x_{n-1}, \text{ for } \lambda \in {\bf k} \text{ and } (x_1, \ldots, x_{n-1}) \in X^{\times (n-1)}
    \end{align*}
    is a linear rack homomorphism from $({\bf k}[X^{\times (n-1)} ], \triangleleft)$ to $( {\bf k} [X]^{\otimes (n-1)}, \triangleleft_{\langle - , \ldots, - \rangle})$.
\end{thm}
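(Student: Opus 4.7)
The plan is to check the two requirements of a linear rack homomorphism separately: first that $\psi$ is a coalgebra homomorphism, and second that it intertwines the two rack operations. Both verifications reduce to elementary computations on basis elements, exploiting the fact that every $x \in X$ is a group-like element of $\mathbf{k}[X]$, i.e.\ $\Delta(x) = x \otimes x$ and $\varepsilon(x) = 1$.

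For the coalgebra part, I would observe that the tuples $(x_1, \ldots, x_{n-1}) \in X^{\times(n-1)}$ form a basis of group-like elements of $\mathbf{k}[X^{\times(n-1)}]$. Under $\psi$, such a basis element is sent to $x_1 \otimes \cdots \otimes x_{n-1}$, which is again group-like in $\mathbf{k}[X]^{\otimes(n-1)}$ because a tensor product of group-like elements is group-like. Hence $\Delta \circ \psi = (\psi \otimes \psi) \circ \Delta$ and $\varepsilon \circ \psi = \varepsilon$ hold on basis elements and extend linearly. This step is completely routine.

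For the compatibility with $\triangleleft$, it suffices by linearity to check equality on a pair of basis elements $(x_1, \ldots, x_{n-1})$ and $(y_1, \ldots, y_{n-1})$ of $\mathbf{k}[X^{\times(n-1)}]$. On the one hand, using the definition of $\triangleleft$ on $\mathbf{k}[X^{\times(n-1)}]$ (the linear extension of the set-theoretic rack from Proposition \ref{prop-nrack-rack}), one has
\[
\psi\bigl((x_1, \ldots, x_{n-1}) \triangleleft (y_1, \ldots, y_{n-1})\bigr) = \langle x_1, y_1, \ldots, y_{n-1}\rangle \otimes \cdots \otimes \langle x_{n-1}, y_1, \ldots, y_{n-1}\rangle.
\]
On the other hand, by the definition of $\triangleleft_{\langle -, \ldots, -\rangle}$ in (\ref{first-n}),
\[
\psi(x_1, \ldots, x_{n-1}) \triangleleft_{\langle -, \ldots, -\rangle} \psi(y_1, \ldots, y_{n-1}) = \langle x_1, y_1^{(1)}, \ldots, y_{n-1}^{(1)}\rangle \otimes \cdots \otimes \langle x_{n-1}, y_1^{(n-1)}, \ldots, y_{n-1}^{(n-1)}\rangle.
\]
Since each $y_i \in X$ is group-like in $\mathbf{k}[X]$, every iterated comultiplication factor $y_i^{(j)}$ equals $y_i$, so the two expressions agree.

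I do not expect a genuine obstacle; the only point to be careful about is ensuring that the Sweedler-notation factors $y_i^{(j)}$ appearing in (\ref{first-n}) actually collapse to $y_i$ (which does require the group-likeness of the basis elements, not merely cocommutativity) and then invoking bilinearity to conclude the identity on all of $\mathbf{k}[X^{\times(n-1)}] \otimes \mathbf{k}[X^{\times(n-1)}]$.
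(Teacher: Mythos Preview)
Your proposal is correct and follows essentially the same approach as the paper's proof: verify the coalgebra homomorphism condition and the rack-compatibility separately on basis elements, then extend by (bi)linearity. The paper writes out the computations explicitly (carrying along scalar factors $\lambda,\mu$), whereas you streamline the argument by invoking the group-likeness of the basis elements to collapse the Sweedler factors $y_i^{(j)}$ to $y_i$; this is exactly the mechanism behind the paper's computation.
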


\begin{proof}
    For any $\lambda (x_1, \ldots, x_{n-1}) \in {\bf k} [ X^{\times (n-1)} ]$, we observe that
    \begin{align*}
        (\Delta_{{\bf k} [X]^{\otimes (n-1)} } \circ \psi) \big( \lambda (x_1, \ldots, x_{n-1}) \big) =~& \Delta_{{\bf k} [X]^{\otimes (n-1)} } ( \lambda x_1 \otimes \cdots \otimes x_{n-1}) \\
        =~& \lambda (x_1 \otimes \cdots \otimes x_{n-1}) \otimes (x_1 \otimes \cdots \otimes x_{n-1}) \\
        =~& (\psi \otimes \psi) \big(  \lambda (x_1, \ldots, x_{n-1}) \otimes   (x_1, \ldots, x_{n-1})  \big) \\
        =~& \big( (\psi \otimes \psi) \circ \Delta_{{\bf k} [X^{\times (n-1)}]}    \big)
    \end{align*}
    and 
    \begin{align*}
         ( \varepsilon_{{\bf k} [X]^{\otimes (n-1)} } \circ \psi) \big( \lambda (x_1, \ldots, x_{n-1}) \big) = \varepsilon_{{\bf k} [X]^{\otimes (n-1)} } (\lambda x_1 \otimes \cdots \otimes x_{n-1}) = \lambda = \varepsilon_{{\bf k} [X^{\times (n-1)}]} \big( \lambda (x_1, \ldots, x_{n-1}) \big).
    \end{align*}
    Thus $\psi : ( {\bf k}[X^{\times (n-1)} ], \Delta_{{\bf k} [X^{\times (n-1)}] }, \varepsilon_{ {\bf k} [X^{\times (n-1)}]}  ) \longrightarrow ( {\bf k} [X]^{\otimes (n-1)}, \Delta_{{\bf k} [X]^{\otimes (n-1)} } , \varepsilon_{{\bf k} [X]^{\otimes (n-1)} }  )$ is a coalgebra homomorphism. Finally, for any $\lambda (x_1, \ldots, x_{n-1}), ~\! \mu (y_1, \ldots, y_{n-1}) \in {\bf k} [X^{\times (n-1)}]$, we also have
    \begin{align*}
        \psi \big( \lambda (x_1, \ldots, x_{n-1}) \triangleleft  \mu (y_1, \ldots, y_{n-1})   \big) 
        =~& \lambda \mu ~\! \psi \big( (x_1, \ldots, x_{n-1}) \triangleleft (y_1, \ldots, y_{n-1})   \big) \\
        =~& \lambda \mu ~\! \psi \big( \langle x_1, y_1, \ldots, y_{n-1} \rangle, \ldots,  \langle x_{n-1}, y_1, \ldots, y_{n-1} \rangle    \big) \\
        =~& \lambda \mu ~\! \langle x_1, y_1, \ldots, y_{n-1} \rangle \otimes \cdots \otimes  \langle x_{n-1}, y_1, \ldots, y_{n-1} \rangle  \\
        =~& (\lambda x_1 \otimes \cdots \otimes x_{n-1}) \triangleleft_{ \langle - , \ldots , - \rangle} ( \mu y_1 \otimes \cdots \otimes y_{n-1} ) \\
        =~& \psi (  \lambda (x_1, \ldots, x_{n-1})) \triangleleft_{ \langle -, \ldots, - \rangle }  \psi ( \mu (y_1, \ldots, y_{n-1}) ).
    \end{align*}
    This shows that $\psi$ is a linear rack homomorphism.
\end{proof}

\medskip

In Example \ref{last-examm} we have seen that every Leibniz algebra gives rise to a linear rack. This construction has been generalized recently to the context of $3$-Leibniz algebras. More precisely, it has been shown by Xu and Sheng \cite{xu-sheng} that $3$-Leibniz algebras yield trilinear racks. In the following, we shall consider its $n$-ary generalization.

Let $\mathcal{L}$ be any vector space equipped with an $n$-linear operation $[-, \ldots, -] : \mathcal{L} \times \cdots \times \mathcal{L} \rightarrow \mathcal{L}$. For $\overline{\mathcal{L}} = {\bf k} \oplus \mathcal{L}$, we define a linear map $\langle -, \ldots, - \rangle : \overline{ \mathcal{L}}^{\otimes n} \rightarrow \overline{ \mathcal{L}}$ by
\begin{align*}
    \langle (\lambda_1, x_1), (\lambda_2, x_2), \ldots,(\lambda_n, x_n) \rangle = \big( \lambda_1 \cdots \lambda_n ~ \! , ~ \! \lambda_2 \cdots \lambda_n x_1 + [x_1, \ldots, x_n ] \big),
\end{align*}
for $(\lambda_1, x_1),  \ldots,(\lambda_n, x_n) \in \overline{\mathcal{L}}.$ Then we have the following result.

\begin{thm}\label{thm-nleib-lin}
    Let $(\mathcal{L}, [-, \ldots, -])$ be an $n$-Leibniz algebra. Then the pair $(\overline{\mathcal{L}}, \langle -, \ldots, - \rangle)$ is a linear $n$-rack, where the coassociative counital coalgebra structure on $\overline{\mathcal{L}}$ is defined in Example \ref{last-examm}. Moreover, if $\varphi : \mathcal{L} \rightarrow \mathcal{L}'$ is a homomorphism of $n$-Leibniz algebras then the map $\overline{\varphi} : \overline{\mathcal{L}} \rightarrow \overline{\mathcal{L}'}$, $\overline{\varphi} (\lambda, x) = (\lambda, \varphi(x))$, for $(\lambda, x) \in \overline{\mathcal{L}}$, is a homomorphism between the corresponding linear $n$-racks.
\end{thm}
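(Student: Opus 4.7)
The strategy is to decompose elements $(\lambda, x) \in \overline{\mathcal{L}}$ as $\lambda e + \tilde{x}$, where $e := (1, 0)$ is the group-like generator and $\tilde{x} := (0, x)$ is the primitive part. Using multilinearity of the bracket together with the observation that $[z_1, \ldots, z_n] = 0$ whenever some $z_i = 0$, the defining formula for $\langle -, \ldots, - \rangle$ collapses to the compact expression
$$\langle a_1, \ldots, a_n \rangle = \lambda_1 \cdots \lambda_n \, e \, + \, \lambda_2 \cdots \lambda_n \, \tilde{x}_1 \, + \, \widetilde{[x_1, \ldots, x_n]} \qquad (a_i = \lambda_i e + \tilde{x}_i),$$
exhibiting a clean decomposition into a ``group-like'' piece, an ``$x_1$-scaled primitive'' piece, and a ``bracket-primitive'' piece. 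This formula drives the entire verification.

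First I verify that $\langle -, \ldots, - \rangle$ is a coalgebra homomorphism: the counit condition is immediate since $\varepsilon$ extracts the scalar $\lambda_1 \cdots \lambda_n$, and comultiplicativity follows from $\Delta(e) = e \otimes e$ together with the primitivity of $\tilde{x}_1$ and $\widetilde{[x_1, \ldots, x_n]}$, so applying $\Delta$ to the right-hand side of the compact formula matches the expansion of $(\langle - \rangle \otimes \langle - \rangle) \circ \Delta_{\overline{\mathcal{L}}^{\otimes n}}$ term by term after killing all contributions that contain a $0$-slot inside the bracket. For self-distributivity (\ref{self-dis}), I apply the compact formula twice to the left-hand side; the only nontrivial piece is $\widetilde{[[x_1, \ldots, x_n], y_1, \ldots, y_{n-1}]}$, which expands via the fundamental identity (\ref{fun-identity}) as $\sum_{i=1}^n \widetilde{[x_1, \ldots, [x_i, y_1, \ldots, y_{n-1}], \ldots, x_n]}$. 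On the right-hand side, each $b_j = \beta_j e + \tilde{y}_j$ has iterated coproduct $\Delta^{(n-1)}(b_j) = \beta_j e^{\otimes n} + \sum_{k=1}^n e^{\otimes (k-1)} \otimes \tilde{y}_j \otimes e^{\otimes (n-k)}$, so the Sweedler sum a priori has $(n+1)^{n-1}$ terms; however, $A_i = \langle a_i, b_1^{(i)}, \ldots, b_{n-1}^{(i)} \rangle$ vanishes unless for every $j$ either all $b_j^{(\cdot)}$-primitives land in slot $i$ or none do, which leaves only the ``diagonal'' contributions: one where every $b_j$ contributes its group-like part (giving the three non-bracket terms times $\beta_1 \cdots \beta_{n-1}$), and $n$ more where all primitives land in the same slot $i_0$ (giving precisely the $n$ summands produced by the fundamental identity). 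Matching term-by-term completes the $n$-shelf axiom.

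For the linear $n$-rack structure I introduce the dual operation
$$\ll (\lambda_1, x_1), \ldots, (\lambda_n, x_n) \gg := \big( \lambda_1 \cdots \lambda_n \, , \, \lambda_2 \cdots \lambda_n \, x_1 - [x_1, x_n, x_{n-1}, \ldots, x_2] \big),$$
where the last $n-1$ bracket-slots are reversed and an overall sign is inserted; a direct check using the fundamental identity shows that $-[x_1, x_n, x_{n-1}, \ldots, x_2]$ is again an $n$-Leibniz bracket on $\mathcal{L}$, so by the previous two steps $(\overline{\mathcal{L}}, \ll - \gg)$ is a coalgebra-homomorphism linear $n$-shelf. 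For the inversion identity (\ref{inv-prop}), I expand $\ll \langle u, v_1^{(2)}, \ldots, v_{n-1}^{(2)} \rangle, v_{n-1}^{(1)}, \ldots, v_1^{(1)} \gg$ using the Sweedler decomposition $\Delta(v_j) = \beta_j e \otimes e + \tilde{y}_j \otimes e + e \otimes \tilde{y}_j$; again by multilinear vanishing only a few branches survive, producing $\varepsilon(v_1) \cdots \varepsilon(v_{n-1}) u$ from the ``all group-like'' diagonal, together with two bracket contributions, namely $+\widetilde{[w, y_1, \ldots, y_{n-1}]}$ from the branch with $v_j^{(2)} = \tilde{y}_j$ for every $j$, and $-\widetilde{[w, y_1, \ldots, y_{n-1}]}$ from the branch with $v_j^{(1)} = \tilde{y}_j$ for every $j$ (here the reversal in $\ll - \gg$ precisely undoes the reversal of Sweedler indices in (\ref{inv-prop})), which cancel. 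The second equality in (\ref{inv-prop}) is identical with the roles of $\langle - \rangle$ and $\ll - \gg$ exchanged.

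Finally, for the naturality statement, $\overline{\varphi}$ clearly preserves $e$ and the primitive subspace, hence is a coalgebra homomorphism; applying the compact formula on both sides, the identity $\overline{\varphi}(\langle a_1, \ldots, a_n \rangle) = \langle \overline{\varphi}(a_1), \ldots, \overline{\varphi}(a_n) \rangle'$ reduces to $\varphi([x_1, \ldots, x_n]) = [\varphi(x_1), \ldots, \varphi(x_n)]'$, which is exactly the hypothesis that $\varphi$ is an $n$-Leibniz homomorphism. The main obstacle in the whole argument is the Sweedler bookkeeping in step two (and the analogous bookkeeping in step three); the key point that makes it tractable is the massive vanishing produced by multilinearity of $[-, \ldots, -]$, which collapses the a priori $(n+1)^{n-1}$ Sweedler branches to just $n+1$ surviving contributions and reduces the $n$-shelf axiom to a single application of (\ref{fun-identity}).
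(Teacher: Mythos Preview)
Your proof is correct and follows essentially the same route as the paper's: verify that $\langle -,\ldots,-\rangle$ is a coalgebra map, check self-distributivity by expanding the Sweedler sums and invoking the fundamental identity (\ref{fun-identity}), introduce the same auxiliary operation $\ll -,\ldots,-\gg$ with the reversed-and-negated bracket, and verify the inversion identity (\ref{inv-prop}) by direct cancellation. Your group-like/primitive bookkeeping ($e$ versus $\tilde{x}$) is just a notational repackaging of the paper's explicit pair computations.

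One small point worth flagging: in your self-distributivity paragraph you say the $n$ ``diagonal'' branches give ``precisely the $n$ summands produced by the fundamental identity''. This is not quite accurate for the branch $i_0=1$: because the compact formula has the extra $\lambda_2\cdots\lambda_n\,\tilde{x}_1$ term, that branch also produces $\lambda_2\cdots\lambda_n\,\widetilde{[x_1,y_1,\ldots,y_{n-1}]}$, which matches the corresponding cross-term on the left-hand side. The term-by-term match still works, so there is no gap, but the summary undercounts by one term. A nice touch in your write-up that the paper leaves implicit is the observation that $-[x_1,x_n,\ldots,x_2]$ is itself an $n$-Leibniz bracket, so Steps~1--2 apply to $\ll-\gg$ automatically rather than ``similarly''.
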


\begin{proof}
     {\sf Step 1.} {\em ${\langle -, \ldots, - \rangle } : \overline{\mathcal{L}}^{\otimes n} \rightarrow \overline{\mathcal{L}}$ is a coalgebra homomorphism.} For $(\lambda_1, x_1), \ldots, (\lambda_n, x_n) \in \overline{\mathcal{L}}$, we have
     \begin{align*}
         &\big( ( \langle -, \ldots, - \rangle \otimes \langle -, \ldots, - \rangle ) \circ \Delta_{  \overline{\mathcal{L}}^{\otimes n}   } \big) ( (\lambda_1, x_1) \otimes \cdots \otimes (\lambda_n, x_n)) \\
         &= \langle   (\lambda_1, x_1)^{(1)}, \ldots, (\lambda_n, x_n)^{(1)}  \rangle \otimes \langle (\lambda_1, x_1)^{(2)}, \ldots, (\lambda_n, x_n)^{(2)}    \rangle \\
         &=  \langle (\lambda_1, x_1), \ldots , (\lambda_n, x_n ) \rangle \otimes \langle (1, 0), \ldots, (1, 0) \rangle  ~+ ~\langle (1, 0), (\lambda_2, x_2) , \ldots, (\lambda_n, x_n) \rangle \otimes \langle (0, x_1) , (1, 0), \ldots , (1, 0) \rangle \\
         & \qquad + \langle (1, 0), \ldots , (1, 0) \rangle \otimes \langle (0, x_1), \ldots, (0, x_n ) \rangle \\
         &= (\lambda_1 \cdots \lambda_n ~ \! , ~ \! \lambda_2 \cdots \lambda_n x_1 + [x_1, \ldots, x_n]) \otimes (1 , 0) ~ +~ (\lambda_2 \cdots \lambda_n, 0) \otimes (0, x_1) ~+~ (1, 0) \otimes (0, [x_1, \ldots, x_n])\\
         &= (\lambda_1 \cdots \lambda_n ~ \! , ~ \! \lambda_2 \cdots \lambda_n x_1 + [x_1, \ldots, x_n]) \otimes (1 , 0) ~ +~ (1, 0) \otimes (0, \lambda_2 \cdots \lambda_n x_1 + [x_1, \ldots, x_n]) \\
         &= \Delta (\lambda_1 \cdots \lambda_n ~ \!, ~ \! \lambda_2 \cdots \lambda_n x_1 + [x_1, \ldots, x_n ]) \\
         &= \Delta \langle (\lambda_1, x_1), \ldots, (\lambda_n, x_n) \rangle
     \end{align*}
     and 
     \begin{align*}
         \varepsilon \langle (\lambda_1, x_1), \ldots, (\lambda_n, x_n) \rangle =~& \varepsilon ( (\lambda_1 \cdots \lambda_n ~ \!, ~ \! \lambda_2 \cdots \lambda_n x_1 + [x_1, \ldots, x_n ])) \\
         =~& \lambda_1 \cdots \lambda_n = \varepsilon_{ \overline{\mathcal{L}}^{\otimes n} } ( (\lambda_1, x_1) \otimes \cdots \otimes (\lambda_n, x_n)    ).
     \end{align*}

     \medskip

  \noindent   {\sf Step 2.} {\em  $( \overline{\mathcal{L}}, {\langle -, \ldots, - \rangle } )$ is a linear $n$-shelf.}  First, note that the map $\Delta^{n-1} : \overline{\mathcal{L}} \rightarrow \overline{\mathcal{L}}^{\otimes n}$ is given by
     \begin{align*}
         \Delta^{n-1} (\lambda, x) = (\lambda , x) \otimes (1, 0) \otimes \cdots \otimes (1, 0) ~ + ~ \sum_{i=1}^{n-1} (1, 0)^{\otimes i} \otimes (0, x) \otimes (1, 0)^{\otimes (n-1-i)}.
     \end{align*}
     Hence, for any $(\lambda_1, x_1), \ldots , (\lambda_n, x_n), (\mu_1, y_1), \ldots, (\mu_{n-1}, y_{n-1}) \in \overline{\mathcal{L}}$,
     \begin{align*}
         &\big\langle \langle (\lambda_1, x_1), (\mu_1, y_1)^{(1)}, \ldots, (\mu_{n-1}, y_{n-1})^{(1)} \rangle, \ldots, \langle (\lambda_n, x_n), (\mu_1, y_1)^{(n)}, \ldots, (\mu_{n-1}, y_{n-1})^{(n)} \rangle \big\rangle \\
         &=  \big\langle \langle (\lambda_1, x_1), (\mu_1, y_1), \ldots , (\mu_{n-1}, y_{n-1}) \rangle ~ \! , ~ \! \langle (\lambda_2, x_2), (1, 0), \ldots, (1, 0) \rangle ~ \! , \ldots, ~ \! \langle (\lambda_n, x_n), (1, 0), \ldots, (1, 0) \rangle   \rangle \\
        & \quad + \big\langle \langle (\lambda_1, x_1), (1, 0), \ldots, (1, 0) \rangle ~ \!, ~ \! \langle (\lambda_2, x_2), (0, y_1), \ldots, (0, y_{n-1}) \rangle ~ \!, \ldots, \langle (\lambda_n, x_n ) , (1, 0), \ldots, (1, 0) \rangle \big\rangle  \\ 
        & \qquad \! \! \vdots \\
        & \quad + \big\langle \langle (\lambda_1, x_1), (1, 0), \ldots, (1, 0) \rangle ~ \! , ~ \! \langle (\lambda_2, x_2), (1, 0), \ldots, (1, 0)  \rangle ~ \! , \ldots, \langle (\lambda_n, x_n) , (0, y_1), \ldots, (0, y_{n-1})   \rangle \big\rangle \\
         &= \langle (\lambda_1 \mu_1 \cdots \mu_{n-1} ~ \!, ~ \! \mu_1 \cdots \mu_{n-1} x_1 + [x_1, y_1, \ldots, y_{n-1}]), (\lambda_2, x_2) , \ldots , (\lambda_n, x_n) \rangle \\
         & \quad  + \langle (\lambda_1, x_1), (0, [x_2, y_1, \ldots, y_{n-1}]), \ldots, (\lambda_n, x_n)  \rangle + \cdots  \\
         & \quad + \langle  (\lambda_1, x_1), \ldots, (\lambda_{n-1}, x_{n-1}), (0, [x_n, y_1, \ldots, y_{n-1}] ) \rangle 
\end{align*}
         \begin{align*}
         &= \big( \lambda_1 \cdots \lambda_n \mu_1 \cdots \mu_{n-1} ~ \! , ~ \! \lambda_2 \cdots \lambda_n \mu_1 \cdots \mu_{n-1} x_1 + \lambda_2 \cdots \lambda_n [x_1, y_1, \ldots, y_{n-1}] + \mu_1 \cdots \mu_{n-1} [x_1, \ldots, x_n] \\
         & \quad + [[x_1 , y_1, \ldots, y_{n-1}], x_2, \ldots, x_n] + [ x_1, [x_2, y_1, \ldots, y_{n-1}], \ldots, x_n] + \cdots + [x_1, \ldots, x_{n-1}, [x_n, y_1, \ldots, y_{n-1}]] \big) \\
         &= \big( \lambda_1 \cdots \lambda_n \mu_1 \cdots \mu_{n-1} ~ \! , ~ \! \lambda_2 \cdots \lambda_n \mu_1 \cdots \mu_{n-1} x_1 + + \lambda_2 \cdots \lambda_n [x_1, y_1, \ldots, y_{n-1}] + \mu_1 \cdots \mu_{n-1} [x_1, \ldots, x_n] \\ & \quad + [[ x_1, \ldots, x_n ], y_1, \ldots, y_{n-1} ] \big) \\
         &= \langle    (\lambda_1 \cdots \lambda_n ~ \! , \lambda_2 \cdots \lambda_n x_1 + [x_1, \ldots, x_n ]) ,  (\mu_1, y_1), \ldots, (\mu_{n-1}, y_{n-1}) \rangle \\
         &= \langle \langle (\lambda_1, x_1), \ldots , (\lambda_n , x_n) \rangle , (\mu_1, y_1), \ldots, (\mu_{n-1}, y_{n-1}) \rangle.
     \end{align*}

     \medskip

  \noindent   {\sf Step 3.} {\em $( \overline{\mathcal{L}}, {\langle -, \ldots, - \rangle } )$ is a linear $n$-rack.} We define a linear map $\ll - , \ldots, - \gg : \overline{\mathcal{L}}^{\otimes n} \rightarrow \overline{ \mathcal{L}}$ by 
     \begin{align*}
         \ll (\lambda_1, x_1), \ldots, (\lambda_n, x_n) \gg := (\lambda_1 \cdots \lambda_n ~ \! , \lambda_2 \cdots \lambda_n x_1 - [x_1, x_n, \ldots, x_2]),
     \end{align*}
     for $(\lambda_1, x_1), \ldots, (\lambda_n, x_n) \in \overline{\mathcal{L}}$. Similar to Step 1 and Step 2, one can show that $(\overline{\mathcal{L}}, \ll -, \ldots, - \gg)$ is a linear $n$-shelf. Finally, we have
     \begin{align*}
         &\ll \langle  (\lambda, x), (\mu_1, y_1)^{(2)}, \ldots, (\mu_{n-1}, y_{n-1})^{(2)} \rangle, (\mu_{n-1}, y_{n-1})^{(1)}, \ldots, (\mu_1, y_1)^{(1)} \gg \\
         &= \ll \langle (\lambda, x) , (1, 0), \ldots, (1, 0) \rangle, {\scriptstyle  (\mu_{n-1}, y_{n-1}), \ldots , (\mu_1, y_1)} \gg + \ll \langle (\lambda, x), (0, y_1), \ldots, (0, y_{n-1}) \rangle, {\scriptstyle 
 (1, 0), \ldots, (1, 0) } \gg \\
         &= \ll (\lambda, x) ,  (\mu_{n-1}, y_{n-1}), \ldots , (\mu_1, y_1) \gg + \ll (0, [x, y_1, \ldots, y_{n-1}]) , (1, 0), \ldots, (1, 0) \gg \\
         &= (\lambda \mu_1 \cdots \mu_{n-1} ~ \! , \mu_1 \cdots \mu_{n-1} x - [x, y_1, \ldots, y_{n-1}]) + (0, [x, y_1, \ldots, y_{n-1}]) \\
         &= ( \lambda \mu_1 \cdots \mu_{n-1} ~ \! , ~ \! \mu_1 \cdots \mu_{n-1} x) \\
         &= \mu_1 \cdots \mu_{n-1} (\lambda , x) = \varepsilon (\mu_1, y_1) \cdots \varepsilon (\mu_{n-1}, y_{n-1}) (\lambda , x)
     \end{align*}
     and similarly, 
     \begin{align*}
         \langle \ll (\lambda , x), (\mu_1, y_1)^{(2)}, \ldots, (\mu_{n-1}, y_{n-1})^{(2)} \gg , {\scriptstyle (\mu_{n-1}, y_{n-1})^{(1)}, \ldots, (\mu_1, y_1)^{(1)} } \rangle = \varepsilon (\mu_1, y_1) \cdots \varepsilon (\mu_{n-1}, y_{n-1}) (\lambda , x).
     \end{align*}
     This shows that $(\overline{\mathcal{L}}, \langle -, \ldots, - \rangle )$ is a linear $n$-rack. 

     For the second part, it is easy to see that $\Delta_{ \overline{ \mathcal{L}'}} \circ \overline{\varphi} = (\overline{\varphi} \otimes \overline{\varphi}) \circ \Delta_{ \overline{ \mathcal{L}}}$ and $\varepsilon_{ \overline{ \mathcal{L}'}} \circ \overline{\varphi} = \varepsilon_{\overline{ \mathcal{L}}}$, which shows that $\overline{ \varphi}: \overline{ \mathcal{L}} \rightarrow \overline{ \mathcal{L}'}$ is a coalgebra homomorphism. For any $(\lambda_1, x_1), \ldots, (\lambda_n, x_n) \in \overline{ \mathcal{L}}$, we also have
     \begin{align*}
         \overline{\varphi} \big(  \langle (\lambda_1, x_1), \ldots, (\lambda_n, x_n) \rangle_{\overline{ \mathcal{L}}}   \big) =~& \overline{\varphi} \big( \lambda_1 \cdots \lambda_n ~\!, ~\! \lambda_2 \cdots \lambda_n x_1 + [x_1, \ldots, x_n]    \big) \\
         =~&  \big( \lambda_1 \cdots \lambda_n ~\!, ~\! \lambda_2 \cdots \lambda_n \varphi (x_1) + [\varphi(x_1), \ldots, \varphi( x_n)]'   \big) \\
         =~& \langle  \overline{\varphi} (\lambda_1, x_1) , \ldots,   \overline{\varphi} (\lambda_n, x_n) \rangle_{\overline{ \mathcal{L}'}}.
     \end{align*}
     This concludes the proof.
\end{proof}

By combining Theorem \ref{thm-lin-n-soln} and Theorem  \ref{thm-nleib-lin}, we obtain the following Yang-Baxter operator from an $n$-Leibniz algebra.

\begin{thm}
    Let $(\mathcal{L}, [-, \ldots, -])$ be an $n$-Leibniz algebra. Then we define a linear map 
    \begin{align*}
    R : \overline{\mathcal{L}}^{\otimes (n-1)} \otimes \overline{\mathcal{L}}^{\otimes (n-1)} \rightarrow \overline{\mathcal{L}}^{\otimes (n-1)} \otimes \overline{\mathcal{L}}^{\otimes (n-1)}  \text{ by }
    \end{align*}
    \begin{align*}
        &R  \big(  ( (\lambda_1, x_1) \otimes \cdots \otimes (\lambda_{n-1}, x_{n-1}) )  \otimes   ( (\mu_1, y_1) \otimes \cdots \otimes (\mu_{n-1}, y_{n-1}) )  \big) \\
        & \qquad =  ( (\mu_1, y_1) \otimes \cdots \otimes (\mu_{n-1}, y_{n-1}) ) \otimes  ( (\lambda_1, x_1) \otimes \cdots \otimes (\lambda_{n-1}, x_{n-1}) ) \\
        & \quad \qquad + \sum_{i=1}^{n-1} \big( \underbrace{ (1, 0) \otimes \cdots \otimes (1, 0)}_{n-1 \mathrm{~ times}} \big) \otimes \big(  (\lambda_1, x_1) \otimes \cdots \otimes \underbrace{ (0, [x_i, y_1, \ldots, y_{n-1}])}_{i\text{-}\mathrm{th~ place}} \otimes \cdots \otimes (\lambda_{n-1}, x_{n-1}) \big).
    \end{align*}
    Then $R$ is a Yang-Baxter operator on the vector space $\overline{\mathcal{L}}^{\otimes (n-1)}$.
\end{thm}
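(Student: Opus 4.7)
The plan is to combine the two main constructions of this section. By Theorem \ref{thm-nleib-lin}, the $n$-Leibniz algebra $(\mathcal{L}, [-, \ldots, -])$ induces a linear $n$-rack structure $\langle -, \ldots, - \rangle$ on the cocommutative coalgebra $\overline{\mathcal{L}} = {\bf k} \oplus \mathcal{L}$; applying Theorem \ref{thm-lin-n-soln} then yields a Yang-Baxter operator $R^{\langle -, \ldots, - \rangle}$ on the vector space $\overline{\mathcal{L}}^{\otimes (n-1)}$. Consequently, the only thing left is to expand $R^{\langle -, \ldots, - \rangle}$ on a general input and verify that the result agrees with the formula for $R$ stated in the theorem.

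To perform this expansion I would invoke the iterated coproduct formula derived in Step 2 of the proof of Theorem \ref{thm-nleib-lin}. Rewriting $(\mu, y) \otimes (1,0)^{\otimes (n-1)}$ as $\mu (1,0)^{\otimes n} + (0, y) \otimes (1,0)^{\otimes (n-1)}$, one obtains $\Delta^{n-1}(\mu, y) = \mu (1,0)^{\otimes n} + \sum_{q=1}^{n} (1,0)^{\otimes (q-1)} \otimes (0, y) \otimes (1,0)^{\otimes (n-q)}$. Substituting this for each $v_j = (\mu_j, y_j)$ inside $R^{\langle -, \ldots, - \rangle}$, the output is parametrised by tuples $(p_1, \ldots, p_{n-1})$ with $p_j \in \{0, 1, \ldots, n\}$: $p_j = 0$ picks the group-like summand $\mu_j (1,0)^{\otimes n}$ and $p_j = q \geq 1$ picks the primitive summand with $(0, y_j)$ in the $q$-th tensor slot.

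The heart of the argument is a multilinearity collapse. Setting $S_k = \{j : p_j = k+1\}$ for $k = 1, \ldots, n-1$, the bracket in the $k$-th slot of the right factor is $\langle u_k, \star_1, \ldots, \star_{n-1} \rangle$ with $\star_j = (0, y_j)$ for $j \in S_k$ and $\star_j = (1,0)$ otherwise. Using the explicit formula for $\langle -, \ldots, - \rangle$ in Theorem \ref{thm-nleib-lin}, this bracket equals $u_k$ when $S_k = \emptyset$, equals $(0, [x_k, y_1, \ldots, y_{n-1}])$ when $S_k = \{1, \ldots, n-1\}$, and vanishes otherwise (the first component is a product containing a zero factor and the second component involves $[x_k, \ldots]$ with some argument equal to $0$, killed by $n$-linearity). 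Since $S_1, \ldots, S_{n-1}$ partition $\{j : p_j \geq 2\}$, the only surviving configurations are (a) all $p_j \in \{0, 1\}$, making every $S_k$ empty, and (b) all $p_j = k^* + 1$ for a single $k^* \in \{1, \ldots, n-1\}$, making $S_{k^*} = \{1, \ldots, n-1\}$ and the rest empty.

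Summing family (a), the identity $\mu_j (1,0) + (0, y_j) = (\mu_j, y_j) = v_j$ reassembles the left factor into $v_1 \otimes \cdots \otimes v_{n-1}$ while the right factor becomes $u_1 \otimes \cdots \otimes u_{n-1}$; summing family (b) over $k^*$ produces exactly $\sum_{i=1}^{n-1} (1,0)^{\otimes (n-1)} \otimes (u_1 \otimes \cdots \otimes (0, [x_i, y_1, \ldots, y_{n-1}]) \otimes \cdots \otimes u_{n-1})$ with the distinguished factor in the $i$-th slot. Together, these reproduce the formula for $R$ in the statement. The principal obstacle is purely combinatorial bookkeeping of the iterated coproduct and the resulting case analysis; no algebraic identity beyond those already established in Theorems \ref{thm-nleib-lin} and \ref{thm-lin-n-soln} is needed.
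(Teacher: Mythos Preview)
Your proposal is correct and follows exactly the route the paper takes: the theorem is stated there as the direct combination of Theorem \ref{thm-nleib-lin} and Theorem \ref{thm-lin-n-soln}, with no further argument given. You supply the explicit expansion of $R^{\langle -,\ldots,-\rangle}$ via the iterated coproduct and the vanishing/case analysis on the brackets, which the paper leaves to the reader; this verification is accurate.
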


Note that the Yang-Baxter operator $R : \overline{\mathcal{L}}^{\otimes (n-1)} \otimes \overline{\mathcal{L}}^{\otimes (n-1)} \rightarrow \overline{\mathcal{L}}^{\otimes (n-1)} \otimes \overline{\mathcal{L}}^{\otimes (n-1)}$ constructed above coincides with the one given in Proposition \ref{nl-cnl} (which was obtained through the central $n$-Leibniz algebra $\overline{\mathcal{L}}$).

Let $(\mathcal{L}, [-, \ldots, - ])$ be an $n$-Leibniz algebra. Then by Proposition \ref{funda-leibniz}, one may construct the Leibniz algebra $(\mathcal{L}^{\otimes (n-1)}, \{ ~, ~ \})$ and hence a linear rack $(\overline{ \mathcal{L}^{\otimes (n-1)}}, \triangleleft  )$ by Example \ref{last-examm}. On the other hand, from the given $n$-Leibniz algebra $(\mathcal{L}, [-, \ldots, -])$, by Theorem \ref{thm-nleib-lin}, one obtains the cocommutative linear $n$-rack $(\overline{\mathcal{L}}, \langle - , \ldots, - \rangle)$ and hence a linear rack structure $(\overline{\mathcal{L}}^{\otimes (n-1)}, \triangleleft_{ \langle -, \ldots, - \rangle})$, by Theorem \ref{thm-linn-lin}. It is important to note that the map $\eta : \overline{ \mathcal{L}^{\otimes (n-1)}} \rightarrow \overline{\mathcal{L}}^{\otimes (n-1)}$ defined in (\ref{eta-map}) is not a linear rack homomorphism. This is not even a coalgebra homomorphism.





\section{A higher analogue to the Yang-Baxter equation}\label{sec5}
In this section, we first introduce the $n$-Yang-Baxter equation (whose invertible solutions are called $n$-Yang-Baxter operators) as the $n$-ary generalization of the Yang-Baxter equation. We show that (central) $n$-Leibniz algebras and cocommutative linear $n$-racks provide examples of $n$-Yang-Baxter operators. We observe that a Yang-Baxter operator on a vector space naturally gives rise to an $n$-Yang-Baxter operator on the same vector space, and conversely, any $n$-Yang-Baxter operator induces a Yang-Baxter operator (however, not on the same vector space). Subsequently, we also consider the set-theoretical version of the $n$-Yang-Baxter equation, whose bijective solutions are called set-theoretical $n$-solutions. Any $n$-rack gives rise to a set-theoretical $n$-solution.

\begin{defn}
Let $V$ be a vector space. For any $n \geq 2$, a linear map $S : V^{\otimes n} \rightarrow V^{\otimes n}$ is said to be a {\bf pre-$n$-Yang-Baxter operator} (or a {\bf pre-$n$-braided operator}) on the vector space $V$ if
\begin{align}\label{nybe}
    (S \otimes \mathrm{Id}^{\otimes ( n-1)})& \cdots (\mathrm{Id}^{\otimes (n-2)} \otimes S \otimes \mathrm{Id})  (\mathrm{Id}^{\otimes (n-1)} \otimes S) (S \otimes \mathrm{Id}^{\otimes (n-1)}) \\
    &= (\mathrm{Id}^{\otimes (n-1)} \otimes S)  (S \otimes \mathrm{Id}^{\otimes (n-1)}) \cdots (\mathrm{Id}^{\otimes (n-2)} \otimes S \otimes \mathrm{Id})  (\mathrm{Id}^{\otimes (n-1)} \otimes S) \nonumber
\end{align}
as a linear map from $V^{\otimes (2n-1)}$ to itself. Additionally, if $S$ is invertible, then it is called an {\bf $n$-Yang-Baxter operator} (or {\bf $n$-braided operator}) on the vector space $V$. The equation (\ref{nybe}) is referred to as the {\bf $n$-Yang-Baxter equation}.
\end{defn}

\begin{remark} (When $n=2$)
    It follows from the above definition that a $2$-Yang-Baxter operator is nothing but a Yang-Baxter operator. Thus, $n$-Yang-Baxter operators naturally generalize Yang-Baxter operators.
\end{remark}

Suppose a linear map $S : V^{\otimes 3} \rightarrow V^{\otimes 3}$ is represented by the diagram ~~~ \!  \!  \begin{tikzpicture}[scale=0.4]
    \draw[thick, blue]
  (8,2) .. controls (8,0.5) and (10,1) .. (10,0)
  (8,0) .. controls (8,1.25)and  (9,0.75).. (9,2)
  (9,0) .. controls (8.75,0.8) and (10.1,1.20) .. (10,2);
    \end{tikzpicture}, then the equation (\ref{nybe}) for $n=3$ can be understood by the equality of the following diagrams:

    \medskip

    \medskip

\begin{center}
\begin{tikzpicture}[scale=0.45]
\draw[thick, blue]
  (9,2) .. controls (9,0.5) and (11,1) .. (11,0)
  (9,0) .. controls (9,1.25)and  (10,0.75).. (10,2)
  (10,0) .. controls (9.75,0.8) and (11.1,1.20) .. (11,2)
  (8,0) .. controls (8,-1.5) and (10,-1) .. (10,-2)
  (8,-2) .. controls (8,-0.75)and  (9,-1.25).. (9,0)
  (9,-2) .. controls (8.75,-1.20) and (10.1,-0.80) .. (10,0)
  (7,-2) .. controls (7,-3.5) and (9,-3) .. (9,-4)
  (7,-4) .. controls (7,-2.75)and  (8,-3.25).. (8,-2)
  (8,-4) .. controls (7.75,-3.20) and (9.1,-2.80) .. (9,-2)
  (9,-4) .. controls (9,-5.5) and (11,-5) .. (11,-6)
  (9,-6) .. controls (9,-4.75)and  (10,-5.25).. (10,-4)
  (10,-6) .. controls (9.75,-5.20) and (11.1,-4.80) .. (11,-4)
  
  (-1,2) .. controls (-1,0.5) and (1,1) .. (1,0)
  (-1,0) .. controls (-1,1.25)and  (0,0.75).. (0,2)
  (0,0) .. controls (-0.25,0.8) and (1.1,1.20) .. (1,2)
  (1,0) .. controls (1,-1.5) and (3,-1) .. (3,-2)
  (1,-2) .. controls (1,-0.75)and  (2,-1.25).. (2,0)
  (2,-2) .. controls (1.75,-1.20) and (3.1,-0.80) .. (3,0)
  (0,-2) .. controls (0,-3.5) and (2,-3) .. (2,-4)
  (0,-4) .. controls (0,-2.75)and  (1,-3.25).. (1,-2)
  (1,-4) .. controls (0.75,-3.20) and (2.1,-2.80) .. (2,-2)
  (-1,-4) .. controls (-1,-5.5) and (1,-5) .. (1,-6)
  (-1,-6) .. controls (-1,-4.75)and  (0,-5.25).. (0,-4)
  (0,-6) .. controls (-0.25,-5.20) and (1.1,-4.80) .. (1,-4);
  \draw[thick]
  (7,0) -- (7,2)
  (8,0) -- (8,2)
  (7,0) -- (7,-2)
  (11,0) -- (11,-2)
  (10,-2) -- (10,-4)
  (11,-2) -- (11,-4)
  (7,-4) -- (7,-6)
  (8,-4) -- (8,-6)
  
  (2,2) -- (2,0)
  (3,2) -- (3,0)
(-1,0) -- (-1,-2)
(0,0) -- (0,-2)
(-1,-2)--(-1,-4)
(3,-2) -- (3,-4)
(2,-4)--(2,-6)
(3,-4) -- (3,-6)
(4.7,-2) -- (5.3,-2)
(4.7,-1.8) -- (5.3,-1.8);
\draw[thin, dashed]
(6.7,0) -- (11.3,0)
(6.7,-2) -- (11.3,-2)
(6.7,-4) -- (11.3,-4)
(-1.7,0) -- (3.3,0)
(-1.7,-2) -- (3.3,-2)
(-1.7,-4) -- (3.3,-4);
\end{tikzpicture}
\end{center}
where all the diagrams (here and in the rest of the paper) can be read from top to bottom.

Similarly, if a linear map $S : V^{\otimes 4} \rightarrow V^{\otimes 4}$ is represented by the diagram ~~~ \!  \!
 \begin{tikzpicture}[scale=0.4]
    \draw[thick, blue]
  (0,0) .. controls (0,-2) and (3,-1) .. (3,-2.5)
  (1,0) .. controls (1,-1.5) and (0,-1) .. (0,-2.5)
  (2,0) .. controls (2,-1.5) and (1,-1) .. (1,-2.5)
  (3,0) .. controls (3,-1.5) and (2,-1) .. (2,-2.5);
    \end{tikzpicture}, then the equation (\ref{nybe}) for $n = 4$ can be understood by the following equality:

\medskip

\medskip
    
    \begin{center}
    \begin{tikzpicture}[scale=0.45]

  \draw[thick, blue]
  (-1,0) .. controls (-1,-2) and (2,-1) .. (2,-2.5)
  (0,0) .. controls (0,-1.5) and (-1,-1) .. (-1,-2.5)
  (1,0) .. controls (1,-1.5) and (0,-1) .. (0,-2.5)
  (2,0) .. controls (2,-1.5) and (1,-1) .. (1,-2.5)
  
  (2,-2.5) .. controls (2,-4.5) and (5,-3.5) .. (5,-5)
  (3,-2.5) .. controls (3,-4) and (2,-3.5) .. (2,-5)
  (4,-2.5) .. controls (4,-4) and (3,-3.5) .. (3,-5)
  (5,-2.5) .. controls (5,-4) and (4,-3.5) .. (4,-5)
  
  (1,-5) .. controls (1,-7) and (4,-6) .. (4,-7.5)
  (2,-5) .. controls (2,-6.5) and (1,-6) .. (1,-7.5)
  (3,-5) .. controls (3,-6.5) and (2,-6) .. (2,-7.5)
  (4,-5) .. controls (4,-6.5) and (3,-6) .. (3,-7.5)
  
  (0,-7.5) .. controls (0,-9.5) and (3,-8.5) .. (3,-10)
  (1,-7.5) .. controls (1,-9) and (0,-8.5) .. (0,-10)
  (2,-7.5) .. controls (2,-9) and (1,-8.5) .. (1,-10)
  (3,-7.5) .. controls (3,-9) and (2,-8.5) .. (2,-10)
  
  (-1,-10) .. controls (-1,-12) and (2,-11) .. (2,-12.5)
  (0,-10) .. controls (0,-11.5) and (-1,-11) .. (-1,-12.5)
  (1,-10) .. controls (1,-11.5) and (0,-11) .. (0,-12.5)
  (2,-10) .. controls (2,-11.5) and (1,-11) .. (1,-12.5)
  
  (13,0) .. controls (13,-2) and (16,-1) .. (16,-2.5)
  (14,0) .. controls (14,-1.5) and (13,-1) .. (13,-2.5)
  (15,0) .. controls (15,-1.5) and (14,-1) .. (14,-2.5)
  (16,0) .. controls (16,-1.5) and (15,-1) .. (15,-2.5)
  
  (12,-2.5) .. controls (12,-4.5) and (15,-3.5) .. (15,-5)
  (13,-2.5) .. controls (13,-4) and (12,-3.5) .. (12,-5)
  (14,-2.5) .. controls (14,-4) and (13,-3.5) .. (13,-5)
  (15,-2.5) .. controls (15,-4) and (14,-3.5) .. (14,-5)
  
  (11,-5) .. controls (11,-7) and (14,-6) .. (14,-7.5)
  (12,-5) .. controls (12,-6.5) and (11,-6) .. (11,-7.5)
  (13,-5) .. controls (13,-6.5) and (12,-6) .. (12,-7.5)
  (14,-5) .. controls (14,-6.5) and (13,-6) .. (13,-7.5)
  
  (10,-7.5) .. controls (10,-9.5) and (13,-8.5) .. (13,-10)
  (11,-7.5) .. controls (11,-9) and (10,-8.5) .. (10,-10)
  (12,-7.5) .. controls (12,-9) and (11,-8.5) .. (11,-10)
  (13,-7.5) .. controls (13,-9) and (12,-8.5) .. (12,-10)
  
  (13,-10) .. controls (13,-12) and (16,-11) .. (16,-12.5)
  (14,-10) .. controls (14,-11.5) and (13,-11) .. (13,-12.5)
  (15,-10) .. controls (15,-11.5) and (14,-11) .. (14,-12.5)
  (16,-10) .. controls (16,-11.5) and (15,-11) .. (15,-12.5);
  
  \draw[thick]
   (3,0) -- (3,-2.5)
  (4,0) -- (4,-2.5)
  (5,0) -- (5,-2.5)
  (0,-2.5) -- (0,-5)
  (0,-5) -- (0,-7.5)
  (1,-2.5) -- (1,-5)
  (-1,-2.5) -- (-1,-5)
  (-1,-5) -- (-1,-7.5)
  (5,-5) -- (5,-7.5)
  (-1,-7.5) -- (-1,-10)
  (4,-7.5) -- (4,-10)
  (5,-7.5) -- (5,-10)
(3,-10) -- (3,-12.5)
 (4,-10) -- (4,-12.5)
(5,-10) -- (5,-12.5)

(12,0) -- (12,-2.5)
(10,0) -- (10,-2.5)
(11,0) -- (11,-2.5)
(11,-2.5) -- (11,-5)
(10,-2.5) -- (10,-5)
(16,-2.5) -- (16,-5)
(10,-5) -- (10,-7.5)
(16,-5) -- (16,-7.5)
(15,-5) -- (15,-7.5)
(16,-7.5) -- (16,-10)
(14,-7.5) -- (14,-10)
(15,-7.5) -- (15,-10)
(12,-10) -- (12,-12.5)
(10,-10) -- (10,-12.5)
(11,-10) -- (11,-12.5)
(7.1, -6.25) -- (7.9, - 6.25) 
(7.1, -6.05) -- (7.9, - 6.05);
\draw[thin, dashed]
(-1.3,-2.5) -- (5.3,-2.5)
(-1.3,-5) -- (5.3,-5)
(-1.3,-7.5) -- (5.3,-7.5)
(-1.3,-10) -- (5.3,-10)

(9.7,-2.5) -- (16.3,-2.5)
(9.7,-5) -- (16.3,-5)
(9.7,-7.5) -- (16.3,-7.5)
(9.7,-10) -- (16.3,-10)
;
\end{tikzpicture}
\end{center}

\begin{exam}
    (i) As mentioned earlier, a Yang-Baxter operator is a $2$-Yang-Baxter operator.

    (ii) For any $n \geq 2$, the identity map $\mathrm{Id}_{V^{\otimes n}} : V^{\otimes n} \rightarrow V^{\otimes n}$ and the map $\mathcal{F} :  V^{\otimes n} \rightarrow V^{\otimes n}$ defined by
    \begin{align*}
        \mathcal{F} (x_1 \otimes x_2 \otimes \cdots \otimes x_n) = x_2 \otimes \cdots \otimes x_n \otimes x_1
    \end{align*}
    both are $n$-Yang-Baxter operators on $V$.
\end{exam}

\begin{exam}
    Let $A$ be an unital associative algebra with the unit $1$. Then the map $S : A^{\otimes 3} \rightarrow A^{\otimes 3}$ defined by $S (a \otimes b \otimes c) = 1 \otimes 1 \otimes abc$ is a pre-$3$-Yang-Baxter operator on the vector space $A$. However, $S$ is not invertible in general.
\end{exam}

\begin{exam}
    Let $S : V^{\otimes n} \rightarrow V^{\otimes n}$ be an $n$-Yang-Baxter operator. Then for any linear automorphism $\varphi \in \mathrm{Aut} (V)$, the map $S_\varphi : V^{\otimes n} \rightarrow V^{\otimes n}$ (called the conjugation of $S$ by the automorphism $\varphi$) defined by $S_\varphi = (\varphi^{-1})^{\otimes n} \circ S \circ \varphi^{\otimes n}$ is also an $n$-Yang-Baxter operator. This shows that the set of all $n$-Yang-Baxter operators on $V$ is closed under conjugations by $\mathrm{Aut}(V)$.
\end{exam}

\begin{exam}
    Let $H$ be a Hopf algebra. In \cite{zappala} Zappala showed that the map $R_H : H \otimes H \rightarrow H \otimes H$ defined by $R_H (x \otimes y) = y^{(1)} \otimes y^{(2)} x S(y^{(3)})$, is a Yang-Baxter operator on $H$. Here, $S$ is the antipode of $H$, and the juxtaposition on the right-hand side indicates the multiplication in $H$. Generalizing this, we define a map $S_H : H^{\otimes n} \rightarrow H^{\otimes n}$ by
    \begin{align*}
        S_H (x_1 \otimes \cdots \otimes x_n) = x_2^{(1)} \otimes \cdots \otimes x_n^{(1)} \otimes x_n^{(2)} \cdots x_2^{(2)} x_1 S(x_2^{(3)}) \cdots S(x_n^{(3)}),
    \end{align*}
    for $x_1, \ldots, x_n \in H$. Then the properties of the antipode $S$ yield that $S_H$ is an $n$-Yang-Baxter operator on $H$.
\end{exam}

More examples of $n$-Yang-Baxter operators coming from (central) $n$-Leibniz algebras and cocommutative linear $n$-racks are given in the following results. The next one generalizes the result of Lebed \cite{lebed2} about the construction of a Yang-Baxter operator from a central Leibniz algebra.

\begin{thm}\label{central-n-leibniz-nyb}
    Let $(\mathcal{L}, [-, \ldots, -], {\bf 1})$ be a central $n$-Leibniz algebra. Define a linear map $S : \mathcal{L}^{\otimes n} \rightarrow \mathcal{L}^{\otimes n}$ by
    \begin{align*}
        S (x_1 \otimes \cdots \otimes x_n) = x_2 \otimes \cdots \otimes x_n \otimes x_1 ~+~ \underbrace{{\bf 1} \otimes \cdots \otimes {\bf 1} }_{n-1 \mathrm{~ times}}\otimes ~ \! [x_1, \ldots, x_n],
    \end{align*}
    for $x_1, \ldots, x_n \in \mathcal{L}$. Then $S$ is an $n$-Yang-Baxter operator on the vector space $\mathcal{L}$.
\end{thm}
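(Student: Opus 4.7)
The plan is to verify the $n$-Yang-Baxter equation (\ref{nybe}) directly by decomposing $S = C + B$, where $C(x_1 \otimes \cdots \otimes x_n) := x_2 \otimes \cdots \otimes x_n \otimes x_1$ is the cyclic shift on $\mathcal{L}^{\otimes n}$ and $B(x_1 \otimes \cdots \otimes x_n) := {\bf 1}^{\otimes (n-1)} \otimes [x_1, \ldots, x_n]$ is the bracket-placing summand. For invertibility, centrality of ${\bf 1}$ forces any bracket $[\ldots, {\bf 1}, \ldots]$ to vanish; writing $C^{-1}(x_1 \otimes \cdots \otimes x_n) = x_n \otimes x_1 \otimes \cdots \otimes x_{n-1}$ and tracking the outputs shows $BC^{-1}BC^{-1} = 0$, so that $S^{-1} := C^{-1} - C^{-1} B C^{-1}$ is a two-sided inverse of $S$.

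For the main equation, I expand both sides of (\ref{nybe}) as sums over words in $\{C, B\}^{n+1}$, yielding $2^{n+1}$ summands per side, and I group the summands by the number of appearances of $B$. The all-$C$ portion matches on both sides because, by the example preceding this theorem, the cyclic shift is itself an $n$-Yang-Baxter operator, so the two pure cyclic-shift compositions produce the same permutation of $\mathcal{L}^{\otimes (2n-1)}$.

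For words containing at least one $B$, the centrality of ${\bf 1}$ is the decisive structural input: once $B_i$ is performed it plants ${\bf 1}$'s in positions $i, i+1, \ldots, i+n-2$, and any subsequent $B_j$ whose $n$ consecutive input slots intersect the surviving ${\bf 1}$'s outputs $[\ldots, {\bf 1}, \ldots] = 0$. Tracking how intermediate cyclic shifts reposition the ${\bf 1}$'s, the surviving single-$B$ terms pair up between the two sides because, after a $B$ is performed, only cyclic shifts act, and their compositions on the two sides agree by exactly the all-$C$ match; the bracket $[x_{j}, \ldots, x_{j+n-1}]$ and its accompanying ${\bf 1}$'s end up in identical slots. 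The surviving multi-$B$ terms on the left-hand side generate nested brackets of the shape $[[x_{\alpha_1}, \ldots, x_{\alpha_n}], x_{\beta_1}, \ldots, x_{\beta_{n-1}}]$, whereas the right-hand side generates sums of the shape $\sum_{i} [x_{\alpha_1}, \ldots, [x_{\alpha_i}, x_{\beta_1}, \ldots, x_{\beta_{n-1}}], \ldots, x_{\alpha_n}]$; the fundamental identity (\ref{fun-identity}) is precisely what is needed to equate these two patterns.

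The principal obstacle is the combinatorial bookkeeping for the multi-$B$ contributions: one must carefully enumerate which $B$-combinations survive the centrality cancellations, identify the exact indices $\alpha_*, \beta_*$ appearing in the resulting nested brackets, and check that the surviving LHS brackets match the RHS sums under a single application of (\ref{fun-identity}). When $n = 2$ this is exactly Lebed's expansion argument for central Leibniz algebras (where (\ref{fun-identity}) reduces to the right Leibniz identity), and the same strategy extends to general $n$ with the fundamental identity replacing its binary counterpart.
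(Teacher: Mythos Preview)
Your plan is correct and follows essentially the same route as the paper: a direct expansion in which centrality of ${\bf 1}$ kills every bracket containing ${\bf 1}$, the cyclic-shift terms and the single-bracket terms match on the two sides, and the surviving double-bracket contributions differ by exactly one application of the fundamental identity (\ref{fun-identity}). The paper carries this out by applying the factors of (\ref{nybe}) step by step to a generic tensor $x_1\otimes\cdots\otimes x_n\otimes y_1\otimes\cdots\otimes y_{n-1}$ rather than through your $S=C+B$ word expansion, and it establishes invertibility via an explicit preimage plus a kernel computation rather than your closed formula $S^{-1}=C^{-1}-C^{-1}BC^{-1}$, but these are organizational choices rather than mathematical differences.
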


\begin{proof}
Let $x_1, \ldots, x_n, y_1, \ldots, y_{n-1} \in \mathcal{L}$. By a direct calculation, we have
    \begin{align*}
        &(S \otimes \mathrm{Id}^{\otimes (n-1)}) \cdots (\mathrm{Id}^{\otimes (n-2)} \otimes S \otimes \mathrm{Id})  (\mathrm{Id}^{\otimes (n-1)} \otimes S) (S \otimes \mathrm{Id}^{\otimes (n-1)}) (x_1 \otimes \cdots \otimes x_n \otimes y_1 \otimes \cdots \otimes y_{n-1}) \\
        &= y_1 \otimes \cdots \otimes y_{n-1} \otimes x_2 \otimes \cdots \otimes x_n \otimes x_1 + \underbrace{{\bf 1} \otimes \cdots \otimes {\bf 1}}_{n-1 \text{ times}}  \otimes ~ { x_2} \otimes \cdots \otimes x_n \otimes [x_1, y_1, \ldots, y_{n-1}] \\
        & \quad + \sum_{i=2}^n ~\underbrace{{\bf 1} \otimes \cdots \otimes {\bf 1}}_{n-1 \text{ times}} \otimes ~ x_2 \otimes \cdots \otimes [x_i, y_1, \ldots, y_{n-1}] \otimes \cdots \otimes x_n \otimes x_1 \\
        & \quad + y_1 \otimes \cdots \otimes y_{n-1} \otimes \underbrace{{\bf 1} \otimes \cdots \otimes {\bf 1}}_{n-1 \text{ times}} \otimes ~ [x_1, \ldots, x_n ]  ~+~ \underbrace{{\bf 1} \otimes \cdots \otimes {\bf 1}}_{2n-2 \text{ times}} \otimes ~[[x_1, \ldots, x_n], y_1, \ldots, y_{n-1}].
    \end{align*}
    On the other hand, 
    \begin{align*}
       &(\mathrm{Id}^{\otimes (n-1)} \otimes S)  (S \otimes \mathrm{Id}^{\otimes (n-1)}) \cdots (\mathrm{Id}^{\otimes (n-2)} \otimes S \otimes \mathrm{Id})  (\mathrm{Id}^{\otimes (n-1)} \otimes S)  (x_1 \otimes \cdots \otimes x_n \otimes y_1 \otimes \cdots \otimes y_{n-1}) \\
       &= y_1 \otimes \cdots \otimes y_{n-1} \otimes x_2 \otimes \cdots \otimes x_n \otimes x_1   + \sum_{i=2}^n ~\underbrace{{\bf 1} \otimes \cdots \otimes {\bf 1}}_{n-1 \text{ times}} \otimes ~ x_2 \otimes \cdots \otimes [x_i, y_1, \ldots, y_{n-1}] \otimes \cdots \otimes x_n \otimes x_1 \\
      & \quad + \underbrace{{\bf 1} \otimes \cdots \otimes {\bf 1}}_{n-1 \text{ times}}  \otimes ~ { x_2} \otimes \cdots \otimes x_n \otimes [x_1, y_1, \ldots, y_{n-1}]  ~+~ y_1 \otimes \cdots \otimes y_{n-1} \otimes \underbrace{{\bf 1} \otimes \cdots \otimes {\bf 1}}_{n-1 \text{ times}} \otimes ~ [x_1, \ldots, x_n ]  \\
      & \quad + \sum_{i=1}^n ~ \underbrace{{\bf 1} \otimes \cdots \otimes {\bf 1}}_{2n-2 \text{ times}} \otimes ~[x_1, \ldots, x_{i-1}, [x_i, y_1, \ldots, y_{n-1}], \ldots, x_n].
    \end{align*}
    Hence, the equation (\ref{nybe}) follows from the fundamental identity of the bracket $[-, \ldots, - ]$.

    Note that the map $S$ is surjective as for any $x_1 , \ldots, x_n \in \mathcal{L}$, 
    \begin{align*}
        S \big( x_n \otimes x_1 \otimes \cdots \otimes x_{n-1} - [x_n, x_1, \ldots, x_{n-1}] \otimes {\bf 1} \otimes \cdots \otimes {\bf 1} \big)
        = x_1 \otimes \cdots \otimes x_n.
    \end{align*}
    Next, for $\sum_{i} x_{1i} \otimes \cdots \otimes x_{ni} \in \mathrm{ker}(S)$, we have
    \begin{align}\label{kernel-eq}
        \sum_i \big(  x_{2i} \otimes \cdots \otimes x_{ni} \otimes x_{1i} ~+~ {\bf 1} \otimes \cdots \otimes {\bf 1} \otimes [x_{1i}, \ldots,  x_{ni}]  \big) = 0.
    \end{align}
    By applying the bijection map $\tau : V^{\otimes n} \rightarrow V^{\otimes n}, ~\tau (x_1 \otimes \cdots \otimes x_n) = x_n \otimes x_1 \otimes \cdots \otimes x_{n-1}$ to the both sides of (\ref{kernel-eq}), and then using the $n$-Leibniz bracket, we obtain
    \begin{align*}
         \sum_i [x_{1i} , \ldots, x_{ni}] = 0.
    \end{align*}
    This in turn implies that (follows from (\ref{kernel-eq}) and by applying $\tau$) $\sum_i x_{1i} \otimes \cdots \otimes x_{ni} = 0$. Hence, $S$ is also injective and hence bijective. This proves the result.
\end{proof}

\begin{thm}
    Let $\mathcal{L}$ be a vector space equipped with an $n$-linear operation $[-, \ldots, - ] : \mathcal{L} \times \cdots \times \mathcal{L} \rightarrow \mathcal{L}$. We define a linear map $S : \overline{\mathcal{L}}^{ \otimes n} \rightarrow \overline{\mathcal{L}}^{ \otimes n}$ by 
    \begin{align*}
        &S \big( (\lambda_1, x_1) \otimes (\lambda_2, x_2) \otimes \cdots \otimes (\lambda_n, x_n) \big) \\
        & \qquad = (\lambda_2, x_2) \otimes \cdots \otimes (\lambda_n, x_n) \otimes (\lambda_1, x_1) ~+~ \underbrace{(1, 0) \otimes \cdots \otimes (1, 0)}_{n-1 \mathrm{~times}} \otimes ~ \! (0, [x_1, \ldots, x_n]),
    \end{align*}
    for $(\lambda_1, x_1), \ldots , (\lambda_n, x_n) \in \overline{\mathcal{L}}$.
    Then $(\mathcal{L}, [-, \ldots, -])$ is an $n$-Leibniz algebra if and only if $S$ is an $n$-Yang-Baxter operator (on the vector space $\overline{\mathcal{L}} = {\bf k} \oplus \mathcal{L}$).
\end{thm}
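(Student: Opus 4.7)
The plan is to prove the two implications separately. The forward direction (that an $n$-Leibniz structure on $\mathcal{L}$ forces $S$ to be an $n$-Yang-Baxter operator) is essentially already done: by Proposition \ref{nl-cnl}, the triple $(\overline{\mathcal{L}}, \llbracket -, \ldots, - \rrbracket, (1,0))$ is a central $n$-Leibniz algebra whose bracket is precisely $\llbracket (\lambda_1, x_1), \ldots, (\lambda_n, x_n) \rrbracket = (0, [x_1, \ldots, x_n])$. Applying Theorem \ref{central-n-leibniz-nyb} to this central $n$-Leibniz algebra produces exactly the operator $S$ of the present statement, so $S$ is automatically an $n$-Yang-Baxter operator on $\overline{\mathcal{L}}$.

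For the converse I only need to use equation (\ref{nybe}); invertibility of $S$ plays no role. The plan is to evaluate both sides of (\ref{nybe}) on the specific element
$$(0, x_1) \otimes (0, x_2) \otimes \cdots \otimes (0, x_n) \otimes (0, y_1) \otimes \cdots \otimes (0, y_{n-1}) \in \overline{\mathcal{L}}^{\otimes (2n-1)}$$
and to read off the fundamental identity (\ref{fun-identity}) from the resulting equality. The key simplification is that whenever $(1, 0)$ appears in any slot of an $n$-fold tensor, the correction term produced by a subsequent application of $S$ to that block vanishes, because $[(1, 0), -, \ldots, -]$ is zero regardless of the slot in which $(1, 0)$ is placed. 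Consequently the entire computation reduces to the one already carried out inside the proof of Theorem \ref{central-n-leibniz-nyb}, with ${\bf 1}$ replaced by $(1, 0)$ and $x_i$ replaced by $(0, x_i)$.

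Running that computation, the various transposition summands on the two sides match pairwise, and the difference $\mathrm{LHS} - \mathrm{RHS}$ collapses to a single summand supported on the all-ones prefix, namely
$$(1, 0)^{\otimes (2n-2)} \otimes \Bigl(0,\ [[x_1, \ldots, x_n], y_1, \ldots, y_{n-1}] - \sum_{i=1}^n [x_1, \ldots, [x_i, y_1, \ldots, y_{n-1}], \ldots, x_n]\Bigr).$$
Setting this to zero for all $x_i, y_j \in \mathcal{L}$ yields precisely the fundamental identity (\ref{fun-identity}), hence $(\mathcal{L}, [-, \ldots, -])$ is an $n$-Leibniz algebra. The only real obstacle is clerical: tracking the many terms produced by each of the $n$ successive applications of $S$ on each side of (\ref{nybe}), and confirming that only the single "fundamental identity" residue survives after cancellations. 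The centrality of $(1, 0)$ in $(\overline{\mathcal{L}}, \llbracket -, \ldots, - \rrbracket)$ is precisely what keeps this bookkeeping tractable and lets one recycle the calculation of Theorem \ref{central-n-leibniz-nyb} verbatim.
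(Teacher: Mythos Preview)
Your proposal is correct and follows essentially the same approach as the paper. The paper's proof is even terser: it simply observes that $(\mathcal{L},[-,\ldots,-])$ is an $n$-Leibniz algebra if and only if $(\overline{\mathcal{L}},\llbracket -,\ldots,-\rrbracket,(1,0))$ is a central $n$-Leibniz algebra, and then appeals to the \emph{calculations} inside the proof of Theorem~\ref{central-n-leibniz-nyb} (which use only centrality of ${\bf 1}$, not the fundamental identity) to conclude that equation~(\ref{nybe}) is equivalent to the fundamental identity on $\overline{\mathcal{L}}$, hence on $\mathcal{L}$. Your converse argument---evaluating on tensors of the form $(0,x_i)$---is a special case of this, but it is exactly the special case that suffices, so the two proofs are the same in substance.
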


\begin{proof}
    We consider the $n$-linear operation $[ \! \! [ -, \ldots, - ] \! \! ] : \overline{\mathcal{L}} \times \cdots \times \overline{\mathcal{L}} \rightarrow \overline{\mathcal{L}}$ given by 
    \begin{align*}
      [ \! \! [ (\lambda_1, x_1), \ldots, (\lambda_n, x_n) ] \! \! ] = (0, [x_1, \ldots, x_n]),  
    \end{align*}
    for $(\lambda_1, x_1), \ldots, (\lambda_n, x_n) \in \overline{\mathcal{L}}$. Then $(\mathcal{L}, [-, \ldots, - ])$ is an $n$-Leibniz algebra if and only if the triple $(\overline{\mathcal{L}}, [ \! \! [ -, \ldots, - ] \! \! ], {\bf 1} = (1, 0))$ is a central $n$-Leibniz algebra. Hence, the result follows as a consequence (and calculations) of Theorem \ref{central-n-leibniz-nyb}.
\end{proof}

\begin{thm}
    Let $(C, \langle -, \ldots, - \rangle)$ be a linear $n$-rack where $C$ is cocommutative. Then the map 
    \begin{align*}
    S (u_1 \otimes \cdots \otimes u_n) = u_2^{(1)} \otimes \cdots \otimes u_n^{(1)} \otimes \langle u_1, u_2^{(2)}, \ldots, u_n^{(2)} \rangle, \text{ for } u_1 \otimes \cdots \otimes u_n \in C^{\otimes n}
    \end{align*}
    is an $n$-Yang-Baxter operator (on the vector space $C$) with the inverse
    \begin{align}\label{inv-nybe}
       S^{-1} (u_1 \otimes \cdots \otimes u_n) = \ll u_n, u_{n-1}^{(2)}, \ldots, u_1^{(2)}  \gg \otimes ~ \! u_1^{(1)} \otimes \cdots \otimes u_{n-1}^{(1)}.
    \end{align}
\end{thm}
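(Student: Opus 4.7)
The proof has two independent parts: verifying that the formula (\ref{inv-nybe}) is a two-sided inverse of $S$, and verifying that $S$ satisfies the $n$-Yang-Baxter equation (\ref{nybe}). I will start with invertibility, since the key algebraic input is the inverse property (\ref{inv-prop}) rather than self-distributivity. Applying first $S^{-1}$ and then $S$ to $u_1 \otimes \cdots \otimes u_n$, the coalgebra-homomorphism property of $\ll -, \ldots, - \gg$ together with coassociativity of $\Delta$ (so that $u_i^{(1)(1)} \otimes u_i^{(1)(2)} \otimes u_i^{(2)}$ may be rewritten as $u_i^{(1)} \otimes u_i^{(2)(1)} \otimes u_i^{(2)(2)}$) produces a tensor whose final slot matches the left-hand side of (\ref{inv-prop}) with $v_j := u_{n-j}^{(2)}$. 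The inverse property then collapses that slot to $\varepsilon(u_{n-1}^{(2)}) \cdots \varepsilon(u_1^{(2)}) \, u_n$, and the counit axiom $u_i^{(1)} \varepsilon(u_i^{(2)}) = u_i$ reassembles the remaining slots into $u_1 \otimes \cdots \otimes u_{n-1} \otimes u_n$. The identity $S^{-1} \circ S = \mathrm{Id}$ is symmetric, using the other equality in (\ref{inv-prop}) and cocommutativity of $C$ to permute Sweedler indices into the required order.

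For the $n$-Yang-Baxter equation, the plan is to apply both sides of (\ref{nybe}) to a generic element $u_1 \otimes \cdots \otimes u_n \otimes w_1 \otimes \cdots \otimes w_{n-1}$ and reduce layer by layer. Each application of $S$ inside a layer takes $n$ consecutive tensor factors, splits each of the last $n-1$ of them via $\Delta$ (keeping one copy in the output and using the other as an argument of the bracket), and places the $n$-ary bracket in the final position of its block. Coassociativity, the coalgebra-homomorphism property of $\langle -, \ldots, - \rangle$, and cocommutativity of $C$ will then control the propagation of Sweedler indices so that both sides end up with the same distribution of $u_i$- and $w_j$-pieces across the $2n-1$ output slots. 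The decisive step is the self-distributivity identity (\ref{self-dis}): the bracket slot on the left-hand side will be of the nested form $\langle \langle u_1, w_\bullet \rangle, w_\bullet \rangle$, while on the right-hand side it will have the distributed form $\langle \langle u_1, w_\bullet \rangle, \langle u_2, w_\bullet \rangle, \ldots, \langle u_n, w_\bullet \rangle \rangle$, and (\ref{self-dis}) identifies the two.

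The principal obstacle is the combinatorial bookkeeping: tracking the cascade of Sweedler superscripts $u_i^{(1)(1)}, u_i^{(1)(2)}, u_i^{(2)(1)}, \ldots$ of each original input through the $2(n-1)$ successive $S$-layers on either side of (\ref{nybe}). I would organise this using the braid-diagrammatic calculus illustrated for $n = 3, 4$ in the excerpt, in which each $S$-layer is drawn as a shift-and-combine gadget; the two sides of (\ref{nybe}) then become braid pictures that apply the same nested sequence of $n$-ary brackets to the same input, reconciled at the end by a single invocation of (\ref{self-dis}). Once this organisational device is in place the identification is literal, in close analogy with Lebed's proof of Proposition \ref{lebed-prop} for the case $n = 2$, since (\ref{self-dis}) was precisely formulated so as to yield the higher braid relation for $S$.
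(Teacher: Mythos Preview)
Your approach is essentially the same as the paper's: both compute the two sides of (\ref{nybe}) on a generic tensor, track Sweedler components using coassociativity and cocommutativity, and invoke self-distributivity (\ref{self-dis}) to identify the final bracket slots; invertibility is handled via (\ref{inv-prop}). One slip in your description: the nested bracket appearing in the last slot on the left-hand side is of the form $\langle \langle u_1, u_2^{(\bullet)}, \ldots, u_n^{(\bullet)} \rangle, w_1^{(\bullet)}, \ldots, w_{n-1}^{(\bullet)} \rangle$ (inner arguments are the $u_i$'s, outer arguments are the $w_j$'s), not $\langle \langle u_1, w_\bullet \rangle, w_\bullet \rangle$ as you wrote --- this is exactly the left-hand side of (\ref{self-dis}), matching the distributed form $\langle \langle u_1, w_\bullet \rangle, \ldots, \langle u_n, w_\bullet \rangle \rangle$ on the right.
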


\begin{proof}
    For any $u_1, \ldots, u_n, v_1, \ldots, v_{n-1} \in C$, we observe that
    \begin{align}\label{ncomm1}
       & (S \otimes \mathrm{Id}^{\otimes (n-1)}) (\mathrm{Id} \otimes S \otimes \mathrm{Id}^{\otimes (n-2)}) \cdots (\mathrm{Id}^{\otimes (n-1)} \otimes S ) (S \otimes \mathrm{Id}^{\otimes (n-1)}) (u_1 \otimes \cdots \otimes u_n \otimes v_1 \otimes \cdots \otimes v_{n-1}) \nonumber \\
       & = (S \otimes \mathrm{Id}^{\otimes (n-1)}) (\mathrm{Id} \otimes S \otimes \mathrm{Id}^{\otimes (n-2)}) \cdots (\mathrm{Id}^{\otimes (n-1)} \otimes S ) \nonumber \\
       & \qquad \qquad \big(   u_2^{(1)} \otimes \cdots \otimes u_n^{(1)} \otimes \langle  u_1, u_2^{(2)}, \ldots, u_n^{(2)}  \rangle \otimes v_1 \otimes \cdots \otimes v_{n-1}  \big) \nonumber \\
        &= (S \otimes \mathrm{Id}^{\otimes (n-1)}) (\mathrm{Id} \otimes S \otimes \mathrm{Id}^{\otimes (n-2)}) \cdots (\mathrm{Id}^{\otimes (n-2)} \otimes S \otimes \mathrm{Id}) \nonumber \\
        & \qquad \qquad \qquad \big(  u_2^{(1)} \otimes \cdots \otimes u_n^{(1)} \otimes v_1^{(1)} \otimes \cdots \otimes v_{n-1}^{(1)} \otimes \langle \langle u_1, u_2^{(2)}, \ldots, u_n^{(2)}    \rangle, v_1^{(2)}, \ldots, v_{n-1}^{(2)}   \big) \nonumber \\
       & = v_1^{(1)(1) \cdots (1)} \otimes \cdots v_{n-1}^{(1)(1) \cdots (1)} \otimes \langle u_2^{(1)},v_1^{(1) \cdots (1) (2)}, \ldots, v_{n-1}^{(1) \cdots (1) (2)}  \rangle \otimes \cdots \otimes \langle u_{n-1}^{(1)}, v_1^{(1) (1) (2)}, \ldots,  v_{n-1}^{(1) (1) (2)} \rangle \nonumber \\
       & \qquad \qquad \qquad \otimes \langle u_n^{(1)}, v_1^{(1) (2)}, \ldots, v_{n-1}^{(1) (2)} \rangle 
        \otimes \big\langle \langle u_1, u_2^{(2)}, \ldots, u_n^{(2)} \rangle , v_1^{(2)} , \ldots, v_{n-1}^{(2)} \big\rangle.
    \end{align}
    On the other hand, 
    \begin{align}\label{ncomm2}
        &(\mathrm{Id}^{\otimes (n-1)} \otimes S) (S \otimes \mathrm{Id}^{\otimes (n-1)}) \cdots (\mathrm{Id}^{\otimes (n-2)} \otimes S \otimes \mathrm{Id}) (\mathrm{Id}^{\otimes (n-1)} \otimes S) (u_1 \otimes \cdots \otimes u_n \otimes v_1 \otimes \cdots \otimes v_{n-1})  \nonumber \\
        &= (\mathrm{Id}^{\otimes (n-1)} \otimes S) (S \otimes \mathrm{Id}^{\otimes (n-1)}) \cdots (\mathrm{Id}^{\otimes (n-2)} \otimes S \otimes \mathrm{Id})  \nonumber \\
        & \qquad \qquad \qquad \big(    u_1 \otimes \cdots \otimes u_{n-1} \otimes v_1^{(1)} \otimes \cdots \otimes v_{n-1}^{(1)} \otimes \langle u_n, v_1^{(2)}, \ldots, v_{n-1}^{(2)} \rangle \big)  \nonumber  \\
        &= (\mathrm{Id}^{\otimes (n-1)} \otimes S) \big(   v_1^{(1) (1) \cdots (1)} \otimes \cdots \otimes v_{n-1}^{(1) (1) \cdots (1)} \otimes \langle u_1, v_1^{(1) \cdots (1) (2)}, \ldots, v_{n-1}^{(1) \cdots (1) (2)}   \rangle \otimes \cdots  \nonumber \\
        & \qquad \qquad \qquad  \cdots \otimes \langle u_{n-1}, v_1^{(1) (2)}, \ldots, v_{n-1}^{(1) (2)} \rangle \otimes \langle u_n , v_1^{(2)}, \ldots, v_{n-1}^{(2)} \rangle   \big)  \nonumber \\
        &= v_1^{(1) (1) \cdots (1)} \otimes \cdots \otimes v_{n-1}^{(1) (1) \cdots (1)} \otimes \langle u_2, v_1^{(1) \cdots (1) (2)} , \ldots, v_{n-1}^{(1) \cdots (1) (2)}   \rangle^{(1)} \otimes \cdots \otimes \langle  u_n , v_1^{(2)}, \ldots, v_{n-1}^{(2)} \rangle^{(1)} \otimes \nonumber \\
        & \qquad \big\langle  \langle u_1, v_1^{(1) \cdots (1) (2)}, \ldots , v_{n-1}^{(1) \cdots (1) (2)} \rangle, \langle u_2, v_1^{(1) \cdots (1) (2)}, \ldots , v_{n-1}^{(1) \cdots (1) (2)} \rangle^{(2)}, \ldots,  \langle u_n, v_1^{(2)}, \ldots, v_{n-1}^{(2)} \rangle^{(2)} \big\rangle.
    \end{align}
    Since $C$ is cocommutative and $\langle -, \ldots, - \rangle$ satisfies the self-distributivity (\ref{self-dis}), the expressions in (\ref{ncomm1}) and (\ref{ncomm2}) are the same. This shows that $S$ is a pre-$n$-Yang-Baxter operator on the vector space $C$. Finally, $S$ is invertible with the inverse given in (\ref{inv-nybe}). Hence, the result follows.  
\end{proof}

In the next result, we show that a Yang-Baxter operator naturally gives rise to an $n$-Yang-Baxter operator on the same vector space. More precisely, we have the following.

\begin{thm}\label{them-soln-nsoln}
     Let $V$ be a vector space and $R: V \otimes V \rightarrow V \otimes V$ be a Yang-Baxter operator on $V$. For any $n \geq 3$, we define a map $S_n : V^{\otimes n} \rightarrow V^{\otimes n}$ by
     \begin{align*}
         S_n = (\mathrm{Id}^{\otimes (n-2)} \otimes R) \cdots (\mathrm{Id} \otimes R \otimes \mathrm{Id}^{\otimes (n-3)}) (R \otimes \mathrm{Id}^{\otimes (n-2)}).
     \end{align*}
     Then $S_n$ is an $n$-Yang-Baxter operator on $V$.
\end{thm}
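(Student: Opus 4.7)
The plan is to interpret $S_n$ as the image of a specific element of the Artin braid group $B_{2n-1}$ under the representation induced by $R$, and then deduce the $n$-Yang-Baxter equation from a pure braid-group identity. Invertibility of $S_n$ comes for free: each factor $\mathrm{Id}^{\otimes (i-1)} \otimes R \otimes \mathrm{Id}^{\otimes (n-1-i)}$ in the defining composition is invertible because $R$ is, so $S_n$ is a composition of bijections.

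Since $R$ satisfies the Yang-Baxter equation, the assignment $\sigma_i \mapsto \mathrm{Id}^{\otimes (i-1)} \otimes R \otimes \mathrm{Id}^{\otimes (2n-2-i)}$ extends to a representation $\rho : B_{2n-1} \to \mathrm{GL}(V^{\otimes (2n-1)})$: the braid relation $\sigma_i \sigma_{i+1} \sigma_i = \sigma_{i+1} \sigma_i \sigma_{i+1}$ is precisely the YBE for $R$, and the commutativity $\sigma_i \sigma_j = \sigma_j \sigma_i$ for $|i-j| \geq 2$ is automatic from the tensor structure. Setting $\beta_k := \sigma_{k+n-2} \sigma_{k+n-3} \cdots \sigma_k \in B_{2n-1}$, a direct check gives $\rho(\beta_k) = \mathrm{Id}^{\otimes (k-1)} \otimes S_n \otimes \mathrm{Id}^{\otimes (n-k)}$. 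The $n$-Yang-Baxter equation for $S_n$ therefore reduces to the single braid-group identity
\[ \beta_1 \beta_2 \cdots \beta_n \beta_1 \;=\; \beta_n \beta_1 \beta_2 \cdots \beta_n \qquad \text{in } B_{2n-1}. \]

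Writing $A := \beta_1 \beta_2 \cdots \beta_n$, this identity is equivalent to the conjugation statement $A \beta_1 A^{-1} = \beta_n$. A short tracking of strands shows that the permutation underlying $A$ is the cyclic rotation $i \mapsto i + (n-1) \pmod{2n-1}$, which is encouraging but is not yet a braid-level statement. What actually suffices at the braid-group level is the shift formula $A \sigma_i A^{-1} = \sigma_{i+n-1}$ for $1 \leq i \leq n-1$; granting this, distributing the conjugation yields $A \beta_1 A^{-1} = A(\sigma_{n-1} \cdots \sigma_1) A^{-1} = \sigma_{2n-2} \cdots \sigma_n = \beta_n$.

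I would prove the shift formula by induction on $n$, pushing $\sigma_i$ past each $\beta_k$ in turn using the braid relations. The main obstacle is the combinatorial bookkeeping: consecutive $\beta_k$'s share generators and do not commute, so the manipulation repeatedly invokes the nontrivial braid relation $\sigma_j \sigma_{j+1} \sigma_j = \sigma_{j+1} \sigma_j \sigma_{j+1}$. An equivalent but more conceptual route is the diagrammatic one suggested by the paper's figures: both sides of (\ref{nybe}) are braid diagrams on $2n-1$ strands related by a finite sequence of Reidemeister III moves on individual $R$-crossings, and each such move is itself an application of the YBE for $R$.
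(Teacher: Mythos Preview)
Your reduction to a braid-group identity is correct and is a genuinely different route from the paper's. The observation that $\rho(\beta_k) = \mathrm{Id}^{\otimes(k-1)}\otimes S_n\otimes\mathrm{Id}^{\otimes(n-k)}$ with $\beta_k = \sigma_{k+n-2}\cdots\sigma_k$, so that the $n$-Yang-Baxter equation for $S_n$ follows from $\beta_1\cdots\beta_n\beta_1 = \beta_n\beta_1\cdots\beta_n$ in $B_{2n-1}$, is exactly right. The paper instead argues by induction on $n$: writing $S_n = (\mathrm{Id}\otimes S_{n-1})(R\otimes\mathrm{Id}^{\otimes(n-2)}) = (\mathrm{Id}^{\otimes(n-2)}\otimes R)(S_{n-1}\otimes\mathrm{Id})$, it expands both sides of the $n$-YBE and performs a lengthy sequence of rearrangements alternating between the YBE for $R$ and the already-established $(n-1)$-YBE for $S_{n-1}$. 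Your framing explains \emph{why} the identity holds (both sides are the same braid), while the paper's induction carries out the verification explicitly; your ``diagrammatic route'' is in fact essentially what the paper does, packaged via $S_{n-1}$ rather than individual crossings.

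However, your proposal stops short of a proof: the shift formula $A\sigma_i A^{-1} = \sigma_{i+n-1}$ is asserted but not established, and you flag the bookkeeping as the obstacle. Since this is the entire content of the theorem, the gap is real. One clean way to close it: using far-commutativity and the trivial equality $\beta_{i+1}\sigma_i = \sigma_{i+n-1}\cdots\sigma_i = \sigma_{i+n-1}\beta_i$, one finds
\[
A\sigma_i = \beta_1\cdots\beta_i\,\gamma\,\beta_{i+2}\cdots\beta_n,\qquad
\sigma_{i+n-1}A = \beta_1\cdots\beta_{i-1}\,\gamma\,\beta_{i+1}\cdots\beta_n,
\]
where $\gamma := \sigma_{i+n-1}\cdots\sigma_i$. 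Thus the shift formula reduces to the single relation $\beta_i\gamma = \gamma\beta_{i+1}$, which is a standard descending-word identity in $B_{2n-1}$: it follows by $n-1$ applications of the braid relation interleaved with far-commutations (and is easily proved by induction on $n$, the base case $n=2$ being the braid relation itself). Supplying this lemma would complete your argument.
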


\begin{proof}
   We will prove the result by using mathematical induction. For this, we first assume that $S_{n-1}$ (for some $n\ge3$ with the convention that $S_2=R$) is an $(n-1)$-Yang-Baxter operator on $V$. We note that
\begin{align*}
      S_{n}=(\mathrm{Id}\otimes S_{n-1})(R\otimes \mathrm{Id}^{\otimes(n-2)})=(\mathrm{Id}^{\otimes(n-2)}\otimes R)(S_{n-1}\otimes \mathrm{Id}).
\end{align*}Hence 
\begin{align*}
    &(S_{n}\otimes \mathrm{Id}^{\otimes (n-1)})\cdots(\mathrm{Id}^{\otimes(n-1)}\otimes S_n)(S_n\otimes \mathrm{Id}^{\otimes(n-1)})\\
    &=(\mathrm{Id}^{\otimes(n-2)}\otimes R\otimes \mathrm{Id}^{\otimes(n-1)})(S_{n-1}\otimes \mathrm{Id}^{\otimes n})\cdots (\mathrm{Id}^{\otimes(2n-3)}\otimes R)(\mathrm{Id}^{\otimes(n-1)}\otimes S_{n-1}\otimes \mathrm{Id})\\&\qquad\qquad\qquad\qquad\qquad\qquad\qquad\qquad\qquad\qquad\qquad\qquad\qquad\qquad(\mathrm{Id}\otimes S_{n-1}\otimes \mathrm{Id}^{\otimes(n-1)})(R\otimes \mathrm{Id}^{\otimes(2n-3)}).
\end{align*}
This gives a composition of $(2n+2)$ maps, among which $(n+1)$ maps are of the form $(\mathrm{Id}^{\otimes i}\otimes S_{n-1}\otimes \mathrm{Id}^{\otimes j})$ with $i+j=n$ and remaining $(n+1)$ maps are of the form $(\mathrm{Id}^{\otimes p}\otimes R\otimes \mathrm{Id}^{\otimes q})$ with $p+q=2n-3$. In the above expression, we move all the maps of the form $(\mathrm{Id}^{\otimes p}\otimes R\otimes \mathrm{Id}^{\otimes q})$ to the left-hand side, preserving their original order, except for the right-most term $(R\otimes \mathrm{Id}^{\otimes (2n-3)})$ which we leave in its place. After this rearrangement, the expression becomes
\begin{align*}
    &  {\scriptstyle (\mathrm{Id}^{\otimes(n-2)}\otimes R\otimes \mathrm{Id}^{\otimes(n-1)})(\mathrm{Id}^{\otimes(n-1)}\otimes R\otimes \mathrm{Id}^{\otimes(n-2)})\cdots(\mathrm{Id}^{\otimes (2n-3)}\otimes R)(S_{n-1}\otimes \mathrm{Id}^{\otimes n}) } \\
    &\qquad {\scriptstyle  \underbrace{ {\scriptstyle (\mathrm{Id}\otimes S_{n-1}\otimes \mathrm{Id}^{\otimes(n-1)}) } }_{\mbox{\tiny (n+2)\text{\tiny-th position}}}(\mathrm{Id}^{\otimes 2}\otimes S_{n-1}\otimes \mathrm{Id}^{\otimes(n-2)} ) \cdots(\mathrm{Id}^{\otimes(n-1)}\otimes S_{n-1}\otimes \mathrm{Id})\underbrace{   {\scriptstyle (\mathrm{Id}\otimes S_{n-1}\otimes \mathrm{Id}^{\otimes(n-1)}) }  }_{\mbox{\tiny (2n+1)\text{\tiny-th position}}} (R\otimes \mathrm{Id}^{\otimes(2n-3)}) }.
\end{align*}
At this point, we may apply the $(n-1)$-Yang-Baxter equation to the $n$ maps located from $(n+2)$-th to $(2n+1)$-th positions. Subsequently, we get the expression 
\begin{align*}
    & {\scriptstyle (\mathrm{Id}^{\otimes(n-2)}\otimes R\otimes \mathrm{Id}^{\otimes(n-1)})(\mathrm{Id}^{\otimes(n-1)}\otimes R\otimes \mathrm{Id}^{\otimes(n-2)})\cdots(\mathrm{Id}^{\otimes (2n-3)}\otimes R)\underbrace{ {\scriptstyle (S_{n-1}\otimes \mathrm{Id}^{\otimes n}) } }_{\mbox{\tiny(n+1)-\text{\tiny th position}}} (\mathrm{Id}^{\otimes(n-1)}\otimes S_{n-1}\otimes \mathrm{Id}) } \\
    & \qquad {\scriptstyle (\mathrm{Id}\otimes S_{n-1}\otimes \mathrm{Id}^{\otimes(n-1)})\underbrace{   {\scriptstyle (\mathrm{Id}^{\otimes 2}\otimes S_{n-1}\otimes \mathrm{Id}^{\otimes(n-2)}) } }_{\mbox{\tiny(n+4)-\text{\tiny th position}}}\cdots(\mathrm{Id}^{\otimes(n-1)}\otimes S_{n-1}\otimes\mathrm{Id}) (R\otimes \mathrm{Id}^{\otimes(2n-3)}).}
\end{align*}
Note that the maps present in the $(n+1)$-th position and in the $(n+2)$-th position are interchangeable. Furthermore, we observe that the map $(R\otimes \mathrm{Id}^{\otimes(2n-3)})$, which appears in the last position, can be moved to the $(n+4)$-th position. This yields the expression 
\begin{align*}
    & {\scriptstyle  (\mathrm{Id}^{\otimes(n-2)}\otimes R\otimes \mathrm{Id}^{\otimes(n-1)})(\mathrm{Id}^{\otimes(n-1)}\otimes R\otimes \mathrm{Id}^{\otimes(n-2)})\cdots(\mathrm{Id}^{\otimes (2n-3)}\otimes R)\underbrace{ {\scriptstyle  (\mathrm{Id}^{\otimes(n-1)}\otimes S_{n-1}\otimes \mathrm{Id}) }  }_{\mbox{\tiny(n+1)-\text{\tiny th position}}}  }   \\
    & \qquad {\scriptstyle (S_{n-1}\otimes \mathrm{Id}^{\otimes n})(\mathrm{Id}\otimes S_{n-1}\otimes \mathrm{Id}^{\otimes(n-1)})(R\otimes \mathrm{Id}^{\otimes(2n-3)})(\mathrm{Id}^{\otimes 2}\otimes S_{n-1}\otimes \mathrm{Id}^{\otimes(n-2)})\cdots(\mathrm{Id}^{\otimes(n-1)}\otimes S_{n-1}\otimes \mathrm{Id})}.
\end{align*}
At this stage, we may successively apply the Yang-Baxter equation $(n-2)$ times to the maps present from $2$nd position to the $(n+1)$-th position, yielding the expression
\begin{align*}
    & {\scriptstyle (\mathrm{Id}^{\otimes(n-2)}\otimes R\otimes \mathrm{Id}^{\otimes(n-1)})(\mathrm{Id}^{\otimes n}\otimes S_{n-1})(\mathrm{Id}^{\otimes(n-1)}\otimes R\otimes \mathrm{Id}^{\otimes(n-2)})\cdots(\mathrm{Id}^{\otimes (2n-3)}\otimes R)  } \\
    &\qquad\qquad  \underbrace{ {\scriptstyle  (S_{n-1}\otimes \mathrm{Id}^{\otimes n}) } }_{\mbox{\tiny(n+2)\text{\tiny-th position}}}  \underbrace{ {\scriptstyle (\mathrm{Id}\otimes S_{n-1}\otimes \mathrm{Id}^{\otimes(n-1)})   } }_{\mbox{\tiny(n+3)\text{\tiny-th position}}} \underbrace{ {\scriptstyle (R\otimes \mathrm{Id}^{\otimes(2n-3)})  } }_{\mbox{\tiny(n+4)\text{\tiny -th position}}}  {\scriptstyle (\mathrm{Id}^{\otimes 2}\otimes S_{n-1}\otimes \mathrm{Id}^{\otimes(n-2)})\cdots(\mathrm{Id}^{\otimes(n-1)}\otimes S_{n-1}\otimes \mathrm{Id}) }.
\end{align*}
We can now interchange the maps in the 1st and 2nd positions. At the same time, we may successively apply the Yang-Baxter equation $(n-2)$ times to the three maps present in $(n+2)$-th, $(n+3)$-th and $(n+4)$-th positions. This gives the expression as 
\begin{align*}
   & {\scriptstyle (\mathrm{Id}^{\otimes n}\otimes S_{n-1}) (\mathrm{Id}^{\otimes(n-2)}\otimes R\otimes \mathrm{Id}^{\otimes(n-1)})(\mathrm{Id}^{\otimes(n-1)}\otimes R\otimes \mathrm{Id}^{\otimes(n-2)})(\mathrm{Id}^{\otimes n}\otimes R\otimes \mathrm{Id}^{\otimes(n-3)})\cdots  (\mathrm{Id}^{\otimes (2n-3)}\otimes R) } \\
   & \qquad {\scriptstyle (\mathrm{Id}^{\otimes(n-2)}\otimes R\otimes \mathrm{Id}^{\otimes(n-1)})(S_{n-1}\otimes \mathrm{Id}^{\otimes n})(\mathrm{Id}\otimes S_{n-1}\otimes \mathrm{Id}^{\otimes(n-1)})(\mathrm{Id}^{\otimes 2}\otimes S_{n-1}\otimes \mathrm{Id}^{\otimes(n-2)})  \cdots(\mathrm{Id}^{\otimes(n-1)}\otimes S_{n-1}\otimes \mathrm{Id})  }.
\end{align*}
At this place, for any $0 \leq p \leq n-3$, the map $(\mathrm{Id}^{\otimes(n+p)}\otimes R\otimes\mathrm{Id}^{\otimes (n-p-3)})$ can be placed from its original $(p+4)$-th position to the $(n+p+5)$-th position. Such a map $(\mathrm{Id}^{\otimes(n+p)}\otimes R\otimes\mathrm{Id}^{\otimes (n-p-3)})$ is placed between 
$(\mathrm{Id}^{\otimes(p+1)}\otimes S_{n-1}\otimes \mathrm{Id}^{\otimes(n-p-1)})$ and $(\mathrm{Id}^{\otimes(p+2)}\otimes S_{n-1}\otimes \mathrm{Id}^{\otimes(n-p-2)})$. After this transformation, the expression gives rise to  
\begin{align*}
    &(\mathrm{Id}^{\otimes n}\otimes S_{n-1})(\mathrm{Id}^{\otimes (n-2)}\otimes R\otimes \mathrm{Id}^{\otimes(n-1)})(\mathrm{Id}^{\otimes (n-1)}\otimes R\otimes \mathrm{Id}^{\otimes(n-2)})(\mathrm{Id}^{\otimes (n-2)}\otimes R\otimes \mathrm{Id}^{\otimes(n-1)})\\
    &\qquad\qquad\qquad(S_{n-1}\otimes \mathrm{Id}^{\otimes n})(\mathrm{Id}\otimes S_{n-1}\otimes \mathrm{Id}^{\otimes(n-1)})(\mathrm{Id}^{\otimes n}\otimes R\otimes \mathrm{Id}^{\otimes(n-3)})(\mathrm{Id}^{\otimes 2}\otimes S_{n-1}\otimes \mathrm{Id}^{\otimes(n-2)})\cdots\\
    &\qquad(\mathrm{Id}^{\otimes(p+1)}\otimes S_{n-1}\otimes \mathrm{Id}^{\otimes(n-p-1)})(\mathrm{Id}^{\otimes(n+p)}\otimes R\otimes\mathrm{Id}^{\otimes (n-p-3)})(\mathrm{Id}^{\otimes(p+2)}\otimes S_{n-1}\otimes \mathrm{Id}^{\otimes(n-p-2)})\cdots\\
&\qquad\qquad\qquad\qquad\qquad\qquad\qquad\qquad(\mathrm{Id}^{\otimes(n-2)}\otimes S_{n-1}\otimes \mathrm{Id}^{\otimes 2})(\mathrm{Id}^{\otimes (2n-3)}\otimes R)(\mathrm{Id}^{\otimes(n-1)}\otimes S_{n-1}\otimes \mathrm{Id})\\
    &=(\mathrm{Id}^{\otimes n}\otimes S_{n-1})\underbrace{(\mathrm{Id}^{\otimes (n-2)}\otimes R\otimes \mathrm{Id}^{\otimes(n-1)})(\mathrm{Id}^{\otimes (n-1)}\otimes R\otimes \mathrm{Id}^{\otimes(n-2)})(\mathrm{Id}^{\otimes (n-2)}\otimes R\otimes \mathrm{Id}^{\otimes(n-1)})}\\
    &\qquad\qquad\qquad\qquad\qquad(S_{n-1}\otimes \mathrm{Id}^{\otimes n})(\mathrm{Id}\otimes S_{n-1}\otimes \mathrm{Id}^{\otimes(n-1)})(\mathrm{Id}^{\otimes 2}\otimes S_{n}\otimes \mathrm{Id}^{\otimes(n-3)})\cdots (\mathrm{Id}^{\otimes (n-1)}\otimes S_{n})
    \end{align*}
    \begin{align*}
    &=(\mathrm{Id}^{\otimes n}\otimes S_{n-1})(\mathrm{Id}^{\otimes (n-1)}\otimes R\otimes \mathrm{Id}^{\otimes(n-2)})(\mathrm{Id}^{\otimes (n-2)}\otimes R\otimes \mathrm{Id}^{\otimes(n-1)})(\mathrm{Id}^{\otimes (n-1)}\otimes R\otimes \mathrm{Id}^{\otimes(n-2)})\\
    &\qquad\qquad\qquad\qquad\qquad(S_{n-1}\otimes \mathrm{Id}^{\otimes n})(\mathrm{Id}\otimes S_{n-1}\otimes \mathrm{Id}^{\otimes(n-1)})(\mathrm{Id}^{\otimes 2}\otimes S_{n}\otimes \mathrm{Id}^{\otimes(n-3)})\cdots (\mathrm{Id}^{\otimes (n-1)}\otimes S_{n})\\
    &=(\mathrm{Id}^{\otimes (n-1)}\otimes S_{n})(S_{n}\otimes \mathrm{Id}^{\otimes (n-1)})(\mathrm{Id}\otimes S_{n}\otimes \mathrm{Id}^{\otimes(n-2)}) \cdots (\mathrm{Id}^{\otimes (n-2)} \otimes S_n \otimes \mathrm{Id}) (\mathrm{Id}^{\otimes (n-1)}\otimes S_{n}).
\end{align*}In the underlined terms, we have used that $R$ is a Yang-Baxter operator on $V$. This shows that $S_n$ is a pre-$n$-Yang-Baxter operator on $V$. Finally, $S_n$ is invertible as it is a composition of invertible maps.
\end{proof}

\begin{exam}
    Let $(\mathcal{L} , \{ -, - \} , {\bf 1})$ be a central Leibniz algebra. Then the map $R : \mathcal{L} \otimes \mathcal{L} \rightarrow \mathcal{L} \otimes \mathcal{L}$ defined in (\ref{lebed-exp}) is a Yang-Baxter operator on $\mathcal{L}$. Hence by Theorem \ref{them-soln-nsoln}, one can formulate an $n$-Yang-Baxter operator $S_n : \mathcal{L}^{\otimes n} \rightarrow \mathcal{L}^{\otimes n}$  on the vector space $\mathcal{L}$. In particular, the map $S_3 : \mathcal{L}^{\otimes 3} \rightarrow \mathcal{L}^{\otimes 3}$ given by
\begin{align*}
    S_3 (x \otimes y \otimes z) = y \otimes z \otimes x ~ \! +~ \! y \otimes {\bf 1} \otimes \{ x, z \}  ~ \! +~ \!   {\bf 1} \otimes z \otimes \{ x, y \}  ~ \! +~ \!  {\bf 1} \otimes {\bf 1} \otimes \{ \{ x, y \}, z\}, \text{ for } x, y, z \in \mathcal{L},
\end{align*}
is a $3$-Yang-Baxter operator on $\mathcal{L}.$ Note that this $3$-Yang-Baxter operator on $\mathcal{L}$ is different than the one obtained from the central $3$-Leibniz algebra $(\mathcal{L}, [-, -, - ], {\bf 1})$, where $[x, y, z ] = \{ x, \{ y, z \} \}$, for $x, y, z \in \mathcal{L}$.
\end{exam}

In the following, we show that an $n$-Yang-Baxter operator naturally induces a Yang-Baxter operator. To understand this result for general values of $n$, we shall first briefly discuss the cases for $n=3, 4$ in the next two results. Adapting the ideas from such small values of $n$, we shall intuitively describe the general case.

\begin{prop}
    Let $V$ be a vector space and $S : V^{\otimes 3} \rightarrow V^{\otimes 3}$ be a $3$-Yang-Baxter operator on $V$. Then the map $\widetilde{S} : V^{\otimes 2} \otimes V^{\otimes 2} \rightarrow V^{\otimes 2} \otimes V^{\otimes 2}$ defined by $\widetilde{ S} = (S \otimes \mathrm{Id}) (\mathrm{Id} \otimes S)$ is a Yang-Baxter operator on the vector space $V^{\otimes 2}$.
\end{prop}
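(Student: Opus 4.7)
The plan is to verify the ordinary Yang-Baxter equation for $\widetilde{S}$ on $V^{\otimes 2}$ by unpacking both sides into words in copies of $S$ acting on three consecutive tensor factors of $V^{\otimes 6}$. Writing $S_{ijk}$ for the operator acting as $S$ on the $i$-th, $j$-th, $k$-th tensor factors and as the identity elsewhere, I would first observe that on $V^{\otimes 4}$ one has $\widetilde{S} = S_{123}\,S_{234}$, so on $V^{\otimes 6}$ the relevant embeddings are
\[
\widetilde{S}\otimes \mathrm{Id}_{V^{\otimes 2}} \;=\; S_{123}\,S_{234} \qquad \text{and} \qquad \mathrm{Id}_{V^{\otimes 2}}\otimes \widetilde{S} \;=\; S_{345}\,S_{456}.
\]
The Yang-Baxter equation $(\widetilde{S}\otimes\mathrm{Id})(\mathrm{Id}\otimes \widetilde{S})(\widetilde{S}\otimes\mathrm{Id}) = (\mathrm{Id}\otimes \widetilde{S})(\widetilde{S}\otimes\mathrm{Id})(\mathrm{Id}\otimes \widetilde{S})$ for $\widetilde{S}$ then reduces to the identity
\[
S_{123}S_{234}S_{345}S_{456}S_{123}S_{234} \;=\; S_{345}S_{456}S_{123}S_{234}S_{345}S_{456}
\]
in $\mathrm{End}(V^{\otimes 6})$.

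The two ingredients that make this identity work are immediate. First, since $S_{123}$ and $S_{456}$ act on disjoint tensor factors, they commute. Second, embedding the $3$-Yang-Baxter equation satisfied by $S$ into $V^{\otimes 6}$ along the first five and along the last five tensor factors yields respectively
\[
S_{123}S_{234}S_{345}S_{123} \;=\; S_{345}S_{123}S_{234}S_{345}, \qquad S_{234}S_{345}S_{456}S_{234} \;=\; S_{456}S_{234}S_{345}S_{456}.
\]

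I would then manipulate the left-hand side in four steps: (i) commute $S_{456}$ past $S_{123}$ to bring the word into the form $(S_{123}S_{234}S_{345}S_{123})\,S_{456}S_{234}$; (ii) apply the first embedded $3$-YBE to rewrite the parenthesized block as $S_{345}S_{123}S_{234}S_{345}$, giving $S_{345}S_{123}\,(S_{234}S_{345}S_{456}S_{234})$; (iii) apply the second embedded $3$-YBE to rewrite the new parenthesized block as $S_{456}S_{234}S_{345}S_{456}$; (iv) commute $S_{123}$ back past $S_{456}$ to obtain $S_{345}S_{456}S_{123}S_{234}S_{345}S_{456}$, which is the right-hand side. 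Invertibility of $\widetilde{S}$ is automatic from that of $S$. The only real obstacle is the bookkeeping in step (i) and (iv), where the commutation of the disjoint operators $S_{123}$ and $S_{456}$ must be used to position the two $3$-YBE substitutions so that they apply in sequence; this is the essential reason the reduction from arity three to arity two succeeds.
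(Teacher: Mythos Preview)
Your proposal is correct and is essentially identical to the paper's own proof: both unpack the two sides on $V^{\otimes 6}$, commute the disjoint operators $S_{123}$ and $S_{456}$, apply the embedded $3$-Yang-Baxter equation on factors $1$--$5$ and then on factors $2$--$6$, and commute $S_{123}$ and $S_{456}$ back. The paper writes the same four steps using the notation $\mathrm{Id}^{\otimes i}\otimes S\otimes \mathrm{Id}^{\otimes j}$ instead of your $S_{ijk}$, but the manipulations line up step for step.
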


\begin{proof}
    We observe that
    \begin{align*}
        &(\widetilde{S} \otimes \mathrm{Id}^{\otimes 2}) (\mathrm{Id}^{\otimes 2} \otimes \widetilde{S}) (\widetilde{S} \otimes \mathrm{Id}^{\otimes 2}) \\
        &= (S \otimes \mathrm{Id}^{\otimes 3}) (\mathrm{Id} \otimes S \otimes \mathrm{Id}^{\otimes 2}) (\mathrm{Id}^{\otimes 2} \otimes S \otimes \mathrm{Id}) (\mathrm{Id}^{\otimes 3} \otimes S)  (S \otimes \mathrm{Id}^{\otimes 3}) (\mathrm{Id} \otimes S \otimes \mathrm{Id}^{\otimes 2}) \\
        &=  \underbrace{(S \otimes \mathrm{Id}^{\otimes 3})  (\mathrm{Id} \otimes S \otimes \mathrm{Id}^{\otimes 2}) (\mathrm{Id}^{\otimes 2} \otimes S \otimes \mathrm{Id}) (S \otimes \mathrm{Id}^{\otimes 3})}  (\mathrm{Id}^{\otimes 3} \otimes S) (\mathrm{Id} \otimes S \otimes \mathrm{Id}^{\otimes 2}) \\ 
        &= (\mathrm{Id}^{\otimes 2} \otimes S \otimes \mathrm{Id}) (S \otimes \mathrm{Id}^{\otimes 3}) \underbrace{ (\mathrm{Id} \otimes S \otimes \mathrm{Id}^{\otimes 2}) (\mathrm{Id}^{\otimes 2} \otimes S \otimes \mathrm{Id}) (\mathrm{Id}^{\otimes 3} \otimes S) (\mathrm{Id} \otimes S \otimes \mathrm{Id}^{\otimes 2})} \\ 
        &= (\mathrm{Id}^{\otimes 2} \otimes S \otimes \mathrm{Id}) (S \otimes \mathrm{Id}^{\otimes 3}) (\mathrm{Id}^{\otimes 3} \otimes S)  (\mathrm{Id} \otimes S \otimes \mathrm{Id}^{\otimes 2}) (\mathrm{Id}^{\otimes 2} \otimes S \otimes \mathrm{Id}) (\mathrm{Id}^{\otimes 3} \otimes S) \\
        &= (\mathrm{Id}^{\otimes 2} \otimes S \otimes \mathrm{Id}) (\mathrm{Id}^{\otimes 3} \otimes S)  (S \otimes \mathrm{Id}^{\otimes 3})  (\mathrm{Id} \otimes S \otimes \mathrm{Id}^{\otimes 2})  (\mathrm{Id}^{\otimes 2} \otimes S \otimes \mathrm{Id}) (\mathrm{Id}^{\otimes 3} \otimes S) \\
        &= (\mathrm{Id}^{\otimes 2} \otimes \widetilde{S})   (\widetilde{S} \otimes \mathrm{Id}^{\otimes 2}) (\mathrm{Id}^{\otimes 2} \otimes \widetilde{S}).
    \end{align*}
    Regarding the underlined terms in the above calculation, we used the fact that $S$ is a $3$-Yang-Baxter operator on $V$. Finally, the map $\widetilde{S}$ is invertible as it is a composition of invertible maps.
\end{proof}

\begin{prop}
    Let $S : V^{\otimes 4} \rightarrow V^{\otimes 4}$ be a $4$-Yang-Baxter operator on a vector space $V$. Then the map $\widetilde{S} : V^{\otimes 3} \otimes  V^{\otimes 3}  \rightarrow V^{\otimes 3} \otimes  V^{\otimes 3}$ defined by $\widetilde{S} = (S \otimes \mathrm{Id}^{\otimes 2}) (\mathrm{Id} \otimes S \otimes \mathrm{Id}) (\mathrm{Id}^{\otimes 2} \otimes S)$ is a Yang-Baxter operator on the vector space $V^{\otimes 3}$.
\end{prop}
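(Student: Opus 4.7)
The plan is to establish the Yang-Baxter equation
\begin{align*}
(\widetilde{S} \otimes \mathrm{Id}^{\otimes 3})(\mathrm{Id}^{\otimes 3} \otimes \widetilde{S})(\widetilde{S} \otimes \mathrm{Id}^{\otimes 3}) = (\mathrm{Id}^{\otimes 3} \otimes \widetilde{S})(\widetilde{S} \otimes \mathrm{Id}^{\otimes 3})(\mathrm{Id}^{\otimes 3} \otimes \widetilde{S})
\end{align*}
on $V^{\otimes 9}$ by direct manipulation, following the same pattern as the previous proposition for $n = 3$. For $1 \leq i \leq 6$, let $R_i := \mathrm{Id}^{\otimes (i-1)} \otimes S \otimes \mathrm{Id}^{\otimes (6-i)}$ denote the copy of $S$ acting on tensor positions $i, i+1, i+2, i+3$ of $V^{\otimes 9}$. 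Unpacking the definition of $\widetilde{S}$ gives $\widetilde{S} \otimes \mathrm{Id}^{\otimes 3} = R_1 R_2 R_3$ and $\mathrm{Id}^{\otimes 3} \otimes \widetilde{S} = R_4 R_5 R_6$, so the identity to prove reduces to the equality of nine-letter words
\begin{align*}
R_1 R_2 R_3 R_4 R_5 R_6 R_1 R_2 R_3 = R_4 R_5 R_6 R_1 R_2 R_3 R_4 R_5 R_6.
\end{align*}

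I would verify this equation using two families of moves. The first is the commutation $R_i R_j = R_j R_i$ whenever $|i - j| \geq 4$, which holds because the corresponding copies of $S$ act on disjoint tensor positions; on $V^{\otimes 9}$ this yields exactly three commuting pairs, namely $(R_1, R_5)$, $(R_1, R_6)$, and $(R_2, R_6)$. The second is the 4-Yang-Baxter relation itself, which in our indexing takes the form $R_i R_{i+1} R_{i+2} R_{i+3} R_i = R_{i+3} R_i R_{i+1} R_{i+2} R_{i+3}$ and is valid for each $i \in \{1, 2, 3\}$ on $V^{\otimes 9}$. Starting from the left-hand side, the commutations $R_5 R_6 R_1 = R_1 R_5 R_6$ and $R_6 R_2 = R_2 R_6$ rewrite the word as $R_1 R_2 R_3 R_4 R_1 R_5 R_2 R_6 R_3$, exposing the prefix $R_1 R_2 R_3 R_4 R_1$; applying the 4-Yang-Baxter relation at $i = 1$ then exposes a subword $R_2 R_3 R_4 R_5 R_2$, applying it at $i = 2$ reveals $R_3 R_4 R_5 R_6 R_3$, and applying it at $i = 3$ yields the word $R_4 R_1 R_5 R_2 R_6 R_3 R_4 R_5 R_6$, which after a symmetric set of commutations equals the right-hand side.

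The main obstacle will be the bookkeeping of this rewrite sequence: with only three commuting pairs available, the three successive applications of the 4-Yang-Baxter relation must be timed so that the correct five-letter subword is exposed at each step. Once the equality of compositions is established, invertibility of $\widetilde{S}$ follows automatically, as it is a composition of three invertible maps, and hence $\widetilde{S}$ is a Yang-Baxter operator on $V^{\otimes 3}$.
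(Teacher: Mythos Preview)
Your proposal is correct and follows essentially the same approach as the paper's proof: both expand the two sides as nine-fold compositions, use the disjoint-support commutations to bring the word into the form $R_1R_2R_3R_4R_1\,R_5R_2\,R_6R_3$, and then apply the $4$-Yang-Baxter relation three times (at $i=1,2,3$) followed by the reverse commutations to reach the right-hand side. Your compact $R_i$ notation makes the bookkeeping considerably cleaner than the paper's explicit $\mathrm{Id}^{\otimes j}\otimes S\otimes\mathrm{Id}^{\otimes k}$ expressions, but the sequence of moves is identical.
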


\begin{proof}
By a direct calculation, we see that
    \begin{align*}
        &(\widetilde{S} \otimes \mathrm{Id}^{\otimes 3})  (\mathrm{Id}^{\otimes 3} \otimes \widetilde{S}) (\widetilde{S} \otimes \mathrm{Id}^{\otimes 3}) \\
       & = ( S \otimes \mathrm{Id}^{\otimes 5}) (\mathrm{Id} \otimes S \otimes \mathrm{Id}^{\otimes 4}) (\mathrm{Id}^{\otimes 2} \otimes S \otimes \mathrm{Id}^{\otimes 3})     
(\mathrm{Id}^{\otimes 3} \otimes S \otimes \mathrm{Id}^{\otimes 2}) (\mathrm{Id}^{\otimes 4} \otimes S \otimes \mathrm{Id}) (\mathrm{Id}^{\otimes 5} \otimes S) \\ 
& \qquad \qquad \qquad \qquad \qquad \qquad \qquad \qquad \qquad ( S \otimes \mathrm{Id}^{\otimes 5}) (\mathrm{Id} \otimes S \otimes \mathrm{Id}^{\otimes 4}) (\mathrm{Id}^{\otimes 2} \otimes S \otimes \mathrm{Id}^{\otimes 3})  \\
&= \underbrace{( S \otimes \mathrm{Id}^{\otimes 5}) (\mathrm{Id} \otimes S \otimes \mathrm{Id}^{\otimes 4}) (\mathrm{Id}^{\otimes 2} \otimes S \otimes \mathrm{Id}^{\otimes 3})     
(\mathrm{Id}^{\otimes 3} \otimes S \otimes \mathrm{Id}^{\otimes 2})  ( S \otimes \mathrm{Id}^{\otimes 5}) } (\mathrm{Id}^{\otimes 4} \otimes S \otimes \mathrm{Id}) \\
& \qquad \qquad \qquad \qquad \qquad \qquad \qquad \qquad \qquad  (\mathrm{Id} \otimes S \otimes \mathrm{Id}^{\otimes 4}) (\mathrm{Id}^{\otimes 5} \otimes S) (\mathrm{Id}^{\otimes 2} \otimes S \otimes \mathrm{Id}^{\otimes 3}) 
\end{align*}
\begin{align*}
&= (\mathrm{Id}^{\otimes 3} \otimes S \otimes \mathrm{Id}^{\otimes 2}) ( S \otimes \mathrm{Id}^{\otimes 5}) \underbrace{(\mathrm{Id} \otimes S \otimes \mathrm{Id}^{\otimes 4}) (\mathrm{Id}^{\otimes 2} \otimes S \otimes \mathrm{Id}^{\otimes 3}) (\mathrm{Id}^{\otimes 3} \otimes S \otimes \mathrm{Id}^{\otimes 2}) (\mathrm{Id}^{\otimes 4} \otimes S \otimes \mathrm{Id}) }_{\text{underline continued}} \\
& \qquad \qquad \qquad \qquad \qquad \qquad \qquad \qquad \underbrace{(\mathrm{Id} \otimes S \otimes \mathrm{Id}^{\otimes 4}) } (\mathrm{Id}^{\otimes 5} \otimes S) (\mathrm{Id}^{\otimes 2} \otimes S \otimes \mathrm{Id}^{\otimes 3}) \\
&= (\mathrm{Id}^{\otimes 3} \otimes S \otimes \mathrm{Id}^{\otimes 2}) ( S \otimes \mathrm{Id}^{\otimes 5}) (\mathrm{Id}^{\otimes 4} \otimes S \otimes \mathrm{Id}) (\mathrm{Id} \otimes S \otimes \mathrm{Id}^{\otimes 4}) \\
& \qquad \qquad \qquad \qquad \underbrace{ (\mathrm{Id}^{\otimes 2} \otimes S \otimes \mathrm{Id}^{\otimes 3}) (\mathrm{Id}^{\otimes 3} \otimes S \otimes \mathrm{Id}^{\otimes 2})  (\mathrm{Id}^{\otimes 4} \otimes S \otimes \mathrm{Id}) (\mathrm{Id}^{\otimes 5} \otimes S) (\mathrm{Id}^{\otimes 2} \otimes S \otimes \mathrm{Id}^{\otimes 3})} \\
&= (\mathrm{Id}^{\otimes 3} \otimes S \otimes \mathrm{Id}^{\otimes 2}) ( S \otimes \mathrm{Id}^{\otimes 5}) (\mathrm{Id}^{\otimes 4} \otimes S \otimes \mathrm{Id}) (\mathrm{Id} \otimes S \otimes \mathrm{Id}^{\otimes 4}) (\mathrm{Id}^{\otimes 5} \otimes S) (\mathrm{Id}^{\otimes 2} \otimes S \otimes \mathrm{Id}^{\otimes 3}) \\
& \qquad \qquad \qquad \qquad \qquad \qquad \qquad \qquad (\mathrm{Id}^{\otimes 3} \otimes S \otimes \mathrm{Id}^{\otimes 2}) (\mathrm{Id}^{\otimes 4} \otimes S \otimes \mathrm{Id}) (\mathrm{Id}^{\otimes 5} \otimes S) \\
&= (\mathrm{Id}^{\otimes 3} \otimes S \otimes \mathrm{Id}^{\otimes 2})  (\mathrm{Id}^{\otimes 4} \otimes S \otimes \mathrm{Id})  (\mathrm{Id}^{\otimes 5} \otimes S)  ( S \otimes \mathrm{Id}^{\otimes 5})  (\mathrm{Id} \otimes S \otimes \mathrm{Id}^{\otimes 4}) (\mathrm{Id}^{\otimes 2} \otimes S \otimes \mathrm{Id}^{\otimes 3}) \\ & \qquad \qquad \qquad \qquad \qquad \qquad \qquad \qquad 
  (\mathrm{Id}^{\otimes 3} \otimes S \otimes \mathrm{Id}^{\otimes 2}) (\mathrm{Id}^{\otimes 4} \otimes S \otimes \mathrm{Id})  (\mathrm{Id}^{\otimes 5} \otimes S) \\
  &= (\mathrm{Id}^{\otimes 3} \otimes \widetilde{S})   (\widetilde{S} \otimes \mathrm{Id}^{\otimes 3})  (\mathrm{Id}^{\otimes 3} \otimes \widetilde{S}).
    \end{align*}
    In the underlined terms, we have used that $S$ is a $4$-Yang-Baxter operator on $V$. The map $\widetilde{S}$ is invertible as it is a composition of invertible maps.
\end{proof}

As mentioned earlier, we will consider a generalization of the two propositions mentioned above. More precisely, we have the following result.

\begin{thm}\label{theorem-19}
    Let $V$ be a vector space and $S : V^{\otimes n} \rightarrow V^{\otimes n}$ be an $n$-Yang-Baxter operator on $V$. We define a map $\widetilde{S} : V^{\otimes (n-1)} \otimes V^{\otimes (n-1)}  \rightarrow V^{\otimes (n-1)}  \otimes V^{\otimes (n-1)} $ by 
    \begin{align*}
        \widetilde{S} =   (S \otimes \mathrm{Id}^{\otimes (n-2)}) (\mathrm{Id} \otimes S \otimes \mathrm{Id}^{\otimes (n-3)}) \cdots (\mathrm{Id}^{\otimes (n-3)} \otimes S \otimes \mathrm{Id}) (\mathrm{Id}^{\otimes (n-2)} \otimes S).
    \end{align*}
    Then $\widetilde{S} $ is a Yang-Baxter operator on the vector space $V^{\otimes (n-1)}$.
\end{thm}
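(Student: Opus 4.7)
The plan is to verify the Yang-Baxter equation for $\widetilde{S}$ by expanding both sides as compositions of $3(n-1)$ shifted copies of $S$ acting on $V^{\otimes 3(n-1)}$, and then transforming the left-hand side into the right-hand side by repeated applications of the $n$-Yang-Baxter equation for $S$ together with commutations of shifted copies whose supports are disjoint. This generalizes the explicit calculations of the two propositions preceding the theorem, which correspond to the cases $n=3$ and $n=4$.

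For $0 \leq k \leq 2n-3$, set $S_k := \mathrm{Id}^{\otimes k} \otimes S \otimes \mathrm{Id}^{\otimes (2n-3-k)}$ as a linear endomorphism of $V^{\otimes 3(n-1)}$. Directly from the definition of $\widetilde{S}$ one has
\[
\widetilde{S} \otimes \mathrm{Id}^{\otimes (n-1)} \;=\; S_0 S_1 \cdots S_{n-2}, \qquad \mathrm{Id}^{\otimes (n-1)} \otimes \widetilde{S} \;=\; S_{n-1} S_n \cdots S_{2n-3},
\]
so that the Yang-Baxter equation for $\widetilde{S}$ reduces to the identity
\[
(S_0 \cdots S_{n-2})(S_{n-1} \cdots S_{2n-3})(S_0 \cdots S_{n-2}) \;=\; (S_{n-1} \cdots S_{2n-3})(S_0 \cdots S_{n-2})(S_{n-1} \cdots S_{2n-3})
\]
in the algebra of linear endomorphisms of $V^{\otimes 3(n-1)}$. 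Two admissible moves are available: (M1) $S_k$ and $S_\ell$ commute whenever $|k-\ell| \geq n$, since the corresponding $S$-slots then act on disjoint tensor factors; and (M2) for any $0 \leq k \leq n-2$, the $n$-Yang-Baxter equation applied to consecutive positions starting at $k$ yields the local relation
\[
S_k S_{k+1} \cdots S_{k+n-2} S_{k+n-1} S_k \;=\; S_{k+n-1} S_k S_{k+1} \cdots S_{k+n-2} S_{k+n-1}.
\]

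The strategy is then to locate, at each stage, a subword of the current expression matching the left-hand side of (M2), using (M1) to bring the relevant generators adjacent when necessary, and replace it using (M2); iterating this procedure transforms the expanded LHS into the expanded RHS. The underlined passages in the proofs of the two preceding propositions supply the prototype for these moves. Invertibility of $\widetilde{S}$ is automatic, since it is a composition of invertible linear maps.

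The main obstacle is the combinatorial bookkeeping for general $n$: although each individual move is elementary, writing out an explicit sequence of (M1) and (M2) transformations that converts the LHS into the RHS becomes notationally cumbersome as $n$ grows, the number of rewrite steps being quadratic in $n$. The explicit chains for $n=3$ and $n=4$ provide a working template; the general case can be handled either by a careful direct computation patterned on those, or by an induction on $n$ with a suitably strengthened hypothesis, in which the inductive step amounts to peeling off the outermost generators $S_0$ and $S_{2n-3}$ from each of the three blocks (using (M1) and one application of (M2) each) and reducing to the analogous identity involving only $3(n-2)$ shifted copies.
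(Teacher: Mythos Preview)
Your setup is exactly right and matches the paper: the notation $S_k$, the reduction of the Yang-Baxter equation for $\widetilde{S}$ to the identity in the $S_k$'s, and the two moves (M1) and (M2) are precisely the ingredients used. Invertibility is handled identically.

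The gap is that you never actually supply the sequence of moves. Saying ``iterating this procedure transforms the expanded LHS into the expanded RHS'' is the whole content of the theorem, and it is not automatic: (M1) and (M2) do not generate a confluent rewriting system in any obvious sense, so one must exhibit a specific chain of rewrites. The paper does this explicitly for general $n$. It first uses (M1) on the middle block to push $S_0$ leftward and $S_{2n-3}$ rightward inside the expansion, then applies (M2) successively at positions $0,1,\ldots,n-2$ (each application creating the configuration needed for the next), and finally undoes the initial (M1) interchanges on the left block. The point is that after the first (M1) shuffle, the leftmost $n+1$ factors form exactly the pattern $S_0 S_1\cdots S_{n-1} S_0$, to which (M2) applies; the output has shape $S_{n-1} S_0 S_1 \cdots S_{n-1}$, and the next $(n+1)$-factor window starting two positions to the right again matches (M2). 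This cascading is what makes the argument go through, and it is what you need to write down.

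Your suggested induction is not well-posed as stated. You propose to reduce to ``the analogous identity involving only $3(n-2)$ shifted copies,'' but $n$ is fixed throughout: $S$ is an $n$-Yang-Baxter operator, and the relation (M2) always has length $n+1$. Peeling off $S_0$ and $S_{2n-3}$ does not produce an instance of the same problem for a smaller parameter; it produces a different identity in the same $S_k$'s, for which you would need a separate (and stronger) inductive hypothesis that you have not formulated. If you want an inductive proof, the natural parameter is not $n$ but the number of (M2) applications remaining, which amounts to writing out the cascade above.
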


\begin{proof}
If we write the expression $(\widetilde{S} \otimes \mathrm{Id}^{\otimes (n-1)}) ( \mathrm{Id}^{\otimes (n-1)} \otimes \widetilde{S}) (\widetilde{S} \otimes \mathrm{Id}^{\otimes (n-1)})$ in terms of $S$, we obtain a composition of $3 (n-1)$ maps, each of which is of the form $(\mathrm{Id}^{\otimes i} \otimes S \otimes \mathrm{Id}^{\otimes j})$ with $i+j = 2n-3$. By keeping the first $n-1$ factors (from the left-hand side) unchanged, we apply the following to the last $2(n-1)$ factors.

    First, observe that the $(2n-2)$-th factor $(\mathrm{Id}^{\otimes (2n-3)} \otimes S)$ and the $(2n-1)$-th factor $(S \otimes \mathrm{Id}^{\otimes (2n-3)})$ are interchangable. After this interchange, we move the new $(2n-2)$-th factor $(S \otimes \mathrm{Id}^{\otimes (2n-3)})$ to the left and put it in the $(n+1)$-th position. Similarly, we move the $(2n-1)$-th factor $(\mathrm{Id}^{\otimes (2n-3)} \otimes S)$ to the right and put it in the $(3n-4)$-th position. Pictorially, this process can be understood by the diagram:
    
\begin{align}\label{inter-diagram}
 \xymatrix{
   \substack{n\text{-th} \\ \text{position}}  & \substack{(n+1)\text{-th} \\ \text{position}} & \ldots & \substack{(2n-2)\text{-th} \\ \text{position}} \ar[rrrddd]  & \substack{(2n-1)\text{-th} \\ \text{position}} \ar[dddlll] & \ldots &  \substack{(3n-4)\text{-th} \\ \text{position}}& \substack{(3n-3)\text{-th} \\ \text{position}} \\
     & & & & & & & \\
     & & & & & & & \\
     & & & & & & & 
    }
\end{align}
After this transformation, on the first $(n+1)$ factors, one may apply the $n$-Yang-Baxter equation. Soon after, from the $3$-rd to $(n+3)$-th factor, one may again apply the $n$-Yang-Baxter equation. Immediately after this, one may apply again the $n$-Yang-Baxter equation to the $5$-th to $(n+5)$-th factor. This process continues until we apply the $n$-Yang-Baxter equation to the $(2n-3)$-th factor to $3 (n-1)$-th factor. After completing this, we apply the inverse process of (\ref{inter-diagram}) to the first $2(n-1)$ factors, which can be described by the following:

\begin{align*}
 \xymatrix{
   \substack{1\text{-st} \\ \text{position}}  & \substack{2\text{-nd} \\ \text{position}} & \ldots & \substack{(n-1)\text{-th} \\ \text{position}}   & \substack{n\text{-th} \\ \text{position}} & \ldots &  \substack{(2n-3)\text{-th} \\ \text{position}}& \substack{(2n-2)\text{-th} \\ \text{position}} \\
     & & & & & & & \\
     & & & & & & & \\
     & \ar[rrruuu] & & & & & \ar[llluuu] & \\
    }
\end{align*}
One can easily see that the final expression coincides with $(\mathrm{Id}^{\otimes (n-1)} \otimes \widetilde{S}) (\widetilde{S} \otimes \mathrm{Id}^{\otimes (n-1)}) (\mathrm{Id}^{\otimes (n-1)} \otimes \widetilde{S})$. Hence, the result follows.
\end{proof}

Let $(\mathcal{L}, [-, \ldots, - ], {\bf 1})$ be a central $n$-Leibniz algebra. By Theorem \ref{central-n-leibniz-nyb}, we obtain an $n$-Yang-Baxter operator $S : \mathcal{L}^{\otimes n} \rightarrow \mathcal{L}^{\otimes n}$, and hence a Yang-Baxter operator $\widetilde{S} : \mathcal{L}^{\otimes (n-1)} \otimes \mathcal{L}^{\otimes (n-1)} \rightarrow \mathcal{L}^{\otimes (n-1)} \otimes \mathcal{L}^{\otimes (n-1)}$ on the vector space $\mathcal{L}^{\otimes (n-1)}$. The explicit form of $\widetilde{S}$ is given by
\begin{align*}
    &\widetilde{S} (x_1 \otimes \cdots \otimes x_{n-1} \otimes y_1 \otimes \cdots \otimes y_{n-1}) \\
    &= (S \otimes \mathrm{Id}^{\otimes (n-2)}) \cdots ( \mathrm{Id}^{\otimes (n-3)} \otimes S \otimes \mathrm{Id}) (\mathrm{Id}^{\otimes (n-2)} \otimes S) (x_1 \otimes \cdots \otimes x_{n-1} \otimes y_1 \otimes \cdots \otimes y_{n-1}) \\
    &= (S \otimes \mathrm{Id}^{\otimes (n-2)}) \cdots ( \mathrm{Id}^{\otimes (n-3)} \otimes S \otimes \mathrm{Id}) \big( x_1 \otimes \cdots \otimes x_{n-2} \otimes y_1 \otimes \cdots \otimes y_{n-1} \otimes x_{n-1} \\
    & \qquad \qquad \qquad \qquad \qquad \qquad \quad  \qquad \qquad \qquad + x_1 \otimes \cdots \otimes x_{n-2} \otimes \underbrace{ {\bf 1} \otimes \cdots \otimes {\bf 1}}_{n-2 \text{~times}} \otimes [x_{n-1}, y_1, \ldots, y_{n-1}] \big) \\
    &= y_1 \otimes \cdots \otimes y_{n-1} \otimes x_1 \otimes \cdots \otimes x_{n-1} ~+~ \underbrace{ {\bf 1} \otimes \cdots \otimes {\bf 1}}_{n-1 \text{~times}} \otimes \big(    \sum_{i=1}^{n-1} x_1 \otimes \cdots \otimes [x_i, y_1, \ldots, y_{n-1}] \otimes \cdots \otimes x_{n-1}  \big).
\end{align*}
The final expression shows that the Yang-Baxter operator $\widetilde{S}$ is simply the one obtained from the central Leibniz algebra $(\mathcal{L}^{\otimes (n-1)},  \{ -, - \}, {\bf 1}^{\otimes (n-1)})$. Thus, the following diagram also commutes
\begin{align*}
 \xymatrix{
     & \substack{n\text{-Yang-Baxter operator} \\ S : \mathcal{L}^{\otimes n} \rightarrow \mathcal{L}^{\otimes n} } \ar[rr]^{\text{Theorem~} \ref{theorem-19}} &  & \text{Yang-Baxter operator } \ar@{=}[dd]\\
     \substack{\text{central }n\text{-Leibniz} \\ (\mathcal{L}, [-, \ldots, -], {\bf 1}) } \ar@/^0.8pc/[ru]^{\text{Thm } \ref{central-n-leibniz-nyb}}   \ar@/_0.8pc/[rd]_{\text{Prop } \ref{funda-leibniz}} &  &  & \\
      & \substack{\text{central Leibniz}\\ (   \mathcal{L}^{\otimes n-1 }, \{ -, - \}  , {\bf 1}^{\otimes n-1}) } \ar[rr]_{\mathrm{ Eq. ~} (\ref{lebed-exp})}   & & \text{Yang-Baxter operator.}
    }
\end{align*}

\medskip

In the following, we shall consider the set-theoretical $n$-Yang-Baxter equation. Let $X$ be a nonempty set. For any $n \geq 2$, a set-map $s : X^{\times n} \rightarrow X^{\times n}$ is said to be a {\bf set-theoretical solution of the $n$-Yang-Baxter equation} (also called a {\bf set-theoretical $n$-solution} in short) if $s$ is bijective and satisfies
\begin{align}\label{set-nybe}
     (s \times \mathrm{Id}^{\times (n-1)} )& \cdots (  \mathrm{Id}^{\times (n-2)} \times s \times \mathrm{Id} ) (\mathrm{Id}^{\times (n-1)} \times s) (s \times \mathrm{Id}^{\times (n-1)}) \\
     &= (\mathrm{Id}^{\times (n-1)} \times s) (s \times \mathrm{Id}^{\times (n-1)}) \cdots (\mathrm{Id}^{\times (n-2)} \times s \times \mathrm{Id}) ( \mathrm{Id}^{\times (n-1)} \times s). \nonumber
\end{align}

\begin{exam}
    Let $X$ be a nonempty set. Then the identity map $\mathrm{Id}_{X^{\times n}} : X^{\times n} \rightarrow X^{\times n}$ is a set-theoretical $n$-solution. The map
    \begin{align*}
        s : X^{\times n} \rightarrow X^{\times n} ~~~ \text{ given by }  ~~~ s (x_1, \ldots, x_n) = (x_2, \ldots, x_n, x_1)
    \end{align*}
    is also a set-theoretical $n$-solution, called the {\em flip}. More generally, let $f_1, \ldots, f_n : X \rightarrow X$ be any arbitrary functions on $X$. Then the set-map $s : X^{\times n} \rightarrow X^{\times n}$ defined by $s(x_1, \ldots, x_n) = ( f_2 (x_2), \ldots, f_n (x_n), f_1 (x_1))$ is a set-theoretical $n$-solution if and only if $f_1$ commues with each $f_i$, that is, $f_1 f_i = f_i f_1$, for all $i =1, \ldots , n$.
\end{exam}

\begin{exam}
    Let $G$ be a group with the neutral element $1$. Then the maps $s_1, s_2 : G^{\times 3} \rightarrow G^{\times 3}$ defined by $s_1 (g, h, k)= (1, 1, ghk)$ and $s_2 (g, h, k) = (1,1, gh^{-1} k)$ both are set-theoretical $3$-solutions.
\end{exam}

\begin{thm}\label{diag-thm}
    Let $X$ be a nonempty set and $\langle -, \ldots, - \rangle : X \times \cdots \times X \rightarrow X$ be an $n$-ary operation on $X$. Define a map $s : X^{\times n} \rightarrow X^{\times n}$ by 
    \begin{align*}
        s (x_1, x_2, \ldots, x_n ) =   (x_2, \ldots, x_n, \langle x_1, x_2, \ldots, x_n \rangle), \text{ for } (x_1, \ldots, x_n ) \in X^{\times n}.
    \end{align*}
    Then $(X, \langle -, \ldots, - \rangle)$ is an $n$-rack if and only if $s$ is a set-theoretical $n$-solution.
\end{thm}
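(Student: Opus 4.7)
The plan is a direct computation: we evaluate both sides of the set-theoretical $n$-Yang-Baxter equation (\ref{set-nybe}) on an arbitrary tuple $(x_1, \ldots, x_n, y_1, \ldots, y_{n-1}) \in X^{\times(2n-1)}$ and check that the two outputs can differ only in the last coordinate, where the equality between them matches exactly the $n$-shelf distributivity (\ref{n-shelf}). Bijectivity of $s$ will be separated off and dealt with at the end.

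To organize the computation, I would introduce the shorthand $B_k := \mathrm{Id}^{\times (k-1)} \times s \times \mathrm{Id}^{\times (n-k)}$ for $k = 1, \ldots, n$, so that reading right to left, the LHS of (\ref{set-nybe}) is the composition $B_1 B_2 \cdots B_{n-1} B_n B_1$ and the RHS is $B_n B_1 B_2 \cdots B_{n-1} B_n$. For the LHS, the first two applications $B_1$ then $B_n$ send the starting tuple to $(x_2, \ldots, x_n, y_1, \ldots, y_{n-1}, \langle \langle x_1, \ldots, x_n\rangle, y_1, \ldots, y_{n-1}\rangle)$. The subsequent applications $B_{n-1}, B_{n-2}, \ldots, B_2$ successively consume one $x_i$ from the left (replacing it by the block $y_1, \ldots, y_{n-1}$ shifted in) and deposit the angular bracket $\langle x_i, y_1, \ldots, y_{n-1}\rangle$ just before the already accumulated tail. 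One final application of the leftmost $B_1$ yields
\begin{align*}
\mathrm{LHS} = \big(y_1, \ldots, y_{n-1}, \langle x_2, \mathbf{y}\rangle, \langle x_3, \mathbf{y}\rangle, \ldots, \langle x_n, \mathbf{y}\rangle, \langle \langle x_1, \ldots, x_n\rangle, y_1, \ldots, y_{n-1}\rangle\big),
\end{align*}
where $\mathbf{y} = (y_1, \ldots, y_{n-1})$. The RHS trace is analogous: applying $B_n, B_{n-1}, \ldots, B_2, B_1$ in that order peels off each $x_i$ and replaces it with $\langle x_i, \mathbf{y}\rangle$ in the appropriate slot, after which a last application of $B_n$ gives
\begin{align*}
\mathrm{RHS} = \big(y_1, \ldots, y_{n-1}, \langle x_2, \mathbf{y}\rangle, \ldots, \langle x_n, \mathbf{y}\rangle, \langle \langle x_1, \mathbf{y}\rangle, \langle x_2, \mathbf{y}\rangle, \ldots, \langle x_n, \mathbf{y}\rangle\rangle\big).
\end{align*}

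The two tuples coincide in their first $2n-2$ coordinates and differ at most in the last one. Hence the $n$-Yang-Baxter identity $\mathrm{LHS} = \mathrm{RHS}$ holds for all $x_1, \ldots, x_n, y_1, \ldots, y_{n-1} \in X$ if and only if the distributive law (\ref{n-shelf}) holds, i.e.\ $\langle -, \ldots, -\rangle$ makes $X$ into an $n$-shelf. For the bijectivity of $s$: given any target $(z_1, \ldots, z_n)$, solving $s(x_1, \ldots, x_n) = (z_1, \ldots, z_n)$ forces $x_i = z_{i-1}$ for $i \geq 2$ and $\langle x_1, z_1, \ldots, z_{n-1}\rangle = z_n$, so $s$ is bijective if and only if each right translation $\langle -, y_1, \ldots, y_{n-1}\rangle : X \to X$ is bijective. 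Combining the two equivalences gives the claim.

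The main obstacle is simply the bookkeeping of indices in the trace for general $n$; once one fixes a careful inductive description of the intermediate tuples (which can be verified by the two boundary cases $k = 0$ and $k = n-1$ and the effect of one further $B_{n-k}$), the remainder is routine. No deeper algebraic identity is needed beyond (\ref{n-shelf}) itself.
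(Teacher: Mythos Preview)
Your proposal is correct and follows essentially the same approach as the paper: a direct trace of both sides of (\ref{set-nybe}) on a generic tuple, showing they agree in the first $2n-2$ coordinates and that equality in the last coordinate is precisely the distributive law (\ref{n-shelf}), together with the observation that bijectivity of $s$ amounts to bijectivity of the right translations. Your shorthand $B_k$ is a convenient notational addition, but the argument itself is the paper's.
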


\begin{proof}
    For any $x_1, \ldots, x_n, y_1, \ldots, y_{n-1} \in X$, we observe that 
     \begin{align*}
        & (s \times \mathrm{Id}^{\times (n-1)} ) \cdots (\mathrm{Id}^{\times (n-2)} \times s \times \mathrm{Id})(\mathrm{Id}^{\times (n-1)} \times s) (s \times \mathrm{Id}^{\times (n-1)}) (x_1, \ldots, x_n, y_1, \ldots, y_{n-1})\\
        &= (s \times \mathrm{Id}^{\times (n-1)} ) \cdots (\mathrm{Id}^{\times (n-2)} \times s \times \mathrm{Id})(\mathrm{Id}^{\times (n-1)} \times s) (x_2, \ldots, x_n, \langle x_1, \ldots, x_n \rangle, y_1, \ldots, y_{n-1} ) \\
        &= (s \times \mathrm{Id}^{\times (n-1)} ) \big(  x_2, y_1, \ldots, y_{n-1} , \langle x_3, y_1, \ldots, y_{n-1} \rangle, \ldots, \langle x_n, y_1, \ldots, y_{n-1} \rangle,  \langle \langle x_1, \ldots, x_n \rangle, y_1, \ldots, y_{n-1} \rangle  \big)\\
        &= \big( y_1, \ldots, y_{n-1}, \langle x_2, y_1, \ldots, y_{n-1} \rangle ,  \ldots, \langle x_n, y_1, \ldots, y_{n-1} \rangle, \langle \langle x_1, \ldots, x_n \rangle, y_1, \ldots, y_{n-1} \rangle    \big).
    \end{align*}
    On the other hand,
    \begin{align*}
       & (\mathrm{Id}^{\times (n-1)} \times s) (s \times \mathrm{Id}^{\times (n-1)}) \cdots (\mathrm{Id}^{\times (n-2)} \times s \times \mathrm{Id}) ( \mathrm{Id}^{\times (n-1)} \times s ) (x_1, \ldots, x_n, y_1, \ldots, y_{n-1}) \\
        &= (\mathrm{Id}^{\times (n-1)} \times s) (y_1, \ldots, y_{n-1}, \langle x_1, y_1, \ldots, y_{n-1} \rangle, \ldots, \langle x_n, y_1, \ldots, y_{n-1} \rangle ) \\
        &= \big( y_1, \ldots, y_{n-1}, \langle x_2, y_1, \ldots, y_{n-1} \rangle, \ldots, \langle x_n, y_1, \ldots, y_{n-1} \rangle, \langle \langle x_1, y_1, \ldots, y_{n-1}\rangle, \ldots,  \langle x_n, y_1, \ldots, y_{n-1}\rangle \rangle \big).
    \end{align*}
    This shows that $s$ satisfies the equation (\ref{set-nybe}) if and only if the $n$-ary operation $\langle -, \ldots, - \rangle$ satisfies the self-distributivity (\ref{n-shelf}). Finally, the map $s$ is bijective if and only if for all $y_1, \ldots, y_{n-1} \in X$, the translation map $\langle - , y_1, \ldots, y_{n-1} \rangle : X \rightarrow X$ is bijective. This concludes the proof.
\end{proof}

\begin{remark}\label{remark-yb-nyb-vv}
    It is important to note that the set-theoretical versions of Theorem \ref{them-soln-nsoln} and Theorem \ref{theorem-19} are also valid. More precisely, let $X$ be a nonempty set and $r : X \times X \rightarrow X \times X$ be a set-theoretical solution on $X$. Then 
    \begin{align*}
        s_n :=    (\mathrm{Id}^{\times (n-2)} \times r) \cdots (\mathrm{Id} \times r \times \mathrm{Id}^{\times (n-3)}) (r \times \mathrm{Id}^{\times (n-2)})   : X^{\times n} \rightarrow X^{\times n}
    \end{align*}
    is a set-theoretical $n$-solution on $X$. On the other hand, if $s : X^{\times n} \rightarrow X^{\times n}$ is any set-theoretical $n$-solution then the map
    \begin{align*}
        \widetilde{s} =  (s \times \mathrm{Id}^{\times (n-2)}) \cdots  ( \mathrm{Id}^{\times (n-3)} \times s \times \mathrm{Id}) (\mathrm{Id}^{\times (n-2)} \times s)   : X^{\times (n-1)} \times  X^{\times (n-1)} \rightarrow X^{\times (n-1)} \times  X^{\times (n-1)}
    \end{align*}
    is a set-theoretical solution on the set $X^{\times (n-1)}$.
\end{remark}


Let $(X, \triangleleft)$ be a rack. Then it is known that the map $r: X \times X \rightarrow X \times X$ given by $r (x, y) = (y, x \triangleleft y)$ is a set-theoretical solution. Therefore, by Remark \ref{remark-yb-nyb-vv}, one may construct a set-theoretical $n$-solution $s_n: X^{\times n} \rightarrow X^{\times n}$, explicitly given by
\begin{align*}
    s_n (x_1, \ldots, x_n) =~& (\mathrm{Id}^{\times (n-2)} \times r) \cdots (\mathrm{Id} \times r \times \mathrm{Id}^{\times (n-3)}) (r \times \mathrm{Id}^{\times (n-2)}) (x_1, \ldots, x_n) \\
    =~& (\mathrm{Id}^{\times (n-2)} \times r) \cdots (\mathrm{Id} \times r \times \mathrm{Id}^{\times (n-3)}) (x_2, x_1 \triangleleft x_2, x_3, \ldots, x_n) \\
    =~& \big(  x_2, \ldots, x_n ,~\!  ( \cdots   ((x_1 \triangleleft x_2)    \triangleleft x_3)  \cdots ) \triangleleft x_n   \big) 
    = \big(  x_2, \ldots, x_n , \langle x_1 , \ldots, x_n \rangle  \big) \quad (\text{cf. Exam } \ref{exam-rack-to-nrack}).
\end{align*}
Hence this set-theoretical $n$-solution is same with the one obtained from the induced $n$-rack $(X, \langle -, \ldots , - \rangle)$, see Example \ref{exam-rack-to-nrack}. Thus, we obtain the following commutative diagram

\begin{align}
    \xymatrix{
     & \substack{\text{solution} \\ r : X \times X \rightarrow X \times X } \ar[rr]^{\text{Remark~} \ref{remark-yb-nyb-vv}} &  & n\text{-solution } \ar@{=}[dd]\\
    \substack{ \text{rack} \\ (X, \triangleleft)  }  \ar@/^0.8pc/[ru] \ar@/_0.8pc/[rd]_{\text{Example } \ref{exam-rack-to-nrack}} &  &  & \\
      & \substack{n\text{-rack}\\ (X, \langle -, \ldots , - \rangle) } \ar[rr]_{\text{Theorem~}\ref{diag-thm}}   & & n\text{-solution.}
    }
\end{align}

\medskip

\medskip

On the other hand, suppose we start with an $n$-rack $(X, \langle -, \ldots, - \rangle)$. Then by Theorem \ref{diag-thm}, one obtains a set-theoretical $n$-solution $s : X^{\times n} \rightarrow X^{\times n}$ on the set $X$. This yields a set-theoretical solution $\widetilde{s} : X^{\times (n-1)} \times X^{\times (n-1)} \rightarrow X^{\times (n-1)} \times X^{\times (n-1)}$ on the set $X^{\times (n-1)}$, given by (see Remark \ref{remark-yb-nyb-vv})
\begin{align*}
   & \widetilde{s} (x_1, \ldots, x_{n-1}, y_1, \ldots, y_{n-1}) \\
    &= (s \times \mathrm{Id}^{\times (n-2)}) \cdots (\mathrm{Id}^{\times (n-3)} \times s \times \mathrm{Id}) (\mathrm{Id}^{\times (n-2)} \times s) (x_1, \ldots, x_{n-1}, y_1, \ldots, y_{n-1}) \\
    &=  (s \times \mathrm{Id}^{\times (n-2)}) \cdots (\mathrm{Id}^{\times (n-3)} \times s \times \mathrm{Id})  (x_1, \ldots, x_{n-2}, y_1, \ldots, y_{n-1}, \langle x_{n-1} , y_1, \ldots, y_{n-1} \rangle ) \\
    &= \big(  y_1, \ldots, y_{n-1} , \langle x_1, y_1, \ldots, y_{n-1} \rangle, \ldots, \langle x_{n-1}, y_1, \ldots, y_{n-1} \rangle  \big) \\
    &= \big( (y_1, \ldots, y_{n-1}) ~ \! , ~ \! (x_1, \ldots, x_{n-1}) \triangleleft_{\langle - , \ldots, - \rangle} (y_1, \ldots, y_{n-1}) \big) \quad (\text{following Example }\ref{prop-nrack-rack}).
\end{align*}
It follows that this set-theoretical solution on $X^{\times (n-1)}$ is precisely the one obtained from the induced rack $(X^{\times (n-1)}, \triangleleft_{\langle - , \ldots, - \rangle})$ given in Proposition \ref{prop-nrack-rack}. As a result, the following diagram also commutes:
\begin{align*}
 \xymatrix{
     & \substack{n\text{-solution} \\ s : X^{\times n} \rightarrow X^{\times n} } \ar[rr]^{\text{Remark~} \ref{remark-yb-nyb-vv}} &  & \text{solution } \ar@{=}[dd]\\
     \substack{n\text{-rack~} \\ (X, \langle - , \ldots, - \rangle) } \ar@/^0.8pc/[ru] \ar@/_0.8pc/[rd]_{\text{Example } \ref{exam-rack-to-nrack}} &  &  & \\
      & \substack{\text{rack}\\ (X^{\times (n-1)}, \triangleleft_{\langle -, \ldots , - \rangle } ) } \ar[rr]   & & \text{solution.}
    }
\end{align*}

\begin{remark} It has been noted that the $n$-Yang-Baxter equation (\ref{nybe}) is closely related to (right) $n$-Leibniz algebras and (right) linear $n$-racks. In the same way, the set-theoretical $n$-Yang-Baxter equation (\ref{set-nybe}) is associated with (right) $n$-racks. Therefore, the equations (\ref{nybe}) and (\ref{set-nybe}) can be regarded as the right $n$-Yang-Baxter equation and the set-theoretical right $n$-Yang-Baxter equation, respectively. However, to study left $n$-Leibniz algebras and left (linear) $n$-racks, we need to consider the following variants of the equations (\ref{nybe}) and (\ref{set-nybe}). For a given vector space $V$, a linear map $S : V^{\otimes n} \rightarrow V^{\otimes n}$ is a left pre-$n$-Yang-Baxter operator if it satisfies
\begin{align}\label{nybe-left}
    (S \otimes \mathrm{Id}^{\otimes (n-1)}) & ( \mathrm{Id}^{\otimes (n-1)} \otimes S) \cdots (\mathrm{Id} \otimes S \otimes \mathrm{Id}^{\otimes (n-2)})  (S \otimes \mathrm{Id}^{\otimes (n-1)})  \\
     &= ( \mathrm{Id}^{\otimes (n-1)} \otimes S) \cdots  ( \mathrm{Id} \otimes S \otimes \mathrm{Id}^{\otimes (n-2)}) (S \otimes \mathrm{Id}^{\otimes (n-1)}) ( \mathrm{Id}^{\otimes (n-1)} \otimes S). \nonumber
\end{align}
As before, we skip the adjective `pre' when $S$ is additionally invertible. For example, if a linear map $S : V^{\otimes 3} \rightarrow V^{\otimes 3}$ is represented by the diagram ~~~ \!  \!
\begin{tikzpicture}[scale=0.45]
  \draw[thick, purple]
  (0,3) .. controls (0,1.7) and (1,2.3) .. (1,1)
  (1,3) .. controls (1,1.7) and (2,2.3) .. (2,1)
  (0,1) .. controls (0,2.2) and (2.1,1.8) .. (2,3);
    \end{tikzpicture}, then the equation (\ref{nybe-left}) for $n=3$ can be described by the equality

    \medskip

    \begin{center}
\begin{tikzpicture}[scale=0.45]
    \draw[thick, purple]
  (0,3) .. controls (0,1.7) and (1,2.3) .. (1,1)
  (1,3) .. controls (1,1.7) and (2,2.3) .. (2,1)
  (0,1) .. controls (0,2.2) and (2.1,1.8) .. (2,3)
  (1,1) .. controls (1,-.3) and (2,0.3) .. (2,-1)
  (2,1) .. controls (2,-.3) and (3,.3) .. (3,-1)
  (1,-1) .. controls (1,.2) and (3.1,-.2) .. (3,1)
  (2,-1) .. controls (2,-2.3) and (3,-1.7) .. (3,-3)
  (3,-1) .. controls (3,-2.3) and (4,-1.7) .. (4,-3)
  (2,-3) .. controls (2,-1.8) and (4.1,-2.2) .. (4,-1)
  (0,-3) .. controls (0,-4.3) and (1,-3.7) .. (1,-5)
  (1,-3) .. controls (1,-4.3) and (2,-3.7) .. (2,-5)
  (0,-5) .. controls (0,-3.8) and (2.1,-4.2) .. (2,-3)
  (9,3) .. controls (9,1.7) and (10,2.3) .. (10,1)
  (10,3) .. controls (10,1.7) and (11,2.3) .. (11,1)
  (9,1) .. controls (9,2.2) and (11.1,1.8) .. (11,3)
  (7,1) .. controls (7,-0.3) and (8,.3) .. (8,-1)
  (8,1) .. controls (8,-0.3) and (9,.3) .. (9,-1)
  (7,-1) .. controls (7,.2) and (9.1,-0.2) .. (9,1)
  (8,-1) .. controls (8,-2.3) and (9,-1.7) .. (9,-3)
  (9,-1) .. controls (9,-2.3) and (10,-1.7) .. (10,-3)
  (8,-3) .. controls (8,-1.8) and (10.1,-2.2) .. (10,-1)
  (9,-3) .. controls (9,-4.3) and (10,-3.7) .. (10,-5)
  (10,-3) .. controls (10,-4.3) and (11,-3.7) .. (11,-5)
  (9,-5) .. controls (9,-3.8) and (11.1,-4.2) .. (11,-3);
  \draw[thick]
  (3,3) -- (3,1)
  (4,3) -- (4,1)
  (4,1) -- (4,-1)
  (0,1) -- (0,-1)
  (0,-1) -- (0,-3)
  (1,-1) -- (1,-3)
  (3,-3) -- (3,-5)
  (4,-3) -- (4,-5)
  (7,3) -- (7,1)
  (8,3) -- (8,1)
  (10,1) -- (10,-1)
  (11,1) -- (11,-1)
  (7,-1) -- (7,-3)
   (11,-1) -- (11,-3)
   (7,-3) -- (7,-5)
   (8,-3) -- (8,-5)
   (5.2, -1) -- (5.8,-1)
   (5.2, -0.8) -- (5.8,-0.8);
  \draw[thin, dashed]
  (-0.7,1) -- (4.3,1)
  (-0.7,-1) -- (4.3,-1)
   (-0.7,-3) -- (4.3,-3)
   
   (6.7,1) -- (11.3,1)
   (6.7,-1) -- (11.3,-1)
   (6.7,-3) -- (11.3,-3);
\end{tikzpicture}
\end{center}
It is not hard to see that a linear map $S: V^{\otimes n} \rightarrow V^{\otimes n}$ is a right (pre-)$n$-Yang-Baxter operator if and only if $\tau \circ S \circ \tau$ is a left (pre-)$n$-Yang-Baxter operator, where $\tau: V^{\otimes n} \rightarrow V^{\otimes n}$ is the reverse map $\tau (v_1 \otimes \cdots \otimes v_n) = v_n \otimes \cdots \otimes v_1$. It turns out that for a central left $n$-Leibniz algebra $(\mathcal{L}, [-, \ldots, - ], {\bf 1})$, the map $S_l : \mathcal{L}^{\otimes n} \rightarrow \mathcal{L}^{\otimes n}$ given by
\begin{align*}
    S_l (x_1 \otimes \cdots \otimes x_n) = x_n \otimes x_1 \otimes \cdots \otimes x_{n-1}  ~+~ [x_1, \ldots, x_n] \otimes \underbrace{{\bf 1} \otimes \cdots \otimes {\bf 1}}_{n-1 \text{~times}}
\end{align*}
is a left $n$-Yang-Baxter operator on the vector space $\mathcal{L}$.

\medskip

Similarly, in the set-theoretical context, a map $s: X^{\times n} \rightarrow X^{\times n}$ is a set-theoretical solution of the left $n$-Yang-Baxter equation if $s$ is bijective and satisfies
\begin{align*}
     (s \times \mathrm{Id}^{\times (n-1)}) & ( \mathrm{Id}^{\times (n-1)} \times s) \cdots (\mathrm{Id} \times s \times \mathrm{Id}^{\times (n-2)})  (s \times \mathrm{Id}^{\times (n-1)})  \\
     &= ( \mathrm{Id}^{\times (n-1)} \times s)  \cdots  (\mathrm{Id} \times s \times \mathrm{Id}^{\times (n-2)}) (s \times \mathrm{Id}^{\times (n-1)}) ( \mathrm{Id}^{\times (n-1)} \times s).
\end{align*}
For a left $n$-rack $(X, \langle - , \ldots , - \rangle)$, the map $s_l : X^{\times n} \rightarrow X^{\times n}$ defined by
\begin{align*}
    s_l (x_1, \ldots, x_n) = (\langle x_1, \ldots, x_n \rangle, x_1, \ldots, x_{n-1})
\end{align*}
is a set-theoretical solution of the left $n$-Yang-Baxter equation.
\end{remark}

  

\noindent {\bf Further discussions.} 

In this paper, we developed a higher analogue of the Yang-Baxter operator in connection with $n$-ary structures (such as $n$-Leibniz algebras, linear $n$-racks and $n$-racks). It is well-known that Yang-Baxter operators are equivalent to quantum Yang-Baxter operators (also called $R$-matrices) that are useful in the study of quantum groups, Hopf algebras, quantum field theory, etc. Thus, it remains to find the quantum analogue of the $n$-Yang-Baxter equation and its importance in mathematics and mathematical physics. We will come back with this question soon. On the other hand, the notion of the tetrahedron equation was introduced by Zamolodchikov \cite{zamo1,zamo2} as a three-dimensional version of the Yang-Baxter equation. In his equation, a three-dimensional tetrahedron plays a crucial role, and this equation has found important applications in scattering theory. It could be interesting to find at least some connections between the tetrahedron equation and the $3$-Yang-Baxter equation introduced in the present paper.

\medskip

It has been mentioned in the introduction that racks, cycle sets and (skew) braces are algebraic structures closely related to set-theoretical solutions. For example, nondegenerate cycle sets are bijectively correspond to nondegenerate set-theoretical solutions. On the other hand, any brace yields a nondegenerate involutive set-theoretical solution and a skew brace (noncommutative analogue of a brace) provides a non-degenerate (not necessarily involutive) set-theoretical solution. Let $s : X^{\times 3} \rightarrow X^{\times 3}$ be a set-theoretical $3$-solution that we write as
\begin{align*}
    s (x, y, z)= \big(   \sigma_{x, z} (y) ~ \! , ~ \! \tau_{x, y} (z)~\! , ~ \! \eta_{y, z} (x)  \big), \text{ for any } x, y, z \in X.
\end{align*}
We say that $s$ is left (resp. right, middle) nondegenerate if the map $\tau_{x, y}: X \rightarrow X$ (resp. $\eta_{x, y}$, $\sigma_{x, y}$) is bijective, for all $x, y \in X$. It is said to be nondegenerate if it is left, right and middle nondegenerate. The map $s$ is said to be $3$-involutive if $s^3 = \mathrm{Id}_{X^{\times 3}}$. It follows that the set-theoretical $3$-solution obtained from a $3$-rack $(X, \langle -, -, - \rangle)$ is nondegenerate, which need not be $3$-involutive in general. However, the flip $3$-solution $s : X^{\otimes 3} \rightarrow X^{\times 3}$ given by $s (x, y, z) = (y, z, x)$ is both nondegenerate and $3$-involutive. In a subsequent work, we plan to study nondegenerate and $3$-involutive set-theoretical $3$-solutions, and related algebraic structures. In this direction, we will explore the possible notion of $3$-cycle sets and their relations with cycle sets.

\medskip
    
\noindent  {\bf Acknowledgements.} The authors would like to thank the Department of Mathematics, IIT Kharagpur, for providing the beautiful academic atmosphere where the research has been conducted. S. Majhi thanks UGC for the PhD research fellowship.

\medskip

\noindent{\bf Funding.} There is no funding.

\medskip

\noindent {\bf Data Availability Statement.} Data sharing does not apply to this article as no new data were created or analyzed in this study.

\end{document}